\documentclass[
  aps,
  pra,                    %
  reprint,                %
  amsmath,
  amssymb,
  longbibliography,
  citeautoscript,
  superscriptaddress,
  nofootinbib,
  floatfix,
  eqsecnum
]{revtex4-2}

\usepackage{graphicx}
\usepackage{epstopdf}
\usepackage{empheq}

\usepackage{mathtools}
\usepackage{amsthm}

\usepackage{stmaryrd}
\usepackage{mathrsfs}
\usepackage{latexsym}
\usepackage{verbatim}

\usepackage{enumitem}
\usepackage{microtype}
\usepackage{physics}
\usepackage{booktabs}

\usepackage[normalem]{ulem}
\usepackage{xcolor}

\usepackage{xurl}

\usepackage{tikz}
\usetikzlibrary{arrows.meta,calc,angles,quotes}

\definecolor{hardred}{RGB}{180,0,0}

\definecolor{ctp-green}{HTML}{40A02B}
\definecolor{ctp-mauve}{HTML}{8839EF}
\definecolor{ctp-blue}{HTML}{1E66F5}
\definecolor{ctp-teal}{HTML}{179299}
\definecolor{ctp-peach}{HTML}{FE640B}

\usepackage[
  colorlinks=true,
  breaklinks=true,
  citecolor=ctp-mauve,
  linkcolor=ctp-blue,
  urlcolor=ctp-blue,
  pdfborder={0 0 0},
  pdfborderstyle={/S/U/W 0},
]{hyperref}

\providecommand{\doi}[1]{}
\renewcommand{\doi}[1]{%
  \href{https://doi.org/#1}{\textcolor{ctp-blue}{doi:\nolinkurl{#1}}}%
}

\usepackage{orcidlink}

\DeclareMathAlphabet{\mathcal}{OMS}{cmsy}{m}{n}
\SetMathAlphabet{\mathcal}{bold}{OMS}{cmsy}{b}{n}

\DeclareMathOperator{\Arg}{Arg}
\DeclareMathOperator*{\argmax}{arg\,max}

\newcommand{\Var}{\mathrm{Var}}

\newtheorem{theorem}{Theorem}[section]
\newtheorem{lemma}[theorem]{Lemma}
\newtheorem{corollary}[theorem]{Corollary}
\newtheorem{proposition}[theorem]{Proposition}

\theoremstyle{definition}

\theoremstyle{remark}

\newtheorem{assumptions}{Assumptions}

\begin{document}

\title{Full-Period Optical Phase Estimation with Heisenberg Scaling Using
  Displaced Squeezed States and Gaussian Measurements}

\author{Marco A. Rodríguez-García\orcidlink{0000-0003-1504-0526} }
%\affiliation{Center for Quantum Information and Control, Department of Physics
%  and Astronomy, University of New Mexico, Albuquerque, New Mexico 87131, USA}
\affiliation{Institute for Quantum Science and Technology, University of
  Calgary, Calgary, AB T2N 1N4, Canada}
\affiliation{Center for Quantum Information and Control, Department of Physics
  and Astronomy, University of New Mexico, Albuquerque, New Mexico 87131, USA}

\author{Luis Medina-Dozal\orcidlink{0000-0002-4695-5190} }
\affiliation{
Instituto de Ciencias F\'isicas, Universidad Nacional Aut\'onoma de M\'exico,
Avenida Universidad s/n, Col. Chamilpa, Cuernavaca, Morelos 62210, Mexico
}

\author{Francisco E. Becerra\orcidlink{0000-0002-2928-310X} }
\affiliation{Center for Quantum Information and Control, Department of Physics
  and Astronomy, University of New Mexico, Albuquerque, New Mexico 87131, USA}

\date{\today}

\begin{abstract}
  We propose two-stage optimized strategies for full-period optical phase
  estimation with single-mode Gaussian states and Gaussian measurements under a
  fixed energy constraint. In the first Stage (Stage~I), displaced squeezed
  probes and heterodyne measurements provide coarse localization of the phase to
  a window in the circle. In the second Stage (Stage~II), squeezed-vacuum probes
  with adaptive homodyne measurements perform efficient phase estimation inside
  the selected window. We derive a generalized Cramér-Rao bound for this family
  of two-stage Gaussian strategies, which contains the contribution from local
  parameter estimation in Stage~II plus an overshoot penalty from coarse
  localization errors in Stage~I. For $E\le 25$ photons and squeezing limited to
  $12\,\mathrm{dB}$, protocols using displaced squeezed states in Stage~I reduce
  the optimized two-stage bound relative to protocols using coherent states in
  Stage~I, and remain within a factor of $3$--$30$ of the idealized local
  squeezed-vacuum quantum Cramér-Rao bound.
\end{abstract}

\maketitle

\section[intro]{Introduction}

Optical quantum metrology leverages quantum states of light to estimate physical
parameters with high precision \cite{barbieri2022optical, pirandola2018advances,
  Polino2020}, with applications including interferometry \cite{Polino2020},
atomic clocks \cite{Ludlow2015}, time transfer \cite{Giovannetti2001,
  Komar2014}, force sensing \cite{lawrie2019quantum, Aspelmeyer2014}, and
gravitational-wave detection \cite{Tse2019, Caves1981}. In many photonic sensing
platforms, an optical probe field interacts with the system under study and the
sensing information is mapped to the phase of the field, making phase estimation
a centerpiece for optical sensing.

Optical phase estimation with coherent states yields errors with variances
scaling as $1/\bar n$ in the mean photon number $\bar n$, defining the Standard
Quantum Limit (SQL) \cite{Paris2009, Giovannetti2011}. In contrast, optical
probes with quantum correlations, such as squeezed Gaussian states, can surpass
the SQL and achieve Heisenberg-type scaling $\sim 1/\bar n^2$ within an
identifiable phase interval. \cite{Caves1981, Monras2006, Giovannetti2011,
  RodrguezGarca2024}. Specifically, squeezed vacuum states exhibit an inherent
$\pi$-rotation symmetry \cite{RodrguezGarca2024}, in principle allowing phase
estimation with Heisenberg scaling within phase intervals of length $\pi$, when
optimal measurements over this range are available. However, in many sensing
applications, including carrier-phase navigation \cite{Teunissen1995IAR},
magnetometry \cite{arai2018geometric, degen2017quantum}, and optical atomic
clocks \cite{kaubruegger2021quantum, li2022improved}, no prior phase information
is available and the phase must be estimated globally over the full circle $\phi
\in [0, 2\pi)$ \cite{higgins2007entanglement, xiang2011entanglement,
  kolodynski2010phase}.

Recent advances in quantum measurements for phase estimation with squeezed
vacuum states have shown that adaptive homodyne strategies asymptotically attain
the quantum Cramér-Rao bound within an identifiable interval of length $\pi/2$,
with efficient performance in the finite-sample regime \cite{RodrguezGarca2024}.
This restriction follows from the $\pi$-inversion symmetry in the distribution
of homodyne outcomes, which, combined with the inherent $\pi$-periodicity of
squeezed vacuum, reduces the identifiability range to $\pi/2$ for locally
unbiased estimators. By incorporating a first stage of heterodyne measurements
for coarse phase localization, this adaptive strategy can be extended to the
full range $[0, \pi)$, which is the maximum interval over which squeezed vacuum
unambiguously encodes the phase \cite{Monras2006, Pinel2013, Olivares2009},
while maintaining asymptotically optimal performance \cite{RodrguezGarca2024}.
Recent experiments confirm that adaptive Gaussian schemes on squeezed light can
reach sub-SQL precision over this $[0,\pi)$ range \cite{Minati2025}.

Full-period phase estimation, however, requires quantum states beyond squeezed
vacuum that are identifiable over the full circle. A related approach optimizes
the Gaussian quantum probe (displacement, squeezing strength, and angle) at
every adaptive step, enabling full-period phase estimation beyond the SQL
\cite{RavellRodrguez2025}. However, the experimental and processing complexity
required for its implementation becomes daunting and impractical for current
experiments.

Here, we propose a practical two-Stage Gaussian protocol for full-period phase
estimation with two fixed, predetermined families of probe states and where
adaptivity is constrained to Gaussian measurements. Stage~I uses displaced
squeezed probes and heterodyne measurement to realize coarse localization on
$[0,2\pi)$ to an interval of length $\pi/2$. The nonzero displacement in Stage~I
makes restores full-period identifiablility, while squeezing increases the local
information extracted by heterodyne. Stage~II uses squeezed vacuum and adaptive
homodyne to perform locally efficient estimation inside the selected interval.

We derive a generalized Cram\'er-Rao bound for this class of two-Stage
strategies, in which the error bound decomposes into two contributions: the
local Stage~II error on the event that Stage~I interval contains the true phase,
and an overshoot penalty when Stage~I selects an incorrect interval. The
overshoot term has no counterpart in purely local Cram\'er-Rao theory. We
observe that optimizing the energy allocation between Stage~I and II under a
total energy constraint and a maximum allowable squeezing allows for suppressing
overshoot errors in Stage~I and improving local estimation in Stage~II. We find
that a modest amount of squeezing for coarse localization in Stage~I reduces the
final bound relative to coherent states. Our numerical studies show that these
two-Stage protocols implement full-period phase estimation within a factor of
$3$--$30$ of the idealized local QCRB with squeezed vacuum for moderate total
energies ($E\le 25$ photons) and squeezing up to $12$\,dB, and approach this
bound as the total available energy increases.

The article is organized as follows. Section~\ref{sec:preliminaries} provides
background on single-mode Gaussian states and measurements.
Section~\ref{sec:two_stage_strategy} introduces the two-Stage protocol and
derives the generalized Cramér-Rao bound. Sections~\ref{sec:coherent_init}
and~\ref{sec:dsvs_init} analyze the localization in Stage~I with coherent and
displaced squeezed probes, respectively.
Section~\ref{sec:two_stage_fixed_budget} describes the proposed optimized
two-Stage under fixed total energy. Section~\ref{sec:summary_outlook} presents a
discussion of possible extensions and describes our conclusions.

\section{Optical Phase Estimation with Gaussian States and Measurements}
\label{sec:preliminaries}

This section describes the preliminaries for the problem of full-period optical
phase estimation with single-mode Gaussian probes and Gaussian measurements. In
this global estimation problem the optical phase can take any value
$\theta\in[0,2\pi)$, with the endpoints understood modulo $2\pi$. Thus,
$[0,2\pi)$ is used as a set of representatives for the circle $\mathbb S^1$.
Here, we introduce the basic concepts for two-Stage protocols with conventional
Gaussian measurements, homodyne and heterodyne, described in
Sec.~\ref{sec:two_stage_strategy}. We discuss the phase-shift model, the finite
energy constraint, Gaussian quantum states and measurements, circular loss
function, and the corresponding Quantum and Classical Fisher information.

\subsubsection{Phase-shift model, resource constraint, and circular loss}

Consider a probe quantum state $\rho$ with finite energy on the single-mode
bosonic Fock space. We encode the unknown optical phase $\theta \in [0, 2 \pi)$
through the unitary representation $U_\theta=e^{-i\theta\hat n}$ of the circle
group $U(1)\simeq \mathbb S^1$, where $\hat n=a^\dagger a$ is the photon-number
operator. The phase-shifted family of quantum states under this transformation
becomes
\begin{equation}
  \rho_\theta \coloneqq U_\theta\,\rho\,U_\theta^\dagger,\qquad \theta\in[0,2\pi).
  \label{eq:phase_unitary_family_CH2}
\end{equation}
For infinite dimensional systems, such as Gaussian states, the corresponding
resource available for phase estimation corresponds to the energy, or
equivalently the mean photon number $\bar n(\rho) = \Tr(\rho\,\hat n)$. For
$N$ uses of independent probes, quantum model becomes
$\rho_\theta^{\otimes N}$ with total energy $N\,\bar n(\rho)$.

To quantify the statistical performance of an estimator $\hat\theta$, we measure
its risk through a loss function built from an appropriate notion of distance
between the estimates produced by $\hat\theta$ and the true parameter $\theta$.
Because $\theta$ is circular, the principal representative of an angle becomes
\begin{equation}
  \operatorname{wrap}(\varphi)
  \coloneqq
  \Arg\!\left(e^{i\varphi}\right)
  \in(-\pi,\pi],
  \label{eq:wrap_definition}
\end{equation}
with a corresponding signed wrapped estimation
error $
%\begin{equation}
  \varepsilon\left(\hat\theta,\theta \right)
  \coloneqq
  \operatorname{wrap}(\hat\theta-\theta)
  =
  \Arg\!\left(e^{i(\hat\theta-\theta)}\right)
%  \label{eq:signed_wrapped_error}
%\end{equation}
$, yielding a geodesic distance
\begin{equation}
  d\left(\hat\theta,\theta\right)
  \coloneqq
  \left|\operatorname{wrap}(\hat\theta-\theta)\right|
  =
  \min_{k\in\mathbb Z}
  \left|\hat\theta-\theta+2\pi k\right|.
  \label{eq:geodesic_distance}
\end{equation}
Thus, $d\left(\hat\theta,\theta \right)\in[0,\pi]$, whereas
$\varepsilon\left(\hat\theta,\theta \right)\in(-\pi,\pi]$. Finally, for an
estimator $\hat\theta$ constructed from the measurement outcomes of
$\rho_\theta$, we define the circular mean-square error by
\begin{equation}
  \mathrm{MSE}\left(\hat\theta;\theta\right) =
  \mathbb E_{\theta}\!\left[
    d\left(\hat\theta,\theta\right)^2
  \right].
  \label{eq:circular_mse}
\end{equation}
Here $\mathbb E_\theta$ denotes expectation with respect to the outcome
distribution induced by the chosen measurement under the parameter value
$\theta$.

\subsubsection{Measurements, induced statistical models, and Fisher information}

The amount of information about $\theta$ that can be extracted from the family
of parametrized quantum states $\{\rho_\theta\}$ is characterized by the
classical Fisher information, which depends on the classical statistical model
for the measurement outcomes. We introduce some basic concepts in statistics to
quantify this information.

Denote $\mathcal H$ the Hilbert space of the quantum system $\rho_\theta$. Let
$\mathcal B(\mathcal H)_+$ be the cone of positive bounded operators on
$\mathcal H$, and $(\mathcal X,\mathcal A)$ be the measurable space of possible
measurement outcomes. Here $\mathcal X$ is the set of outcomes, and $\mathcal A$
is the $\sigma$-algebra of events to which probabilities are assigned. A POVM is
a map $M:\mathcal A\to\mathcal B(\mathcal H)_+$ such that $M(\mathcal X)=\mathbb
I_{\mathcal H}$, where $M$ is countably additive in the weak operator topology.

For each value of the parameter $\theta\in\Theta\subseteq\mathbb R$, the
measurement induces a probability measure $\mathbb P_\theta^M$ on $(\mathcal
X,\mathcal A)$ through the Born rule
\begin{equation}
  \mathbb P_\theta^M(A)
  =
  \Tr\!\left(\rho_\theta M(A)\right),
  \qquad A\in\mathcal A.
  \label{eq:born_rule_povmCh1}
\end{equation}
Here $\mathbb P_\theta^M(A)$ is the probability that the measurement outcome
belongs to the event $A$ when the true parameter value is $\theta$. Assume that
the family $\{\mathbb P_\theta^M\}_{\theta\in\Theta}$ is dominated by a
$\sigma$-finite measure $\mu$ on $(\mathcal X,\mathcal A)$, and denote by
$p_\theta$ the Radon-Nikodym derivative of $\mathbb P_\theta^M$ with respect to
$\mu$, $p_\theta(\omega) = \frac{d\mathbb P_\theta^M}{d\mu}(\omega).$
% Equivalently,
% \begin{equation}
%   \mathbb P_\theta^M(A)
%   =
%   \int_A p_\theta(\omega)\,\mu(d\omega),
%   \qquad A\in\mathcal A.
% \end{equation}
If $p_\theta(\omega)$ is differentiable at $\theta$ and satisfies the standard
regularity conditions \cite[Sec.~2.5, pp.~115--116]{Lehman1998}, the Fisher
information of the classical statistical model induced by the measurement $M$ is
\begin{equation}
  \mathcal I_M(\theta)
  =
  \mathbb E_\theta^M\!\left[
    \left(\partial_\theta \log p_\theta(\omega)\right)^2
  \right].
  \label{eq:classical_fi}
\end{equation}
The Fisher information $\mathcal I_M(\theta)$ quantifies the information about
$\theta$ contained in the outcome distribution generated by applying the POVM
$M$ to $\rho_\theta$. For $N$ independent repetitions of the same measurement on
identical copies of $\rho_\theta$, the Fisher information of the complete
quantum state $\rho_\theta^{\otimes N}$ is $N\,\mathcal I_M(\theta)$.

The classical Cramér-Rao bound (CRB) provides a lower bound on the variance of
any estimator that is locally unbiased at the true value of the parameter
$\theta$. This means that the estimator is unbiased at the true value and has
vanishing bias to first order in a neighborhood of that value. Given that the
phase is circular and this bound is a local and asymptotic statement, we
consider a local phase coordinate near a fixed value $\theta_0\in[0,2\pi)$. Let
$\tilde\theta_{\theta_0}$ denote the estimator expressed in such a local
coordinate, so that $\tilde\theta_{\theta_0}-\theta_0$ is the signed local
estimation error. We consider that the model obeys the usual regularity
conditions \cite[Sec.~2.5, pp.~115--116]{Lehman1998}, and suppose that
$\tilde\theta_{\theta_0}$ is locally unbiased at $\theta_0$ so that
\begin{equation}
  \mathbb E_{\theta_0}^M\!\left[
    \tilde\theta_{\theta_0}
  \right]
  =
  \theta_0,
  \qquad
  \left.
  \frac{d}{d\theta}
  \mathbb E_{\theta}^M\!\left[
    \tilde\theta_{\theta_0}
  \right]
  \right|_{\theta=\theta_0}
  =
  1.
\end{equation}
Under these assumptions the CRB provides the lower bound for the variance of
this estimator
\begin{equation}
  \mathrm{Var}_{\theta_0}^M\left(\tilde\theta_{\theta_0}\right)
  \ge
  \frac{1}{\mathcal I_M(\theta_0)}.
  \label{eq:crb_classical_local}
\end{equation}
An estimator that attains this bound at $\theta_0$ is said to be efficient at
$\theta_0$.

\subsubsection{Quantum Fisher information and the quantum Cramér-Rao bound}

More generally, we can define the quantum Cramér-Rao bound (QCRB), which is the
CRB optimized over all the possible measurements allowed by quanutm mechanics.
The QCRB is defined as the inverse of the quantum Fisher information (QFI)
$\mathcal F_Q^{\rho}(\theta) =\Tr(\rho_\theta L_\theta^2)$, where $L_\theta$ is
the symmetric Logarithmic derivative, defined by $\partial_\theta\rho_\theta =
\frac12\left(L_\theta\rho_\theta+\rho_\theta L_\theta\right),$ whenever such an
operator exists on the support of $\rho_\theta$. Moreover, for any POVM $M$, we
have $\mathcal{I}_M(\theta)\ \le\ \mathcal F_Q^{\rho}(\theta),$
\cite{Braunstein1994}. Therefore, any locally unbiased estimator $\hat \theta$
expressed in a local coordinate around $\theta_0$, satisfies the QCRB
\begin{equation}
  \mathrm{Var}_{\theta_0}^{M}\left(\hat\theta \right)\ \ge\ \frac{1}{\mathcal F_Q^{\rho}(\theta_0)}.
  \label{eq:qcrb_local}
\end{equation}

For a product of $N$ independent probes $\rho_\theta^{\otimes N}$, the QFI is
additive so that $\mathcal{F}_Q^{N}(\theta)=N\,\mathcal{F}_Q^{\rho}(\theta)=$.
Furthermore, when $\rho_\theta$ in Eq.~\eqref{eq:phase_unitary_family_CH2} is
pure, $\rho_\theta=\ket{\psi_\theta}\!\bra{\psi_\theta}$ with
$\ket{\psi_\theta}=U_\theta\ket{\psi}$, the QFI is independent of $\theta$
\cite{Braunstein1994}:
\begin{equation}
  \mathcal{F}_Q^{\rho}=4\,\mathrm{Var}_{\psi}\left(\hat n \right).
  \label{eq:qfi_pure_generator}
\end{equation}

\subsubsection{Gaussian states and measurements}
\label{subsub:homodyne}

In this work, we consider phase estimation strategies with Gaussian states and
measurements. Specifically, we consider probes in single-mode Gaussian states,
their transformation under phase shifts, and heterodyne and homodyne
measurements that define the statistical models described below.

Single-mode Gaussian states $\rho$ are characterized by a displacement vector $d
=\langle R\rangle_\rho\in\mathbb{R}^2$ and a covariance matrix
$V\in\mathbb{R}^{2\times 2}$:
\begin{equation}
	V_{jk} = \tfrac12\langle \{R_j-d_j,R_k-d_k\}\rangle_\rho,\qquad j,k\in\{1,2\},
	\label{eq:cov_def}
\end{equation}
subject to the uncertainty constraint $V+\tfrac{i}{2}\Omega\ge 0$ with
$\Omega=\left(\begin{smallmatrix}0&1\\-1&0\end{smallmatrix}\right)$
\cite{Weedbrook2012}. Here the vector $R =(q,p)^\top$, with $q =
\frac{a+a^\dagger}{\sqrt{2}}$ and $ p= \frac{a-a^\dagger}{i\sqrt{2}}$ the field
quadratures. A phase shift transformation $U_\theta$ of a Gaussian state
generates a rotation of the quadrature vector $R$ in the Heisenberg picture
\begin{equation}
  U_\theta^\dagger R\,U_\theta = \mathsf R(-\theta)\,R, \quad \mathrm{with} \quad   \mathsf R(\theta)=
  \begin{pmatrix}
  	\cos\theta & -\sin\theta\\
  	\sin\theta & \cos\theta
  \end{pmatrix}.
  \label{eq:heisenberg_rotation_quadratures}
\end{equation}
Then, an initial Gaussian state $\rho$ with first and second moments $(d,V)$
transforms into another Gaussian state $\rho_\theta$ under $U_\theta$ with
rotated moments $d_\theta = \mathsf R(-\theta)\,d$ and $V_\theta = \mathsf
R(-\theta)\,V\,\mathsf R(-\theta)^\top$, respectively.

Gaussian measurements, specifically homodyne and heterodyne, applied to Gaussian
states induce Gaussian probability distributions for the measurement outcomes.
For the parametrized state $\rho_\theta$, the corresponding measurement outcome
distribution is Gaussian with mean $m_\theta$ and covariance matrix
$\Sigma_\theta$, both depending differentiably on $\theta$. This property allows
us to calculate the Fisher information for $\theta$ associated with Gaussian
measurements:
\begin{equation}
  \mathcal{I}(\theta)
  = (\partial_\theta m_\theta)^\top \Sigma_\theta^{-1}(\partial_\theta m_\theta)
    + \tfrac12\Tr\!\left[ \left(  \Sigma_\theta^{-1}(\partial_\theta \Sigma_\theta) \right)^2
\right].
  \label{eq:fi_multivariate_normal_CH2}
\end{equation}
With this general model, we can obtain the Fisher information of $\theta$ for
any Gaussian measurement, including homodyne and heterodyne.

Specifically, the homodyne measurement, with a fixed local-oscillator phase
$\phi\in[0,2\pi)$, is a projection-valued measure (PVM) corresponding to the
quadrature operator onto the quadrature observable
\begin{equation}
  X_\phi \coloneqq q\cos\phi + p\sin\phi
  = \frac{1}{\sqrt{2}}\left(ae^{-i\phi}+a^\dagger e^{i\phi}\right).
  \label{eq:rotated_quadrature}
\end{equation}
We denote this PVM by $\Pi_\phi$, so that
\begin{equation}
  \Pi_\phi(B)=\mathbf 1_B(X_\phi),
  \qquad B\in\mathcal B(\mathbb R),
\end{equation}
where $\mathcal B(\mathbb R)$ denotes the Borel $\sigma$-algebra on $\mathbb R$.
For a Gaussian state $\rho_\theta$, this probability measure is Gaussian, and
the homodyne measurement outcomes satisfy
\begin{equation}
  X_\phi
  \sim
  \mathcal N\!\left(
    \mu_\theta^\phi,
    (\sigma_\theta^\phi)^2
  \right),
\end{equation}
with mean $\mu_\theta^\phi=e_\phi^\top d_\theta,$ and variance
$(\sigma_\theta^\phi)^2=e_\phi^\top V_\theta e_\phi,$ where
$e_\phi=(\cos\phi,\sin\phi)^\top.$ Using Eq.
(\ref{eq:fi_multivariate_normal_CH2}), we obtain the Fisher information about
$\theta$ for the homodyne measurement at phase $\phi$:
\begin{equation}
  \mathcal{I}_{\mathrm{hom}}(\theta;\phi)
  = \frac{\left(\partial_\theta \mu_\theta^\phi\right)^2}{\left(\sigma_\theta^\phi\right)^2}
  + \frac{\left(\partial_\theta \left(\sigma_\theta^\phi \right)^2\right)^2}{2\,\left(\sigma_\theta^\phi\right)^4}.
  \label{eq:fi_homodyne}
\end{equation}

In an similar way, we consider the heterodyne measurement, which is described
by the coherent-state POVM
\begin{equation}
  \Lambda(d^2\alpha)=\pi^{-1}\ket{\alpha}\!\bra{\alpha}\,d^2\alpha, \quad \alpha \in \mathbb{C}.
  \label{eq:coherent_POVM}
\end{equation}
The corresponding outcome probability density is given by the Husimi
$Q$-function \cite{Husimi1940-vy} $ Q_{\rho_\theta}(\alpha) \coloneqq
\frac{1}{\pi}\,\langle\alpha|\rho_\theta|\alpha\rangle$. Operationally, the
heterodyne measurement can be interpreted as the canonical joint (unsharp)
measurement of the conjugate quadratures $q$ and $p$
\cite{Leonhardt1995,Weedbrook2012}, yielding simultaneous noisy outcomes of both
quadratures. When convenient, we identify the complex heterodyne outcome
$\alpha\in\mathbb C$ with the real vector
$Y=(\sqrt2\,\mathrm{Re}\,\alpha,\sqrt2\,\mathrm{Im}\,\alpha)^\top$. Using this
real-quadrature notation
%, an input Gaussian state $\rho_{\theta}$ with moments $(d_\theta,V_\theta)$
%produces a Gaussian outcome with mean $d_\theta$ and covariance
%$V_\theta+\frac12\mathbb I_2$.
, and considering a heterodyne measurement of the single mode Gaussian state
$\rho_\theta$ defined by $(d_\theta, V_\theta)$, the distribution of measurement
outcomes is a 2-D Gaussian with mean $m_\theta=d_\theta$ and covariance
$\Sigma_\theta=V_\theta+\tfrac12\mathbb I_2$. The Fisher information about
$\theta$ for the heterodyne measurement is obtained from
Eq.~(\ref{eq:fi_multivariate_normal_CH2}) as:
\begin{equation}
\begin{aligned}
  \mathcal{I}_{\mathrm{het}}(\theta)
  &=
  (\partial_\theta d_\theta)^\top
  \!\left(V_\theta+\tfrac12\mathbb{I}_2\right)^{-1}
  (\partial_\theta d_\theta)  \\
  &\quad
  + \frac12
  \Tr\!\left[
    \left\{
      \left(V_\theta+\tfrac12\mathbb{I}_2\right)^{-1}
      \partial_\theta V_\theta
    \right\}^2
  \right].
\end{aligned}
\label{eq:fi_heterodyne}
\end{equation}

\subsubsection{Pure Gaussian probes}
\label{sec:gaussian_probes_scaling_symmetry}

Up to a global phase, any pure single-mode Gaussian quantum state can be written
as a displaced squeezed vacuum state, $\ket{\alpha,\zeta} =
D(\alpha)\,S(\zeta)\ket{0},$ with displacement amplitude $\alpha\in\mathbb{C}$
and squeezing parameter $\zeta=re^{i\psi}$, where $r\ge 0$ and
$\psi\in[0,2\pi)$. Here, $D(\alpha)=\exp(\alpha a^\dagger-\alpha^*a)$ is the
displacement operator, while
$S(\zeta)=\exp[\tfrac12(\zeta^*a^2-\zeta(a^\dagger)^2)]$ is the squeezing
operator \cite{Weedbrook2012}. Under the phase-shift transformation
$U_\theta=e^{-i\theta\hat n}$ these probe states become
\begin{equation}
  \ket{\alpha,\zeta;\theta} =U_\theta\ket{\alpha,\zeta},
  \qquad \theta\in[0,2\pi).
  \label{eq:phase_encoded_family_rewrite}
\end{equation}
Since $U_\theta$ commutes with $\hat n$, the expectation of the number operator
$ \bar n(\alpha,\zeta) \coloneqq \bra{\alpha,\zeta}\hat n\ket{\alpha,\zeta} =
|\alpha|^2+\sinh^2 r$ is constant along the encoded orbit
$\{\ket{\alpha,\zeta;\theta} \colon \theta\in[0,2\pi)\}$, and is additive across
independent uses of the probe state. Thus, we identify the mean photon number of
the input probe as the the metrological resource.

\subsubsection{Local sensitivity performance and squeezed-vacuum symmetry}
\label{subsec:sensitivity}

We use the QFI $F_Q^{\rho}$ Eq.~\eqref{eq:qfi_pure_generator} to define the
maximum local sensitivity for a particular quantum probe state. More precisely,
we identify the SQL with a linear scaling of $\mathcal{F}_Q^{\rho}=O(\bar n)$,
whereas Heisenberg-scaling corresponds to a quadratic scaling
$\mathcal{F}_Q^{\rho}=\Theta(\bar n^2)$ \cite{Giovannetti2011,Paris2009}. In
particular, the SQL corresponds to the precision achieved by states with no
quantum correlations such as coherent states $\ket{\alpha}=D(\alpha)\ket{0}$.
Specifically, the QFI for coherent states with mean photon number $\bar
n=|\alpha|^2$ and $\Var_\alpha \left( \hat{ n} \right)=|\alpha|^2$ becomes
\begin{equation}
  F_Q^{\mathrm{coh}} = 4|\alpha|^2 = 4\bar n.
  \label{eq:qfi_coh_rewrite}
\end{equation}

At the opposite extreme, the QFI for squeezed-vacuum
states (SVS) $\ket{\zeta} =S(\zeta)\ket{0}$, for which $\bar n=\sinh^2 r$
and $\Var \left( \hat n \right)=2\bar n(\bar n+1)$ becomes
\begin{equation}
  F_Q^{\mathrm{SVS}}
  = 8\,\bar n(\bar n+1).
  \label{eq:qfi_svs_rewrite}
\end{equation}
showing with quadratic scaling with $\bar n$ corresponding to the
Heisenberg-scaling \cite{Monras2006,Pinel2013}.
% As a result, for a single squeezed-vacuum probe, or for a fixed number of
% probes with growing energy per probe, the QFI shows quadratic scaling with
% $\bar n$ \cite{Monras2006,Pinel2013}.
Noteworthy, adaptive homodyne strategies can exploit this local optimality and
approach the QCRB in the asymptotic limit of many repetitions, provided the
measurements are kept near the locally optimal operating point
\cite{RodrguezGarca2024}.

Nevertheless, a fundamental limitation of squeezed vacuum probes is their
$\pi$-rotation symmetry. Specifically, squeezed vacuum states are invariant
under a phase shift of $\pi$, $
%\begin{equation}
	\rho_{\theta+\pi}=\rho_\theta,
	 \text{for all }\theta.
	\label{eq:svs_pi_symmetry_rewrite}
%\end{equation}
  $ Thus, no measurement performed on squeezed-vacuum probes alone can
  distinguish $\theta$ from $\theta+\pi$. Moreover, beyond the state
  $\pi$-ambiguity, homodyne measurements introduce a further restriction at the
  level of the likelihood. For a squeezed-vacuum probe and a fixed
  local-oscillator phase, the homodyne outcome is Gaussian with zero mean and a
  variance depending on the phase through a second harmonic. The resulting
  likelihood is therefore $\pi$-periodic and is not globally injective. On a
  suitable interval of length $\pi/2$, the relevant variance dependence becomes
  one-to-one, which gives the local identifiability range used in the two-stage
  construction \cite{RodrguezGarca2024}. As a result, a full-period strategy
  cannot rely only on squeezed-vacuum probes and homodyne detection. It must
  either introduce a symmetry-breaking resource, such as a nonzero displacement,
  or assume external prior information that localizes the phase to an
  identifiable interval.

\section{Two-stage full-period optical phase estimation}
\label{sec:two_stage_strategy}

\begin{figure*}[t]
\centering
\includegraphics[width=18cm]{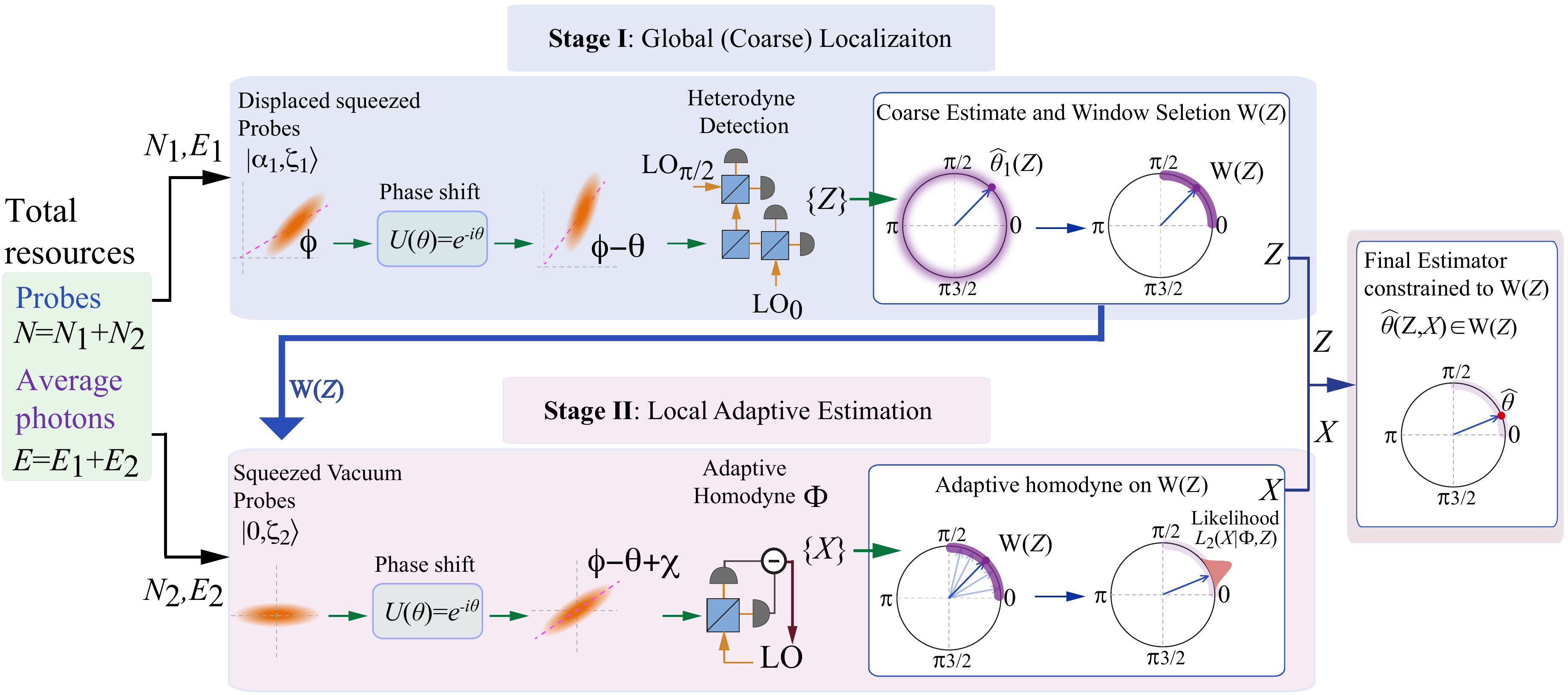} %\columnwidth
\caption{\label{fig:two_stage_protocol} Schematic representation of the proposed
  two-stage protocol for full-period optical phase estimation. A total budget of
  $N=N_1+N_2$ probes and $E=E_1+E_2$, measured in mean photon number units is
  divided between two stages. In Stage~I, $N_1$ displaced squeezed probes
  $\ket{\alpha_1,\zeta_1}$ are measured by heterodyne detection in order to
  produce a coarse phase estimate $\hat\theta_1$ and a selected identifiable
  interval $W(Z)$ of length $\pi/2$. The nonzero displacement removes the
  $\pi$-symmetry of squeezed vacuum and restores identifiability over the full
  circle. In Stage~II, $N_2$ squeezed-vacuum probes $\ket{0,\zeta_2}$ are
  measured by an adaptive homodyne strategy conditioned on $W(Z)$, yielding
  locally efficient estimation within the selected interval when that interval
  contains the true phase \cite{RodrguezGarca2024}. The final estimator
  $\hat\theta=\hat\theta(Z,X)$ is therefore constrained to lie in $W(Z)$. This
  decomposition separates the tasks of global localization and local
  high-precision estimation, and it is the basis for the generalized lower bound
  for two-Stage strategies derived in this work.}
\end{figure*}

Figure~\ref{fig:two_stage_protocol} shows the schematic of the proposed
Two-Stage strategy for full-period phase estimation based on Gaussian states and
measurement under a a total energy constraint $E=E_1+E_2>0$. The first Stage
(Stage~I) uses $N_1$ displaced squeezed probes $\ket{\alpha_1,\zeta_1}$ with
$|\alpha_1|>0$, and heterodyne detection to localize the unknown phase within an
identifiable interval of length $\pi/2$. Displaced squeezed states are
identifiable on the full circle, and heterodyne outcomes remain Gaussian with a
mean that rotates with the full phase period, maintaining this full-period phase
identifiability. Then, the second Stage (Stage~II) uses $N_2$ squeezed-vacuum
probes $\ket{0,\zeta_2}$ and implements an adaptive homodyne strategy that
enables asymptotically optimal phase estimation at the QCRB within the window of
length $\pi/2$ \cite{RodrguezGarca2024}.

Writing explicitly the squeezing parameter $\zeta_j=r_je^{i\psi_j}$ and given that $\bar
n(\alpha,\zeta)=|\alpha|^2+\sinh^2 r$, the total energy across all the probes is
\begin{equation}
  N_1\,\bar n(\alpha_1,\zeta_1)
  +
  N_2\,\bar n(0,\zeta_2)
  \le
  E.
  \label{eq:energy_budget_two_stage}
\end{equation}

After the phase-shift operation $U_\theta=e^{-i\theta\hat n}$, with
$\theta\in[0,2\pi)$, the induced family for the $N_1+N_2$ probes is the product
model $\rho_\theta^{(1)\otimes N_1}\otimes\rho_\theta^{(2)\otimes N_2}$. Here,
$\rho_\theta^{(j)}=U_\theta\rho^{(j)}U_\theta^\dagger$ for $j=1,2$, with
$\rho^{(1)}=\ket{\alpha_1,\zeta_1}\bra{\alpha_1,\zeta_1}$ and
$\rho^{(2)}=\ket{0,\zeta_2}\bra{0,\zeta_2}$.

Let $Z=(Z_1,\dots,Z_{N_1})$ denote the heterodyne outcomes in Stage~I with
$Z_k\in\mathbb R^2$. Based on $\{Z_k\}$, Stage~I returns two objects: a coarse
estimator $\hat\theta_1=\hat\theta_1(Z)\in[0,2\pi)$, and a localization window
of fixed length $\pi/2$ centered at $\hat\theta_1$,
\begin{equation}
  W(Z)
  \coloneqq
  \left[
    \hat\theta_1(Z)-\tfrac{\pi}{4},
    \hat\theta_1(Z)+\tfrac{\pi}{4}
  \right]
  \subset \mathbb S^1.
  \label{eq:window_definition}
\end{equation}
Here $W(Z)$ is interpreted as an estimator for an arc on the circle,
equivalently as an estimator of an interval modulo $2\pi$. The length $\pi/2$ is
the local identifiability scale used for the homodyne model generated by
squeezed-vacuum probes, as discussed in Subsection~\ref{subsec:sensitivity}. The
restriction to $W(Z)$ provides the local parameter range on which the adaptive
homodyne likelihood is treated as identifiable.

Stage~II uses squeezed-vacuum states and adaptive homodyne measurements for
efficient, and asymptotically optimal phase estimation. Let
$X=(X_1,\ldots,X_{N_2})$ denote the measurement outcomes of Stage~II. At the
$\ell$th homodyne adaptive measurement, the local-oscillator phase $\phi_\ell$
is selected as a measurable function of the information available up to that
point,
\begin{equation}
  \phi_\ell
  =
  \Phi_\ell(Z,X_1,\ldots,X_{\ell-1})
  \in[0,2\pi),
  \label{eq:adaptive_policy}
\end{equation}
so that Stage~II is described by an adaptive POVM. Conditional on
$(Z,X_1,\ldots,X_{\ell-1})$, the $\ell$th measurement is the homodyne POVM
$\Pi_{\phi_\ell}(dx)$, corresponding to the quadrature $X_{\phi_\ell}$ defined
in Subsection~\ref{subsub:homodyne}.

For a squeezed-vacuum probe $\ket{\zeta_2}$ with $\zeta_2=r_2e^{i\psi_2}$, the
homodyne outcome is normally distributed with zero mean and variance that
depends on the unknown phase $\theta$ through the relative angle
\begin{equation}
  u \coloneq  \theta+\phi-\frac{\psi_2}{2}
  \pmod{\pi}.
  \label{eq:relative_angle_u}
\end{equation}

More precisely, for a homodyne setting $\phi$,
\begin{equation}
  \begin{aligned}
  \sigma_\theta(\phi)^2
  &=
  \frac12\left(e^{-2r_2}\cos^2 u+e^{2r_2}\sin^2 u\right) \\
  &=
    \frac12\left(\cosh(2r_2)-\sinh(2r_2)\cos(2u)\right).
    \end{aligned}
  \label{eq:svs_homodyne_variance_stage2}
\end{equation}
Since the mean of the distribution of outcomes is identically zero, the Fisher
information of one homodyne sample is the variance term. Substitution of
Eq.~\eqref{eq:svs_homodyne_variance_stage2} into Eq.~\eqref{eq:fi_homodyne}
gives
\begin{equation}
    \begin{aligned}
  \mathcal I_{\mathrm{hom}}(\theta;\phi)
  &=
  \frac{
    \left(\partial_\theta\sigma_\theta(\phi)^2\right)^2
  }{
    2\,\sigma_\theta(\phi)^4
  }\\
  &=
  \frac{
    2\sinh^2(2r_2)\sin^2(2u)
  }{
    \left(\cosh(2r_2)-\sinh(2r_2)\cos(2u)\right)^2
      }.
      \end{aligned}
  \label{eq:fi_homodyne_stage2_closedform}
\end{equation}

For fixed $\theta$ and $r_2$, the information extracted by a homodyne
measurement is determined by the offset $u$ in Eq.~\eqref{eq:relative_angle_u}.
By adapting $\phi_\ell$, the measurement can therefore be kept close to an
offset $u_\ast(r_2)$ that maximizes
Eq.~\eqref{eq:fi_homodyne_stage2_closedform}, up to the symmetries of the model.
At such offsets, the homodyne measurement is locally optimal for squeezed-vacuum
and attains the QFI $\mathcal F_Q^{\mathrm{SVS}}$,
\begin{equation}
\begin{split}
  \max_{\phi\in[0,2\pi)}
  \mathcal I_{\mathrm{hom}}(\theta;\phi)
  &= 8\,\bar n(0,\zeta_2)\left[\bar n(0,\zeta_2)+1\right] \\
  &=  2\sinh^2(2r_2) = \mathcal F_Q^{\mathrm{SVS}}.
\end{split}
\label{eq:homodyne_attains_qfi}
\end{equation}
where $\bar n(0,\zeta_2)=\sinh^2 r_2$ \cite{Monras2006,Pinel2013,Olivares2009}.
Equivalently, one may choose $u_\ast(r_2)$ such that
$\cos(2u_\ast)=\tanh(2r_2)$, up to the symmetries of the homodyne model.

We apply the adaptive scheme from \cite{RodrguezGarca2024} in Stage~II
inside the window $W(Z)$ selected by Stage~I. The policy implements the local
maximization of $\mathcal I_{\mathrm{hom}}$ by replacing the unknown $\theta$
with the current maximum-likelihood estimate, constrained to $W(Z)$, and then
choosing $\phi_\ell$ so that the induced offset $u$ in
Eq.~\eqref{eq:relative_angle_u} remains close to $u_\ast(r_2)$. Although
\cite{RodrguezGarca2024} is formulated for a real squeezing parameter, the same
rule extends to complex $\zeta_2$ by incorporating the squeeze angle $\psi_2$
through Eq.~\eqref{eq:relative_angle_u}.

After $N_2$ homodyne measurements, the adaptive policy in Stage~II defines a
conditional likelihood for the homodyne outcomes on the interval selected by
Stage~I. This likelihood is combined with the Stage~I heterodyne likelihood, and
the final optimization is restricted to $W(Z)$. Let $L_1(\cdot;Z)$ denote the
Stage~I likelihood, and let $L_2(\cdot;X\mid Z,\Phi)$ denote the Stage~II
likelihood induced by the adaptive homodyne policy $\Phi$. The final estimator
is any measurable maximizer of the joint two-stage likelihood restricted to the
interval selected in Stage~I:
\begin{equation}
  \begin{aligned}
    \hat\theta(Z,X) &\in \argmax_{\theta'\in W(Z)} \left\{ \log L_1(\theta';Z)
                      \right. \\ &\hspace{4.5em} + \log L_2\left(\theta';X\mid \left. Z,\Phi
                                   \right) \right \}.
  \end{aligned}
  \label{eq:restricted_ml}
\end{equation}
The restriction to $W(Z)$ makes the Stage~II likelihood identifiable on the
selected interval, while the inclusion of $\log L_1$ retains the localization
information from Stage~I in the final decision rule.

\subsection{Outliers and the overshoot mechanism}
\label{subsec:overshoot_mechanism}

The estimator in Eq.~\eqref{eq:restricted_ml} is $W(Z)$-valued by construction.
That is, for every realization $(z,x)$ of $(Z,X)$, $\hat\theta(z,x)\in W(z).$
This constraint does not force a large error when the Stage~I window contains
the true phase. At finite energy, however, the Stage~I window may fail to
contain the true parameter. If $\theta\notin W(z)$, then the final estimator is
forced to remain in an arc that excludes the true phase. The resulting circular
error is bounded below by the distance from $\theta$ to the selected window,
independently of the information collected in Stage~II. To quantify this effect,
fix $\theta\in[0,2\pi)$ and let $\mathcal Z$ denote the Stage~I outcome space.
The set of Stage~I outcomes for which the selected window contains the true
phase is
\begin{equation}
  \widetilde{\mathcal C}_\theta
  =
  \{z\in\mathcal Z:\theta\in W(z)\}.
\end{equation}
Equivalently, evaluated at the random Stage~I outcome $Z$, the coverage event is
\begin{equation}
  \mathcal C_\theta
  =
  \{Z\in\widetilde{\mathcal C}_\theta\}
  =
  \{\theta\in W(Z)\}.
  \label{eq:covarage_event}
\end{equation}
Then, the notation $\theta\in W(Z)$ denotes the event that the random window
selected from the Stage~I outcome contains the true phase.
Since $W(Z)$ is the arc of half-width $\pi/4$ centered at $\hat\theta_1(Z)$,
this event agrees, up to endpoint conventions, with
\begin{equation}
  \mathcal C_\theta
  =
  \left\{
    d\left(\hat\theta_1(Z),\theta\right)\le \frac{\pi}{4}
  \right\}.
\end{equation}

We measure the failure of coverage by
\begin{equation}
  \Delta_\theta(Z)
  \coloneqq
  \mathrm{dist}_{\mathbb S^1}\!\left(\theta,W(Z)\right)
  =
  \left(
    d\left(\hat\theta_1(Z),\theta\right)
    -
    \frac{\pi}{4}
  \right)_+,
  \label{eq:overshoot_def}
\end{equation}
where $d$ is the geodesic distance on the circle, as defined in
Eq.~\eqref{eq:geodesic_distance}, and $(u)_+\coloneqq\max\{u,0\}$. Thus,
$\Delta_\theta(Z)$ depends only on the Stage~I outcome. It vanishes whenever the
selected window contains the true phase, and it measures how far the true phase
lies outside the selected window when the localization step fails. The following
lemma gives the corresponding pointwise lower bound on the error of any
estimator constrained to $W(Z)$.

\begin{lemma}[Overshoot lower bound]
  \label{lem:overshoot_lb}
  Let $\hat\theta$ be an estimator satisfying $\hat\theta(Z,X)\in W(Z)$ almost
  surely. Then, for every realization $(z,x)$ for which $\hat\theta(z,x)\in
  W(z)$ and every $\theta\in[0,2\pi)$,
  \begin{equation}
    d\left(\hat\theta(z,x),\theta\right)^2
    \ge
    \Delta_\theta(z)^2.
    \label{eq:overshoot_pointwise_lb}
\end{equation}
Consequently,
\begin{equation}
  \mathrm{MSE}\left(\hat\theta;\theta\right)
  =
  \mathbb E_\theta\!\left[
    d\left(\hat\theta(Z,X),\theta\right)^2
  \right]
  \ge
  \mathbb E_\theta\!\left[\Delta_\theta(Z)^2\right].
  \label{eq:overshoot_mse_lb}
\end{equation}
\end{lemma}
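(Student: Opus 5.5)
The plan is to prove the pointwise inequality \eqref{eq:overshoot_pointwise_lb} from the definition of $\Delta_\theta$ as a point-to-set distance on $\mathbb S^1$, and then take expectations to get \eqref{eq:overshoot_mse_lb}.

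Fix a realization $(z,x)$ with $\hat\theta(z,x)\in W(z)$ and fix $\theta$. By \eqref{eq:overshoot_def}, $\Delta_\theta(z)=\mathrm{dist}_{\mathbb S^1}(\theta,W(z))=\inf_{\varphi\in W(z)} d(\varphi,\theta)$. Since $\hat\theta(z,x)$ is one admissible $\varphi$ in this infimum, we get $d(\hat\theta(z,x),\theta)\ge\Delta_\theta(z)\ge 0$. Squaring preserves the inequality because both sides are nonnegative, which gives \eqref{eq:overshoot_pointwise_lb}.

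The one step that needs care is the identity asserted in \eqref{eq:overshoot_def}, namely $\mathrm{dist}_{\mathbb S^1}(\theta,W(z))=(d(\hat\theta_1(z),\theta)-\pi/4)_+$. If the reader prefers to take the second expression as the definition, I would prove the inequality directly with the triangle inequality for the geodesic metric $d$ in \eqref{eq:geodesic_distance}. Any $\varphi\in W(z)$ satisfies $d(\varphi,\hat\theta_1(z))\le\pi/4$, because $W(z)$ is the closed arc of half-width $\pi/4<\pi$ around $\hat\theta_1(z)$. Then
\begin{equation*}
d(\hat\theta_1(z),\theta)\le d(\hat\theta_1(z),\varphi)+d(\varphi,\theta)\le \tfrac{\pi}{4}+d(\varphi,\theta),
\end{equation*}
so $d(\varphi,\theta)\ge d(\hat\theta_1(z),\theta)-\pi/4$. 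Combined with $d\ge0$, this gives $d(\varphi,\theta)\ge(d(\hat\theta_1(z),\theta)-\pi/4)_+$. Only this lower bound is needed for the lemma, so the equality of the two expressions never has to be verified. Endpoint conventions are irrelevant because the arc is taken closed.

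For \eqref{eq:overshoot_mse_lb}, the pointwise bound holds on the event $\{\hat\theta(Z,X)\in W(Z)\}$, which has $\mathbb P_\theta$-probability one by hypothesis. Both $d(\hat\theta(Z,X),\theta)^2$ and $\Delta_\theta(Z)^2$ are measurable and bounded by $\pi^2$, since $\Delta_\theta$ is a continuous function of the measurable $\hat\theta_1(Z)$. Monotonicity of expectation under the joint law of $(Z,X)$ induced by the adaptive POVM therefore yields the claim. The only possible obstacle is measurability of $\hat\theta_1$ and $\hat\theta$, which I will take as part of the standing assumption that these are measurable estimators.
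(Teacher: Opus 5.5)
Your proposal is correct and follows the paper's proof essentially step for step. The point $\hat\theta(z,x)\in W(z)$ is one candidate in the infimum defining $\Delta_\theta(z)=\mathrm{dist}_{\mathbb S^1}(\theta,W(z))$; squaring nonnegative quantities gives the pointwise bound; and monotonicity of expectation on the almost-sure event gives the MSE bound. Your triangle-inequality argument for the $(d(\hat\theta_1(z),\theta)-\pi/4)_+$ form of the definition, and your remarks on measurability and boundedness by $\pi^2$, are sound additions that the paper does not spell out.
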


\begin{proof}
  The proof has two steps. First, we prove a deterministic geometric inequality
  for a fixed realization $(z,x)$. Then we evaluate this inequality at the
  random outcome $(Z,X)$ and take expectation.

  Fix $\theta\in[0,2\pi)$ and fix a realization $(z,x)$ such that
  $\hat\theta(z,x)\in W(z)$. By definition,
\begin{equation}
  \Delta_\theta(z)
  =
  \operatorname{dist}_{\mathbb S^1}\!\left(\theta,W(z)\right)
  =
  \inf_{\vartheta\in W(z)} d(\vartheta,\theta).
\end{equation}
Since $\hat\theta(z,x)$ is one of the points in $W(z)$, the distance from
$\theta$ to $\hat\theta(z,x)$ cannot be smaller than the infimum of the
distances from $\theta$ to all points in $W(z)$. Hence
\begin{equation}
  d\left(\hat\theta(z,x),\theta\right)
  \ge
  \inf_{\vartheta\in W(z)} d(\vartheta,\theta)
  =
  \Delta_\theta(z).
\end{equation}
Both sides are nonnegative, so squaring gives
Eq.~\eqref{eq:overshoot_pointwise_lb}.

Now evaluate this pointwise inequality at the random realization $(Z,X)$. Since
$\hat\theta(Z,X)\in W(Z)$ almost surely, we obtain
\begin{equation}
  d\left(\hat\theta(Z,X),\theta\right)^2
  \ge
  \Delta_\theta(Z)^2
  \qquad
  \text{almost surely under } \mathbb P_\theta .
\end{equation}
Taking expectation with respect to the joint distribution of $(Z,X)$ under the
parameter value $\theta$ gives Eq.~\eqref{eq:overshoot_mse_lb}.
\end{proof}

Lemma~\ref{lem:overshoot_lb} explains why a purely local benchmark, such as the
QCRB, is not sufficient to characterize the performance of a full-period
two-stage protocol. On the coverage event $\mathcal C_\theta$, Stage~II may be
locally efficient; however, on $\mathcal C_\theta^c$ the estimator is
constrained to an interval that does not contain the true phase. The risk
therefore contains a contribution determined by the Stage~I localization error,
$\mathbb E_\theta\!\left[\Delta_\theta(Z)^2\right],$ which depends on both the
probability of missing the correct window and the squared distance from the true
phase to the selected window. This contribution is not controlled by the local
Fisher information of Stage~II. It must therefore appear explicitly in any
finite-energy lower bound for the full-period estimation problem. We use this
observation to derive a generalized Cramér-Rao bound that separates the local
contribution of Stage~II from the Stage~I overshoot penalty.

\subsection{Generalized Cramér-Rao bound for two-stage estimators}
\label{subsec:generalized_crb}

We decompose the risk of the two-stage estimator into two contributions: the
error from the local Stage~II estimation on the coverage event, and the error
caused by Stage~I localization failures on the complement. In general, the
ordinary Cramér-Rao inequality applies to regular statistical models in which
the parameter is locally identified \cite{Lehman1998}. In the setting considered
here, such a local model is available only after conditioning on Stage~I
outcomes $Z=z$ for which the true phase lies within the selected window.

Fix $\theta_0\in[0,2\pi)$, and let $\pi_{\mathbb S^1}:\mathbb R\to\mathbb S^1$
denote the map that sends a real number $t$ to its equivalence class on the
circle, identifying $t$ and $t+2\pi k$ for every $k\in\mathbb Z$. We use
$\tilde\theta_0$ for the representative of $\theta_0$ in $[0,2\pi)$. For an
outcome $z$ from Stage~I such that $\theta_0\in W(z)$, let $I_z\subset\mathbb R$
be the connected component of $\pi_{\mathbb S^1}^{-1}(W(z))$ that contains
$\tilde\theta_0$. Since $W(z)$ has length $\pi/2$, the restriction of
$\pi_{\mathbb S^1}$ to $I_z$ is a one-to-one parametrization of the selected
arc. Thus each $\vartheta\in W(z)$ has a unique lift $\tilde\vartheta\in I_z$,
and
\begin{equation}
  d(\vartheta,\theta_0)
  =
  |\tilde\vartheta-\tilde\theta_0|,
  \qquad
  \vartheta\in W(z).
\end{equation}
In particular, because the final estimator is constrained to $W(z)$, the
quantity $d(\hat\theta(z,X),\theta_0)$ agrees with the absolute error of the
lifted estimator in the coordinate interval $I_z$.

For such a fixed $z$, the adaptive policy $\Phi$ in Stage~II induces a
conditional statistical model for the corresponding outcomes. We denote
$p_{2,z}(x;\vartheta)$ the density of this conditional model, with respect to a
dominating measure $\mu_z$, when the lifted phase coordinate is $\vartheta\in
I_z$. Equivalently, the physical phase is $\pi_{\mathbb S^1}(\vartheta)$. The
conditional Fisher information of the Stage~II model at $\theta_0$ is then
\begin{equation}
  \mathcal I_{2}(\theta_0;z)
  \coloneqq
  \mathbb E_{\theta_0}\!\left[
    \left(
      \partial_\vartheta \log p_{2,z}(X;\vartheta)
      \big|_{\vartheta=\tilde\theta_0}
    \right)^2
    \Bigm| Z=z
  \right].
  \label{eq:conditional_stage2_fi}
\end{equation}
The notation $\mathcal I_2(\theta_0;z)$ keeps the original circular parameter
value in the argument, but the derivative is taken in the lifted coordinate
$I_z$. On outcomes $z$ for which $\theta_0\notin W(z)$, this local coordinate is
not used; in the theorem below, the corresponding term is always multiplied by
the indicator of the coverage event.

The conditional Fisher information $\mathcal I_2(\theta_0;z)$ enters the bound
through the local Cramér-Rao inequality. This inequality is applied only when
Stage~I has selected an interval containing the true phase; outside this event,
the proof does not use a local Cramér-Rao argument and instead relies on the
geometric overshoot bound of Lemma~\ref{lem:overshoot_lb}. Therefore, we do not
impose a global regularity assumption on the full-circle estimation problem.
Instead, after conditioning on a successful Stage~I localization and lifting the
selected interval to a local coordinate, we impose the local hypotheses needed
to apply the Cramér-Rao inequality to the conditional Stage~II model.

To state this precisely, fix $\theta_0\in[0,2\pi)$ and let $\mathbb
P_{\theta_0}^Z$ denote the marginal distribution of the Stage~I outcome $Z$
under the true value of the parameter. We now consider the coverage set
$\mathcal C_{\theta_0}$ in Eq.~\eqref{eq:covarage_event}. For each $z\in\mathcal
C_{\theta_0}$, the selected interval $W(z)$ admits a lifted coordinate interval
$I_z\subset\mathbb R$. We denote by $\tilde\theta_0(z)\in I_z$ the corresponding
lift of the true phase and by $\widetilde{\hat\theta}_z(x)\in I_z$ the unique
lift of the final estimate $\hat\theta(z,x)\in W(z)$. Expectations with respect
to the conditional Stage~II model $p_{2,z}(x;\vartheta)$ are denoted by $\mathbb
E_{\vartheta,z}$. With this notation, we impose the following local regularity
assumptions.

\begin{assumptions}[Local regularity on the coverage event]
  \label{ass:local_crb}
  For $\mathbb P_{\theta_0}^Z$-almost every $z\in\mathcal C_{\theta_0}$, we
  assume the following:
\begin{enumerate}
\item The conditional Stage~II model
  $\{p_{2,z}(x;\vartheta):\vartheta\in I_z\}$ is regular in a neighborhood of
  $\tilde\theta_0(z)$. In particular, it is dominated by a common measure, its
  density is differentiable in $\vartheta$, and the usual differentiations
  under the integral sign are valid.
\item The lifted estimator $\widetilde{\hat\theta}_z$ is locally unbiased at
  $\tilde\theta_0(z)$.
\item The conditional Fisher information satisfies
  $0<\mathcal I_2(\theta_0;z)<\infty$.
\end{enumerate}
\end{assumptions}

With these local regularity assumptions in place, we can state the generalized
two-stage Cramér-Rao bound. The result separates the contribution from
successful Stage~I localization, where a local Cramér-Rao inequality applies,
from the contribution of localization failures, which is controlled by the
overshoot penalty. The proof is given in Appendix~\ref{app:proof_main_theorem}.

\begin{theorem}[Generalized two-stage Cram\'er--Rao bound]
\label{thm:generalized_crb}
Suppose the true parameter value is $\theta_0\in[0,2\pi)$, with corresponding
Stage~I coverage set $\mathcal C_{\theta_0}$. Let $\hat\theta(Z,X)$ be a
two-stage estimator constrained to the selected Stage~I window, so that
$\hat\theta(Z,X)\in W(Z)$ $\mathbb P_{\theta_0}$-almost surely. Suppose that the
Local regularity assumptions, Assumptions~\ref{ass:local_crb}, are satisfied on
this coverage set. Then
\begin{equation}
  \mathrm{MSE}\left(\hat\theta;\theta_0\right)
  \ge
  \mathbb E_{\theta_0}\!\left[
    \frac{
      \mathbf 1_{\mathcal C_{\theta_0}}(Z)
    }{
      \mathcal I_2(\theta_0;Z)
    }
  \right]
  +
  \mathbb E_{\theta_0}\!\left[
    \Delta_{\theta_0}(Z)^2
  \right].
  \label{eq:generalized_crb_main}
\end{equation}
where the quotient in the first term is understood to be zero when
$Z\notin\mathcal C_{\theta_0}$, $\mathcal I_2(\theta_0;Z)$ is the conditional
Stage~II Fisher information associated with the Stage~I outcome $Z$, and
$\Delta_{\theta_0}(Z)$ is the overshoot distance from the true phase to the
selected Stage~I window.
\end{theorem}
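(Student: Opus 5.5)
The plan is to split the MSE according to the coverage event and bound each piece separately, using the tower property of conditional expectation over the Stage~I outcome $Z$. Concretely, write
\begin{equation*}
  \mathrm{MSE}(\hat\theta;\theta_0)
  =
  \mathbb E_{\theta_0}\!\left[\mathbf 1_{\mathcal C_{\theta_0}}(Z)\,d(\hat\theta,\theta_0)^2\right]
  +
  \mathbb E_{\theta_0}\!\left[\mathbf 1_{\mathcal C_{\theta_0}^c}(Z)\,d(\hat\theta,\theta_0)^2\right],
\end{equation*}
and then condition on $Z=z$ in each term, so that each becomes $\mathbb E_{\theta_0}[\,\mathbb E_{\theta_0}[d(\hat\theta(Z,X),\theta_0)^2\mid Z]\cdot\mathbf 1_{\cdot}(Z)]$.

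On the non-coverage term, I will use the pointwise inequality of Lemma~\ref{lem:overshoot_lb}. Since $\hat\theta(Z,X)\in W(Z)$ almost surely, we have $d(\hat\theta,\theta_0)^2\ge\Delta_{\theta_0}(Z)^2$ almost surely. Also, $\Delta_{\theta_0}(Z)=0$ on $\mathcal C_{\theta_0}$ by its definition in Eq.~\eqref{eq:overshoot_def}. It follows that $\mathbf 1_{\mathcal C^c_{\theta_0}}(Z)\,\Delta_{\theta_0}(Z)^2=\Delta_{\theta_0}(Z)^2$, and hence the second term is bounded below by $\mathbb E_{\theta_0}[\Delta_{\theta_0}(Z)^2]$ with no loss. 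This is what lets both contributions add without double counting.

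On the coverage term, fix $z\in\mathcal C_{\theta_0}$ outside the null set excluded in Assumptions~\ref{ass:local_crb}. Using the lift to $I_z$, $d(\hat\theta(z,x),\theta_0)=|\widetilde{\hat\theta}_z(x)-\tilde\theta_0(z)|$. The conditional second moment then satisfies
\begin{equation*}
  \mathbb E_{\theta_0}\!\left[d(\hat\theta,\theta_0)^2\mid Z=z\right]
  =
  \mathbb E_{\tilde\theta_0,z}\!\left[(\widetilde{\hat\theta}_z-\tilde\theta_0)^2\right]
  \ge
  \mathrm{Var}_{\tilde\theta_0,z}(\widetilde{\hat\theta}_z)
  \ge
  \frac{1}{\mathcal I_2(\theta_0;z)}.
\end{equation*}
The first inequality holds because the MSE dominates the variance; local unbiasedness in fact makes it an equality. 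The second is the classical CRB of Eq.~\eqref{eq:crb_classical_local}, applied to the regular conditional model $p_{2,z}$ under items 1--3 of the assumptions.

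One subtlety must be checked here: the conditional model depends on $z$ both through the adaptive policy $\Phi_\ell(z,\cdot)$ and through the fact that the estimator itself uses $L_1(\cdot;z)$. Since $z$ is fixed, however, $\widetilde{\hat\theta}_z$ is a measurable function of $x$ alone, so the standard Cauchy--Schwarz derivation of the CRB goes through for the adaptive product density $p_{2,z}(x;\vartheta)=\prod_\ell p(x_\ell;\vartheta,\phi_\ell)$. Finally, I multiply by $\mathbf 1_{\mathcal C_{\theta_0}}(z)$ and integrate against $\mathbb P^Z_{\theta_0}$, which gives the first term of Eq.~\eqref{eq:generalized_crb_main}; measurability of $z\mapsto\mathcal I_2(\theta_0;z)$ follows from its definition as a conditional expectation.

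I expect the main obstacle to be making the conditioning step rigorous rather than the inequalities themselves. Three points need care: that the joint law of $(Z,X)$ disintegrates as $\mathbb P^Z_{\theta_0}(dz)\,p_{2,z}(x;\tilde\theta_0)\mu_z(dx)$; that the lift $I_z$ and the identification of circular and lifted errors are measurable in $z$; and that the Stage~I likelihood entering the estimator does not break the conditional CRB. On this last point, the estimator is merely some function of $x$ for fixed $z$, and the Fisher information is computed from the Stage~II model alone, so no information from $L_1$ enters the bound. I would state the disintegration explicitly and handle the null set in the assumptions by noting that it does not affect the integrals.
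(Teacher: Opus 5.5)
Your proposal is correct and follows the paper's proof essentially step for step. You use the same split of the risk by $\mathbf 1_{\mathcal C_{\theta_0}}(Z)$, bound the complement with Lemma~\ref{lem:overshoot_lb} together with $\Delta_{\theta_0}(Z)=0$ on $\mathcal C_{\theta_0}$, and handle the coverage term with the conditional Cram\'er--Rao inequality in the lifted coordinate $I_z$, integrated via the tower property. The disintegration and measurability points you flag are left implicit in the paper, but they do not change the argument.
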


Theorem~\ref{thm:generalized_crb} is expressed in terms of the conditional
classical Fisher information $\mathcal I_2(\theta_0;Z)$ generated by the
Stage~II measurement. A measurement-independent version follows by upper
bounding this quantity with the quantum Fisher information of the Stage~II probe
family. For the fixed Stage~II product family $\rho_\theta^{(2)\otimes N_2}$, we
denote this total quantum Fisher information at $\theta_0$ by $\mathcal
F_{Q,2}^{\mathrm{tot}} = \mathcal F_Q^{\rho_\theta^{(2)\otimes
    N_2}}(\theta_0).$ The following corollary gives the corresponding
quantum version of the two-stage bound.

\begin{corollary}[Generalized Quantum CRB for two-Stage protocols]
\label{cor:generalized_crb_qfi}
Suppose the true parameter value is $\theta_0\in[0,2\pi)$, with corresponding
Stage~I coverage set $\mathcal C_{\theta_0}$. Let $\hat\theta(Z,X)$ be a
two-stage estimator constrained to the selected Stage~I window, so that
$\hat\theta(Z,X)\in W(Z)$ $\mathbb P_{\theta_0}$-almost surely. Suppose that the
Local regularity assumptions, Assumptions~\ref{ass:local_crb}, are satisfied on
this coverage set.

Assume, in addition, that for each Stage~I outcome $z$, the Stage~II adaptive
policy defines a $\theta$-independent POVM on the fixed Stage~II product probe
family, and that
$0<\mathcal F_{Q,2}^{\mathrm{tot}}<\infty$. Then, for
$\mathbb P_{\theta_0}^Z$-almost every $z\in\mathcal C_{\theta_0}$,
\begin{equation}
  \mathcal I_2(\theta_0;z)
  \le
  \mathcal F_{Q,2}^{\mathrm{tot}},
  \label{eq:conditional_bc_bound}
\end{equation}
and consequently,
\begin{equation}
  \mathrm{MSE}\left(\hat\theta;\theta_0\right)
  \ge
  \frac{
    \mathbb P_{\theta_0}\!\left(Z\in\mathcal C_{\theta_0}\right)
  }{
    \mathcal F_{Q,2}^{\mathrm{tot}}
  }
  +
  \mathbb E_{\theta_0}\!\left[
    \Delta_{\theta_0}(Z)^2
  \right].
  \label{eq:generalized_crb_qfi_version}
\end{equation}
\end{corollary}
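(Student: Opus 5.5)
The corollary has two parts. The first is the pointwise inequality \eqref{eq:conditional_bc_bound}. The second is the MSE bound \eqref{eq:generalized_crb_qfi_version}, which I will obtain by substituting the first into Theorem~\ref{thm:generalized_crb}.

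For the first part, fix $z\in\mathcal C_{\theta_0}$ outside the exceptional null set. By hypothesis, the adaptive homodyne policy $\Phi$, with $z$ held fixed, is a single $\theta$-independent POVM $M_z$ on the Stage~II product system $\mathcal H^{\otimes N_2}$. I will construct it by sequential composition: its density on outcomes $x=(x_1,\dots,x_{N_2})$ is the product of homodyne PVM elements $\Pi_{\phi_\ell}(dx_\ell)$, where $\phi_\ell=\Phi_\ell(z,x_1,\dots,x_{\ell-1})$, each acting on the $\ell$th tensor factor. Conditioning on $Z=z$ does not change the Stage~II state, because Stage~I and Stage~II use independent probes in the product model $\rho_\theta^{(1)\otimes N_1}\otimes\rho_\theta^{(2)\otimes N_2}$. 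Hence, for $\vartheta\in I_z$, the conditional density $p_{2,z}(x;\vartheta)$ is exactly the Born-rule density of $M_z$ applied to $\rho^{(2)\otimes N_2}_{\pi_{\mathbb S^1}(\vartheta)}$. The lift $I_z\to W(z)$ is a local isometry, so derivatives in $\vartheta$ at $\tilde\theta_0(z)$ coincide with derivatives in $\theta$ at $\theta_0$.

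Next I apply the Braunstein--Caves inequality $\mathcal I_M(\theta)\le\mathcal F_Q(\theta)$ \cite{Braunstein1994}, quoted earlier, to the POVM $M_z$ and the family $\rho^{(2)\otimes N_2}_\theta$. This gives $\mathcal I_2(\theta_0;z)\le\mathcal F_{Q,2}^{\mathrm{tot}}$. The regularity required for Braunstein--Caves, namely differentiability under the integral sign, is item~1 of Assumptions~\ref{ass:local_crb}. Finiteness of the QFI is assumed. For pure probes one also has $\mathcal F_{Q,2}^{\mathrm{tot}}=N_2\cdot 4\mathrm{Var}(\hat n)$, by additivity and Eq.~\eqref{eq:qfi_pure_generator}, which confirms the quantity is independent of $z$.

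For the second part, item~3 of the assumptions gives $\mathcal I_2>0$, so $1/\mathcal I_2(\theta_0;z)\ge 1/\mathcal F_{Q,2}^{\mathrm{tot}}$ almost surely on $\mathcal C_{\theta_0}$. Multiplying by the indicator and taking $\mathbb E_{\theta_0}$ yields $\mathbb E[\mathbf 1_{\mathcal C_{\theta_0}}(Z)/\mathcal I_2]\ge\mathbb P_{\theta_0}(Z\in\mathcal C_{\theta_0})/\mathcal F_{Q,2}^{\mathrm{tot}}$. Inserting this into \eqref{eq:generalized_crb_main} and keeping the overshoot term unchanged gives \eqref{eq:generalized_crb_qfi_version}.

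The main obstacle is the first part: justifying that the conditional adaptive model really is a fixed-POVM model on the product state. This needs measurability of $\Phi_\ell$, so that the sequential composition defines a valid POVM on $\mathbb R^{N_2}$, and it needs the $z$-conditioning to be harmless, which rests on the independence of Stage~I and Stage~II. The corollary's hypothesis is stated precisely to grant this, so I would verify the construction only briefly and cite the hypothesis.
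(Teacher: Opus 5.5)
Your proposal is correct and matches the paper's proof. Both apply the Braunstein--Caves inequality to the $z$-conditional adaptive POVM on the fixed Stage~II product family, invert it using $0<\mathcal I_2(\theta_0;z)$ and $\mathcal F_{Q,2}^{\mathrm{tot}}<\infty$, multiply by the coverage indicator, take expectations, and substitute into Theorem~\ref{thm:generalized_crb}. Your extra remarks (the sequential construction of the POVM, the independence of the Stage~I and Stage~II factors, and the lift being a local isometry so derivatives agree) only make explicit what the paper leaves implicit.
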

\begin{proof}
Fix a Stage~I outcome $z\in\mathcal C_{\theta_0}$ for which
Assumptions~\ref{ass:local_crb} hold. Conditional on $Z=z$, the Stage~II
adaptive policy defines a POVM on the fixed Stage~II product probe family. This
POVM may depend on the realized Stage~I outcome $z$, and its homodyne settings
may depend on previous Stage~II outcomes, but the resulting adaptive
measurement is still independent of the unknown parameter $\theta$. Hence, by
the Braunstein--Caves inequality \cite{Braunstein1994}, the conditional
classical Fisher information is bounded by the Stage~II quantum Fisher
information:
\begin{equation}
  \mathcal I_2(\theta_0;z)
  \le
  \mathcal F_{Q,2}^{\mathrm{tot}} .
  \label{eq:proof_bc_bound}
\end{equation}
This holds for $\mathbb P_{\theta_0}^Z$-almost every
$z\in\mathcal C_{\theta_0}$. Since
$0<\mathcal I_2(\theta_0;z)$ and
$0<\mathcal F_{Q,2}^{\mathrm{tot}}<\infty$, Eq.~\eqref{eq:proof_bc_bound}
implies the indicator-weighted bound
\begin{equation}
  \frac{
    \mathbf 1_{\mathcal C_{\theta_0}}(Z)
  }{
    \mathcal I_2(\theta_0;Z)
  }
  \ge
  \frac{
    \mathbf 1_{\mathcal C_{\theta_0}}(Z)
  }{
    \mathcal F_{Q,2}^{\mathrm{tot}}
  },
  \label{eq:proof_reciprocal_qfi_bound}
\end{equation}
with the same convention as in Theorem~\ref{thm:generalized_crb}: the quotient
on the left is taken to be zero when $Z\notin\mathcal C_{\theta_0}$.

Substituting Eq.~\eqref{eq:proof_reciprocal_qfi_bound} into the first term of
Eq.~\eqref{eq:generalized_crb_main} gives
\begin{equation}
\begin{split}
  \mathbb E_{\theta_0}\!\left[
    \frac{
      \mathbf 1_{\mathcal C_{\theta_0}}(Z)
    }{
      \mathcal I_2(\theta_0;Z)
    }
  \right]
  &\ge
  \mathbb E_{\theta_0}\!\left[
    \frac{
      \mathbf 1_{\mathcal C_{\theta_0}}(Z)
    }{
      \mathcal F_{Q,2}^{\mathrm{tot}}
    }
  \right]  \\
  &=
  \frac{
    \mathbb P_{\theta_0}\!\left(Z\in\mathcal C_{\theta_0}\right)
  }{
    \mathcal F_{Q,2}^{\mathrm{tot}}
  } .
\end{split}
\label{eq:proof_qfi_contribution}
\end{equation}
Combining this bound with Theorem~\ref{thm:generalized_crb} proves
Eq.~\eqref{eq:generalized_crb_qfi_version}.
\end{proof}

Theorem~\ref{thm:generalized_crb} applies to any two-stage protocol whose final
estimator is constrained to the window selected in Stage~I and whose conditional
Stage~II model satisfies the local assumptions on the coverage event. The first
term in Eq.~\eqref{eq:generalized_crb_main} is the local Cramér-Rao
contribution, averaged over the event that Stage~I obtains a window containing
the true phase. The second term is the overshoot penalty caused by localization
failures in Stage~I. Thus local efficiency in Stage~II can improve the risk only
on the coverage event; when the selected window misses the true phase, the
estimator incurs a geometric error that is independent of the local Fisher
information available in Stage~II.

In summary, the generalized bound of Eq.~\eqref{eq:generalized_crb_main}
captures the relevant errors from the class of two-Stage estimation protocols
considered here. Its first term is the local information-limited contribution,
now explicitly restricted to the event on which the local problem in Stage~II is
well posed. Its second term is a purely geometric tail penalty determined by
Stage~I and is insensitive the performance of the measurement implemented in
Stage~II. In the subsequent sections we analyze the conditional information term
Stage~II and evaluate the overshoot penalty $\mathbb
E_\theta[\Delta_\theta(Z)^2]$ for different families of quantum states in
Stage~I. This procedure yields tractable finite-energy lower bounds. In
addition, it provides guidance for how to best allocate the available energy
between the coarse localization in Stage~I and locally optimal phase estimation
in Stage~II to minimize the overall error.

\section{Coarse localization in Stage~I with coherent-states}
\label{sec:coherent_init}

As a first step, we investigate the performance of coarse localization in
Stage~I using different Gaussian probes: coherent states, which serve as a
tractable classical benchmark; and displaced squeezed states, which can provide
an improved performance. We explicitly compute the coverage probability and
overshoot penalty from Stage~I that enter the lower bound of
Theorem~\ref{thm:generalized_crb}. In this section, we focus on coherent state
probes, and Section \ref{sec:dsvs_init} discusses the use of displaced squeezed
vacuum states in Stage~I.

\subsection{Stage~I model and the heterodyne maximum-likelihood estimator}

Let the Stage~I input probe be the coherent state $\ket{\alpha_1}$, with
$\alpha_1=|\alpha_1|e^{i\phi_1}$ and $|\alpha_1|>0$. Under the phase-shift
operation $U_\theta=e^{-i\theta\hat n}$, the encoded state is
$\ket{\alpha_{1,\theta}}$, where $\alpha_{1,\theta}=\alpha_1e^{-i\theta}
=|\alpha_1|e^{i(\phi_1-\theta)}$. Heterodyne measurement on each copy produces
independent outcomes $\beta_k\in\mathbb C$, $k=1,\dots,N_1$, with
$\beta_k\sim\mathcal{CN}(\alpha_{1,\theta},1)$. Here, $\mathcal{CN}(\mu,1)$
denotes the circular complex normal distribution with mean $\mu$ and unit
variance.

The log-likelihood for $\theta$ based on $Z=(\beta_1,\dots,\beta_{N_1})$ is
\begin{equation} \ell_1(\theta;Z) =
-\sum_{k=1}^{N_1}\left|\beta_k-\alpha_{1,\theta}\right|^2 + \mathrm{const}.
  \label{eq:coherent_stage1_loglik}
\end{equation} Expanding the first term of
Eq.~\eqref{eq:coherent_stage1_loglik}, we have
\begin{equation} \ell_1(\theta;Z) =
-\sum_{k=1}^{N_1}\left|\beta_k-\bar\beta\right|^2 -
N_1\left|\bar\beta-\alpha_{1,\theta}\right|^2 + \mathrm{const},
\end{equation} where $\bar\beta \coloneqq \frac{1}{N_1}\sum_{k=1}^{N_1}\beta_k$
is the sample mean. The first term in the above equation is independent of
$\theta$, so maximizing $\ell_1(\theta;Z)$ is equivalent to minimizing the
Euclidean distance $\left|\bar\beta-\alpha_{1,\theta}\right|$ between
$\alpha_{1,\theta}$ and the sample mean. Consequently, the maximum-likelihood
estimator (MLE) in Stage~I
becomes
\begin{equation} \hat\theta_1(Z) = \phi_1 -
\Arg\!\left(\sum_{k=1}^{N_1}\beta_k\right) = \phi_1 - \Arg(B_{N_1}).
  \label{eq:coherent_stage1_mle}
\end{equation}

We obtain the signed wrapped error in Stage~I as $\varepsilon_1 =
\operatorname{wrap}(\hat\theta_1-\theta) =
\Arg\!\left(e^{i(\hat\theta_1-\theta)}\right).$ For convenience, we introduce
the effective energy $E_{\mathrm{coh}}=N_1|\alpha_1|^2$ for the family of
$N_{1}$ coherent states. With this definition, and by rotational invariance, the
distribution of $\varepsilon_1$ depends on $(N_1,\alpha_1)$ only through
$E_{\mathrm{coh}}$.

\subsection{Probability distribution of the phase estimator in Stage~I}

The circular distance for the estimator in Eq.~\eqref{eq:coherent_stage1_mle} is
$d(\hat\theta_1,\theta)=|\varepsilon_1|$. By rotational invariance, the
distribution of $\varepsilon_1$ is independent of both the true phase $\theta$
and the probe phase $\phi_1$. Its dependence on $N_1$ and $\alpha_1$ enters only
through the effective coherent energy $E_{\mathrm{coh}}$. The resulting density
is the following.

\begin{proposition}[Rician density of the Stage~I phase error]
  \label{prop:rician_phase_law} For any Stage~I model, with coherent probes and
  heterodyne measurement, the signed wrapped error $\varepsilon_1$ has density
  $f_{\varepsilon}(\varepsilon;E_{\mathrm{coh}})$ given, for
  $\varepsilon\in(-\pi,\pi]$, by
\begin{equation}
\begin{split}
  f_{\varepsilon}(\varepsilon;E_{\mathrm{coh}})
  &=
    \frac{e^{-E_{\mathrm{coh}}}}{2\pi}
    +
    \frac{\sqrt{E_{\mathrm{coh}}}\cos\varepsilon}{2\sqrt{\pi}}\,
    e^{-E_{\mathrm{coh}}\sin^2\varepsilon}  \\
  &\quad\times
    \left[
    1+\erf\!\left(\sqrt{E_{\mathrm{coh}}}\cos\varepsilon\right)
    \right].
\end{split}
\label{eq:rician_phase_density_coherent}
\end{equation}
\end{proposition}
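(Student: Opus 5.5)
The plan is to reduce the density of $\varepsilon_1$ to the angular marginal of a two-dimensional isotropic Gaussian with nonzero mean, and then integrate out the radius in closed form.

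\textbf{Step 1: reduce to one complex Gaussian.} The sample mean satisfies $\bar\beta\sim\mathcal{CN}(\alpha_{1,\theta},1/N_1)$, since the $\beta_k$ are independent with unit variance. From Eq.~\eqref{eq:coherent_stage1_mle},
\begin{equation*}
e^{i\varepsilon_1}=e^{i(\hat\theta_1-\theta)}=e^{i(\phi_1-\theta)}\,\overline{\bar\beta}/|\bar\beta| .
\end{equation*}
Define $W\coloneqq \sqrt{N_1}\,e^{-i(\phi_1-\theta)}\bar\beta$. Then $W\sim\mathcal{CN}(\sqrt{E_{\mathrm{coh}}},1)$, and $\varepsilon_1=-\Arg(W)$. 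The law of $\Arg(W)$ is symmetric under reflection, because the mean $\sqrt{E_{\mathrm{coh}}}$ is real. So it suffices to compute the density of $\Arg W$. This step also makes explicit the rotational invariance claimed in the text: only $E_{\mathrm{coh}}$ survives.

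\textbf{Step 2: polar coordinates.} Use the convention $\mathcal{CN}(\mu,1)$ with density $\pi^{-1}e^{-|w-\mu|^2}$; this matches the likelihood Eq.~\eqref{eq:coherent_stage1_loglik}. Write $w=\rho e^{i\varepsilon}$ and $a=\sqrt{E_{\mathrm{coh}}}$. Then
\begin{equation*}
f(\varepsilon)=\frac{1}{\pi}\int_0^\infty \rho\, e^{-(\rho^2-2a\rho\cos\varepsilon+a^2)}\,d\rho .
\end{equation*}
Complete the square: $\rho^2-2a\rho\cos\varepsilon+a^2=(\rho-a\cos\varepsilon)^2+a^2\sin^2\varepsilon$. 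This pulls out the factor $e^{-E_{\mathrm{coh}}\sin^2\varepsilon}$.

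\textbf{Step 3: the radial integral.} Substitute $t=\rho-c$ with $c=a\cos\varepsilon$, which gives $\int_{-c}^\infty (t+c)e^{-t^2}dt$. Split it into two pieces.
\begin{itemize}
\item The $t$-part equals $\tfrac12 e^{-c^2}$. Multiplied by the prefactor, it contributes $\tfrac{1}{2\pi}e^{-c^2-a^2\sin^2\varepsilon}=\tfrac{e^{-E_{\mathrm{coh}}}}{2\pi}$.
\item The $c$-part equals $c\,\tfrac{\sqrt\pi}{2}\bigl[1+\erf(c)\bigr]$. It contributes $\tfrac{a\cos\varepsilon}{2\sqrt\pi}e^{-E_{\mathrm{coh}}\sin^2\varepsilon}\bigl[1+\erf(a\cos\varepsilon)\bigr]$.
\end{itemize}
Together these reproduce Eq.~\eqref{eq:rician_phase_density_coherent} exactly. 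Since $\Arg$ already takes values in $(-\pi,\pi]$, no wrapping sum is needed.

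\textbf{Main obstacle: the variance convention.} Only the normalization needs care, namely checking that the paper's "unit variance" for $\mathcal{CN}$ means $\mathbb E|\beta-\mu|^2=1$. Equivalently, each real quadrature of $\sqrt2\beta$ has variance $1$, matching $\Sigma=V+\tfrac12\mathbb I_2=\mathbb I_2$ for a coherent state. This convention is what makes the exponent $E_{\mathrm{coh}}$ rather than $2E_{\mathrm{coh}}$. I would confirm it directly from the $Q$-function $\pi^{-1}e^{-|\beta-\alpha|^2}$. As a sanity check, I would verify that the density integrates to $1$: at $E_{\mathrm{coh}}=0$ it reduces to the uniform density $1/(2\pi)$, and as $E_{\mathrm{coh}}\to\infty$ it concentrates at $\varepsilon=0$.
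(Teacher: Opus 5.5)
Your proposal is correct and follows essentially the same route as the paper's proof in Appendix~B: pass to the normalized sum $\sqrt{N_1}\,\bar\beta\sim\mathcal{CN}(\cdot,1)$, rotate its mean onto the positive real axis, and use reflection symmetry to drop the sign of the angle. It then switches to polar coordinates with the density $\pi^{-1}e^{-|w-\mu|^2}$ and evaluates the radial integral by completing the square, which yields the $\tfrac12$ and $\tfrac{\sqrt\pi}{2}\,c\,[1+\erf(c)]$ pieces. Your bookkeeping of the rotation, $\varepsilon_1=-\Arg W$ with $W=\sqrt{N_1}\,e^{-i(\phi_1-\theta)}\bar\beta$, is actually cleaner than the appendix, which has a few sign slips in the phase used to rotate the statistic.
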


\begin{proof}[Proof sketch]
The likelihood in Eq.~\eqref{eq:coherent_stage1_loglik} depends on the
heterodyne record only through the sum of the outcomes. It is convenient to use
the normalized statistic
\begin{equation}
  B_{N_1}=\frac{1}{\sqrt{N_1}}\sum_{k=1}^{N_1}\beta_k .
\end{equation}
Since the outcomes are independent circular complex normal random variables,
\begin{equation}
  B_{N_1}
  \sim
  \mathcal{CN}\!\left(
    \sqrt{E_{\mathrm{coh}}}\,e^{i(\phi_1-\theta)},1
  \right).
\end{equation}
The normalization does not change the argument, so
$\hat\theta_1=\phi_1-\Arg(B_{N_1})$. Therefore,
\begin{equation}
  \varepsilon_1
  =
  \Arg\!\left(e^{i(\hat\theta_1-\theta)}\right)
  =
  \Arg\!\left(
    e^{i(\phi_1-\theta)}B_{N_1}^*
  \right).
\end{equation}
The rotated conjugate $e^{i(\phi_1-\theta)}B_{N_1}^*$ is a circular complex
normal random variable with real positive mean $\sqrt{E_{\mathrm{coh}}}$ and
unit variance. Thus $\varepsilon_1$ is the phase of a nonzero-mean circular
complex Gaussian random variable. Writing this variable in Cartesian
coordinates, changing to polar coordinates, and integrating out the radial
coordinate gives the density in Eq.~\eqref{eq:rician_phase_density_coherent}. A
complete derivation is given in Appendix~\ref{app:proof_rician_phase_law}.
\end{proof}

Using the identity $1+\erf(x)=2\Phi(\sqrt{2}x)$, where $\Phi$ denotes the
standard normal cumulative distribution function,
Eq.~\eqref{eq:rician_phase_density_coherent} can be written equivalently as
\begin{equation}
\begin{split}
f_{\varepsilon}(\varepsilon;E_{\mathrm{coh}})
&= \frac{e^{-E_{\mathrm{coh}}}}{2\pi}
+
\frac{\sqrt{E_{\mathrm{coh}}}\cos\varepsilon}{\sqrt{\pi}},
e^{-E_{\mathrm{coh}}\sin^2\varepsilon}  \\
& \times
\Phi\left(\sqrt{2E_{\mathrm{coh}}}\cos\varepsilon\right).
\end{split}
\label{eq:rician_phase_density_coherent_phi}
\end{equation}
This is the Rician phase density, i.e., the phase distribution of a nonzero-mean
circular complex Gaussian random variable \cite{Rice1945}. Since the Stage~I
localization window has length $\pi/2$, its half-width is $\Delta=\pi/4$.
Therefore, Eq.~\eqref{eq:rician_phase_density_coherent_phi} yields explicit
one-dimensional expressions for the Stage~I coverage probability and the
corresponding overshoot penalty.

Since the Stage~I window has length $\pi/2$, its half-width is $\delta=\pi/4$.
Using the density $f_{\varepsilon}(\varepsilon;E_{\mathrm{coh}})$ from
Proposition~\ref{prop:rician_phase_law}, we introduce the coverage function
\begin{equation}
  C_{\mathrm{coh}}(E_{\mathrm{coh}})
  \coloneqq
  \int_{-\delta}^{\delta}
  f_{\varepsilon}(\varepsilon;E_{\mathrm{coh}})\,d\varepsilon
  \label{eq:coherent_coverage_exact}
\end{equation}
and the overshoot function
\begin{equation}
\begin{split}
  G_{\mathrm{coh}}(E_{\mathrm{coh}})
  \coloneqq
  \int_{|\varepsilon|>\delta}
  &\left(|\varepsilon|-\delta\right)^2  \\
  &\times
  f_{\varepsilon}(\varepsilon;E_{\mathrm{coh}})\,d\varepsilon .
\end{split}
  \label{eq:coherent_overshoot_exact}
\end{equation}
In Eq.~\eqref{eq:coherent_overshoot_exact}, the integration domain is the
complement of $[-\delta,\delta]$ in $(-\pi,\pi]$.

These two functions are precisely the Stage~I localization quantities that enter
the generalized two-stage bound. The first is the probability that the selected
window contains the true phase, while the second is the unconditional overshoot
contribution associated with localization failures. This identification is
stated explicitly in the following corollary.

\begin{corollary}[Coverage and overshoot for coherent probes in Stage~I]
\label{cor:coherent_coverage_overshoot}
For Stage~I with coherent probes and heterodyne measurement, the probability
that the selected window contains the true phase is
\begin{equation}
  \mathbb P_\theta\!\left(Z\in\mathcal C_\theta\right)
  =
  C_{\mathrm{coh}}(E_{\mathrm{coh}}),
\end{equation}
and the corresponding overshoot penalty is
\begin{equation}
  \mathbb E_\theta\!\left[
    \Delta_\theta(Z)^2
  \right]
  =
  G_{\mathrm{coh}}(E_{\mathrm{coh}}).
\end{equation}
Both quantities are independent of the true phase $\theta$ and depend on the
Stage~I coherent probes only through the effective energy
$E_{\mathrm{coh}}$.
\end{corollary}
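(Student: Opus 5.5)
The plan is to rewrite both Stage~I quantities as functionals of the signed wrapped error $\varepsilon_1=\operatorname{wrap}(\hat\theta_1(Z)-\theta)$. Then Proposition~\ref{prop:rician_phase_law} turns them into the one-dimensional integrals defining $C_{\mathrm{coh}}$ and $G_{\mathrm{coh}}$. The key observation is that $d(\hat\theta_1(Z),\theta)=|\varepsilon_1|$ by Eq.~\eqref{eq:geodesic_distance}. Hence both the coverage event and the overshoot distance are measurable functions of $\varepsilon_1$ alone.

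\textbf{Coverage.} The coverage event $\mathcal C_\theta$ is, up to endpoint conventions, $\{d(\hat\theta_1(Z),\theta)\le\pi/4\}=\{|\varepsilon_1|\le\delta\}$ with $\delta=\pi/4$. Since $\varepsilon_1$ has the absolutely continuous density $f_\varepsilon(\cdot;E_{\mathrm{coh}})$ on $(-\pi,\pi]$, the endpoints $\varepsilon=\pm\delta$ have probability zero. So the choice of open versus closed arc in Eq.~\eqref{eq:window_definition} does not matter. Integrating the density over $[-\delta,\delta]$ then gives $\mathbb P_\theta(Z\in\mathcal C_\theta)=C_{\mathrm{coh}}(E_{\mathrm{coh}})$.

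\textbf{Overshoot.} By Eq.~\eqref{eq:overshoot_def}, $\Delta_\theta(Z)=(|\varepsilon_1|-\delta)_+$. Therefore $\Delta_\theta(Z)^2=(|\varepsilon_1|-\delta)^2\,\mathbf 1\{|\varepsilon_1|>\delta\}$, which is a bounded Borel function of $\varepsilon_1$, bounded by $(3\pi/4)^2$. Its expectation is therefore finite. By the change-of-variables (law of the unconscious statistician) formula with the density of Proposition~\ref{prop:rician_phase_law}, this expectation equals the integral defining $G_{\mathrm{coh}}(E_{\mathrm{coh}})$.

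\textbf{Invariance.} The density in Proposition~\ref{prop:rician_phase_law} involves only $E_{\mathrm{coh}}$. It contains no dependence on $\theta$, $\phi_1$, or on $N_1$ and $|\alpha_1|$ separately. Both functionals are integrals of fixed functions against this density, so they inherit the same independence. This follows from the rotational-invariance step in the proof sketch of the proposition. There, $e^{i(\phi_1-\theta)}B_{N_1}^*\sim\mathcal{CN}(\sqrt{E_{\mathrm{coh}}},1)$, and this law does not involve $\theta$ or $\phi_1$.

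\textbf{Expected obstacle.} The only delicate point is justifying that the geodesic distance between $\hat\theta_1$ and $\theta$ equals $|\varepsilon_1|$, so that $\mathcal C_\theta$ and $\Delta_\theta$ are exactly functions of $|\varepsilon_1|$. This follows directly from the definition of $\operatorname{wrap}$ taking values in $(-\pi,\pi]$. The measure-zero endpoint issue is handled by absolute continuity as above. Everything else is a direct substitution.
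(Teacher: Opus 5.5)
Your proposal is correct and follows the same route the paper takes implicitly: the paper gives no separate proof. It identifies $\mathcal C_\theta$ with $\{d(\hat\theta_1(Z),\theta)\le\pi/4\}=\{|\varepsilon_1|\le\pi/4\}$ and $\Delta_\theta(Z)$ with $(|\varepsilon_1|-\pi/4)_+$, then integrates these against the Rician density of Proposition~\ref{prop:rician_phase_law}, exactly as you do. Your remarks on the null endpoint set and on the boundedness of $\Delta_\theta(Z)^2$ are harmless refinements of that direct substitution.
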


Consequently, when Stage~I uses coherent probes, the overshoot term in
Theorem~\ref{thm:generalized_crb} becomes an explicit function of the Stage~I
coherent energy. For every two-stage estimator satisfying the hypotheses of
Theorem~\ref{thm:generalized_crb}, we obtain
\begin{equation}
  \mathrm{MSE}\left(\hat\theta;\theta_0\right)
  \ge
  \mathbb E_{\theta_0}\!\left[
    \frac{
      \mathbf 1_{\mathcal C_{\theta_0}}(Z)
    }{
      \mathcal I_2(\theta_0;Z)
    }
  \right]
  +
  G_{\mathrm{coh}}(E_{\mathrm{coh}}).
  \label{eq:coherent_explicit_crb_conditional}
\end{equation}
The first term is the contribution of the local Stage~II estimation problem,
restricted to the event that Stage~I selects a window containing the true phase.
The second term is fully determined by the coherent energy allocated to Stage~I.

Using the quantum version of the bound, Corollary~\ref{cor:generalized_crb_qfi},
gives a measurement-independent version. If $\mathcal F_{Q,2}^{\mathrm{tot}}$
denotes the total quantum Fisher information of the Stage~II squeezed-vacuum
product family, then
\begin{equation}
  \mathrm{MSE}\left(\hat\theta;\theta_0\right)
  \ge
  \frac{
    C_{\mathrm{coh}}(E_{\mathrm{coh}})
  }{
    \mathcal F_{Q,2}^{\mathrm{tot}}
  }
  +
  G_{\mathrm{coh}}(E_{\mathrm{coh}}).
  \label{eq:coherent_explicit_crb_qfi}
\end{equation}

Equation~\eqref{eq:coherent_explicit_crb_qfi} provides an analytic benchmark for
using coherent probes in Stage~I. Increasing $E_{\mathrm{coh}}$ increases the
coverage probability $C_{\mathrm{coh}}(E_{\mathrm{coh}})$ and suppresses the
overshoot penalty $G_{\mathrm{coh}}(E_{\mathrm{coh}})$, making the localization
stage more reliable. Under a fixed total energy constraint, however, allocating
more energy to Stage~I leaves less energy for the squeezed-vacuum probes in
Stage~II. The two-stage protocol therefore exhibits an energy-allocation
tradeoff: reliable coarse localization in Stage~I competes with high local
precision in Stage~II.

\subsection{Stage~I sample complexity}

$C_{\mathrm{coh}}$ and $G_{\mathrm{coh}}$ provide a complete description for
coarse localization with coherent states in Stage~I with finite energy. In
particular, $C_{\mathrm{coh}}$ determines the probability that the selected
window in Stage~I of width $\pi/2$ contains the true phase, while
$G_{\mathrm{coh}}$ quantifies the corresponding overshoot penalty. Since the
distribution of the signed wrapped error $\varepsilon_1$ depends on the coarse
localization in Stage~I only through $E_{\mathrm{coh}}=N_1|\alpha_1|^2$, then we
observe that the number of probes, the single-probe amplitude, and the total
energy in Stage~I are interchangeable resources in Stage~I.

Let $c\in(0,1)$ be the target coverage probability for the Stage~I window. We
define the minimum energy in Stage~I needed to reach this target $c$ as
\begin{equation}
  E_{\mathrm{coh}}^{\min}(c)
  =
  \inf\{E'\ge 0:\ C_{\mathrm{coh}}(E')\ge c\}.
  \label{eq:exact_energy_threshold}
\end{equation}
Equivalently, in terms of the effective coherent-state amplitude
$\rho=\sqrt{E_{\mathrm{coh}}}=\sqrt{N_1}\,|\alpha_1|$, we define
\begin{equation}
  \rho_{\min}(c)
  =
  \inf\{\rho\ge 0:\ C_{\mathrm{coh}}(\rho^2)\ge c\}.
  \label{eq:exact_amplitude_threshold}
\end{equation}
For a fixed single-probe amplitude $|\alpha_1|$, the corresponding minimum
number of probes in Stage~I becomes
\begin{equation}
  N_1^{\min}(c)
  =
  \left\lceil
    \frac{E_{\mathrm{coh}}^{\min}(c)}{|\alpha_1|^2}
  \right\rceil.
  \label{eq:exact_sample_complexity}
\end{equation}
Thus $E_{\mathrm{coh}}^{\min}(c)$, $\rho_{\min}(c)$, and
$N_1^{\min}(c)$ are equivalent ways of describing the localization cost needed
to achieve coverage level $c$.

The function $c\mapsto \rho_{\min}(c)$ is defined implicitly through the
coverage function $C_{\mathrm{coh}}$. For numerical calculations, it is helpful
to replace it by a closed-form empirical approximation. We use the two-parameter
probit model
\begin{equation}
  \rho_{\min}(c)
  \approx
  a\,\Phi^{-1}\!\bigl(1-b(1-c)\bigr),
  \label{eq:coherent_probit_fit}
\end{equation}
where $a,b>0$ are fitted by regression over the confidence range of interest. We
note that this approximation is not an asymptotic statement, but rather an
empirical fit to obtain a threshold curve in the finite-energy regime.

A separate analytic approximation is obtained in the large-energy regime. The
Fisher information of a single heterodyne measurement on a coherent state of
amplitude $|\alpha_1|$ is $2|\alpha_1|^2$. Hence, for $N_1$ independent probes,
$\mathcal I_{N_1} = 2N_1|\alpha_1|^2 = 2E_{\mathrm{coh}}.$ The MLE is
asymptotically efficient, and, in the local
asymptotic regime, the signed wrapped error satisfies\
\begin{equation}
  \sqrt{E_{\mathrm{coh}}}\,\varepsilon_1
  \xrightarrow{d}
  \mathcal N\!\left(0,\frac12\right).
  \label{eq:coherent_asymptotic_normality}
\end{equation}
Equivalently, for large $E_{\mathrm{coh}}$, $\varepsilon_1 \approx \mathcal
N\!\left(0,\frac{1}{2E_{\mathrm{coh}}}\right).$ Approximating the coverage
probability by this normal distribution gives
\begin{equation}
  \mathbb P_\theta\!\left(|\varepsilon_1|\le \frac{\pi}{4}\right)
  \approx
  2\Phi\!\left(
    \frac{\pi}{4}\sqrt{2E_{\mathrm{coh}}}
  \right)-1.
\end{equation}
Solving for the target coverage $c$ yields
\begin{equation}
  E_{\mathrm{coh}}^{\min}(c)
  \approx
  \frac{8}{\pi^2}\,z_{(1+c)/2}^2,
  \label{eq:coherent_energy_threshold_asymptotic}
\end{equation}
and therefore
\begin{equation}
  \rho_{\min}(c)
  \approx
  \frac{2\sqrt{2}}{\pi}\,z_{(1+c)/2},
  \label{eq:coherent_amplitude_threshold_asymptotic}
\end{equation}
where $z_{(1+c)/2}$ is the $(1+c)/2$ quantile of the standard normal
distribution. These formulas describe the high-energy normal regime, but they do
not capture the finite-energy corrections contained in the curve $c\mapsto
\rho_{\min}(c)$.

Figure~\ref{fig:coherent_threshold_curve} compares three descriptions of the map
$c\mapsto \rho_{\min}(c)$. The blue points represent the exact curve defined
implicitly by Eq.~\eqref{eq:exact_amplitude_threshold} and obtained by
numerically by inverting $C_{\mathrm{coh}}(\rho^2)$. The orange solid curve is
the probit approximation of Eq.~\eqref{eq:coherent_probit_fit}, and the red
dashed curve is the asymptotic approximation from
Eq.~\eqref{eq:coherent_amplitude_threshold_asymptotic}. For any fixed
single-probe amplitude $|\alpha_1|$, this threshold curve immediately yields the
minimum number of probes required in Stage~I through
Eq.~\eqref{eq:exact_sample_complexity}.

\begin{figure}[h]
  \centering
  \includegraphics[width=\columnwidth]{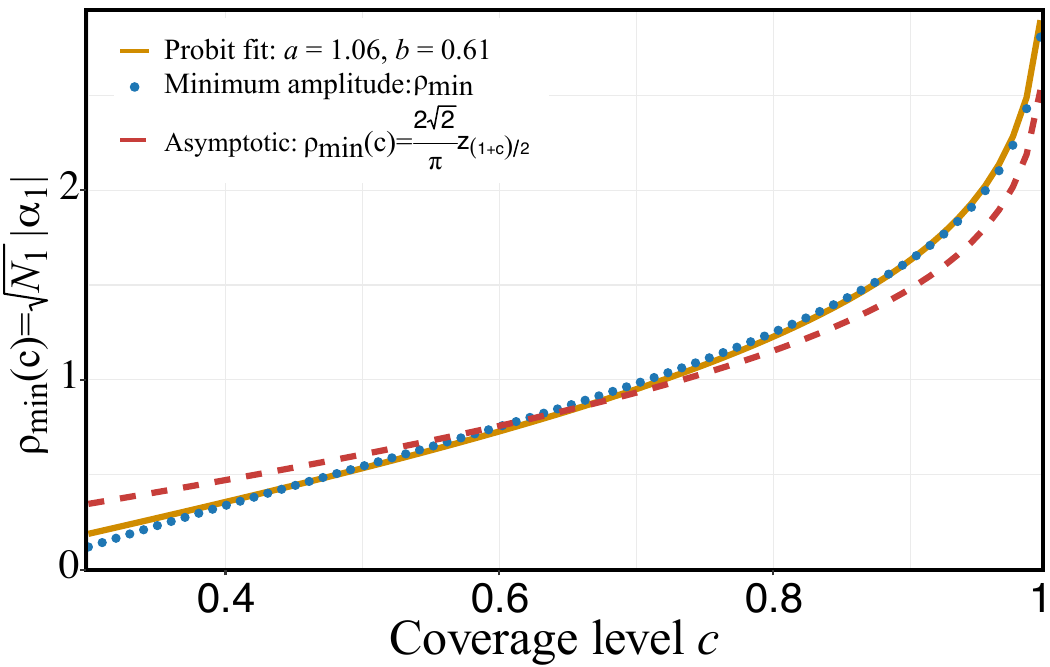}
  \caption{\label{fig:coherent_threshold_curve} Minimum effective coherent
    amplitude $\rho_{\min}(c)$ required for the Stage~I window of width $\pi/2$
    to attain coverage probability at least $c$. The points represent the exact
    threshold curve obtained by numerically inverting
    Eq.~\eqref{eq:exact_amplitude_threshold}. The solid curve is the
    finite-energy probit approximation in Eq.~\eqref{eq:coherent_probit_fit},
    and the dashed curve is the asymptotic approximation in
    Eq.~\eqref{eq:coherent_amplitude_threshold_asymptotic}, derived from the
    normal approximation in Eq.~\eqref{eq:coherent_asymptotic_normality}. The
    discrepancy between the exact and asymptotic curves at moderate confidence
    levels is the result of finite-energy corrections to the normal regime.}
\end{figure}

\section{Coarse localization in Stage~I with displaced squeezed probes}
\label{sec:dsvs_init}

This section analyzes the performance of coarse localization in Stage~I with
displaced squeezed probes and compares this performance with the benchmark from
coherent-states in Sec. \ref{sec:coherent_init}.

\subsection{Local information gain under heterodyne measurements}
\label{subsec:dsvs_fisher}

Consider single-mode displaced squeezed probes $\ket{\alpha,\zeta}$, with
$\alpha=|\alpha|e^{i\phi}\in\mathbb C$, and $\zeta=re^{i\psi}\in\mathbb C$.
Under a phase shift operation, the family of parameterized states becomes
$\ket{\alpha,\zeta;\theta}=U_\theta\ket{\alpha,\zeta}$, with energy $E=\bar
n(\alpha,\zeta)=|\alpha|^2+\sinh^2 r$. The heterodyne measurement applied on to
this family produces a Gaussian model has a mean $m_\theta=d _\theta$ and
covariance $\Sigma_\theta = V_\theta+\tfrac12\mathbb I_2$ that both depend on
$\theta$, see Sec.~\ref{sec:preliminaries}. Specifically, the unitary
transformation $U_\theta\ket{\alpha,\zeta}$ transform the displacement and
squeezing phases as $\phi\mapsto\phi-\theta$ and $\psi\mapsto\psi-2\theta$. In
the real quadrature representation, the first-moment vector of the state is
\begin{equation}
  d_\theta
  =
  \sqrt{2}|\alpha|
  \begin{pmatrix}
    \cos(\phi-\theta) \\
    \sin(\phi-\theta)
  \end{pmatrix},
  \label{eq:dtheta_dsvs}
\end{equation}
and its covariance matrix $V_\theta$ has entries
\begin{equation}
\begin{aligned}
  [V_\theta]_{11}
  &=
  \frac12\!\left[
    \cosh(2r)-\sinh(2r)\cos(\psi-2\theta)
  \right], \\
  [V_\theta]_{22}
  &=
  \frac12\!\left[
    \cosh(2r)+\sinh(2r)\cos(\psi-2\theta)
  \right], \\
  [V_\theta]_{12}
  &=
  [V_\theta]_{21}
  =
  -\frac12\sinh(2r)\sin(\psi-2\theta).
\end{aligned}
\label{eq:Vtheta_dsvs}
\end{equation}
Thus, the heterodyne outcome distribution is Gaussian with mean $d_\theta$ and
covariance $V_\theta+\frac12\mathbb I_2$. For convenience, we introduce the
relative phase $\chi \coloneqq 2\phi-\psi$, which quantifies the alignment
between the displacement direction and the principal axes of the covariance
matrix of the squeezed state.

The performance of displaced squeezed probes in Stage~I is characterized by the
single-probe quantum Fisher information and by the classical Fisher information
generated by heterodyne measurement. Eq.~\eqref{eq:fi_heterodyne} gives an
explicit expression for the Fisher information with heterodyne measurements.

\begin{proposition}[Fisher information of displaced squeezed probes]
\label{prop:dsvs_fisher}
For the phase-encoded family generated by $U_\theta=e^{-i\theta\hat n}$ from the
displaced squeezed state $\ket{\alpha,\zeta}$ with relative phase $\chi$ and
squeezing parameter $r >0$, the single-probe quantum Fisher information is
\begin{equation}
\begin{split}
  \mathcal F_Q^{\mathrm{DSVS}}(\alpha,\zeta)
  &=
  2\sinh^2(2r)  \\
  &\quad+
  4|\alpha|^2
  \left[
    \cosh(2r)-\sinh(2r)\cos\chi
  \right],
\end{split}
\label{eq:qfi_dsvs_full_rewrite}
\end{equation}
For heterodyne measurement on the same phase-encoded family, the corresponding
single-probe classical Fisher information is
\begin{equation}
  \mathcal I_{\mathrm{het}}^{\mathrm{DSVS}}(\alpha,\zeta)
  =
  4\sinh^2 r
  +
  2|\alpha|^2
  \left(
    1-\tanh r\,\cos\chi
  \right).
  \label{eq:cfi_dsvs_full_rewrite}
\end{equation}
\end{proposition}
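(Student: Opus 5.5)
The plan is to prove the two formulas separately, each by reducing it to a formula already stated in Sec.~\ref{sec:preliminaries}. For the QFI I will use the pure-state expression $\mathcal F_Q^{\rho}=4\,\mathrm{Var}_\psi(\hat n)$ of Eq.~\eqref{eq:qfi_pure_generator}. For the heterodyne information I will use the Gaussian Fisher information Eq.~\eqref{eq:fi_heterodyne}, with the moments $d_\theta$ and $V_\theta$ from Eqs.~\eqref{eq:dtheta_dsvs}--\eqref{eq:Vtheta_dsvs}. Both quantities are invariant under a common rotation of $(\phi,\psi)$. The reason is that $U_\theta$ acts as $\phi\mapsto\phi-\theta$, $\psi\mapsto\psi-2\theta$, and this leaves $\chi=2\phi-\psi$ unchanged. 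So I may evaluate everything at $\theta=0$ and, for heterodyne, rotate the frame so that the squeezing axes are aligned with $q,p$. The answer should then depend only on $r$, $|\alpha|$ and $\chi$.

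\emph{QFI.} Write $a=D(\alpha)S(\zeta)\,b\,S(\zeta)^\dagger D(\alpha)^\dagger$ in the Heisenberg sense, so that $\hat n=(b+\alpha)^\dagger(b+\alpha)$, where $b$ has squeezed-vacuum statistics. The fluctuation is $\delta\hat n=(\alpha^* b+\alpha b^\dagger)+(b^\dagger b-\langle b^\dagger b\rangle)$. The cross terms are odd moments of a zero-mean Gaussian state, so they vanish. Hence $\mathrm{Var}(\hat n)=\mathrm{Var}_{\rm SV}(b^\dagger b)+\langle(\alpha^*b+\alpha b^\dagger)^2\rangle$. The first term is $2\bar n(\bar n+1)=\tfrac12\sinh^2(2r)$, as already used in Eq.~\eqref{eq:qfi_svs_rewrite}. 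The second term is $2|\alpha|^2$ times the variance of the quadrature along the displacement direction. Using $\langle b^\dagger b\rangle=\sinh^2 r$ and $\langle b^2\rangle=-e^{i\psi}\sinh r\cosh r$, it equals $|\alpha|^2[\cosh 2r-\sinh 2r\cos\chi]$. Multiplying by $4$ gives Eq.~\eqref{eq:qfi_dsvs_full_rewrite}.

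\emph{Heterodyne CFI.} In the aligned frame, $\Sigma=V+\tfrac12\mathbb I_2$ is diagonal with entries $s_{1,2}=(1+e^{\mp2r})/2$. The rotation generator gives $\partial_\theta V=[\Omega',V]$-type commutator terms, which are purely off-diagonal with magnitude $\lambda_2-\lambda_1=\sinh 2r$. Then
\[
\tfrac12\mathrm{Tr}\bigl[(\Sigma^{-1}\partial_\theta V)^2\bigr]=\frac{\sinh^2 2r}{s_1s_2}=\frac{\sinh^2 2r}{\cosh^2 r}=4\sinh^2 r,
\]
using $s_1s_2=(2+2\cosh 2r)/4=\cosh^2 r$. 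For the mean term, $\partial_\theta d_\theta$ is $\sqrt2|\alpha|$ times the unit vector perpendicular to the displacement. If that vector makes angle $\gamma$ with the principal axes, the term is $2|\alpha|^2(\cos^2\gamma/s_1+\sin^2\gamma/s_2)$. Writing this as $\tfrac12(s_1^{-1}+s_2^{-1})+\tfrac12(s_1^{-1}-s_2^{-1})\cos 2\gamma$, one checks that $\tfrac12(s_1^{-1}+s_2^{-1})=1$ and $\tfrac12(s_1^{-1}-s_2^{-1})=\tanh r$. Identifying $\cos2\gamma=-\cos\chi$ gives $2|\alpha|^2(1-\tanh r\cos\chi)$. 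Adding the two contributions yields Eq.~\eqref{eq:cfi_dsvs_full_rewrite}.

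The main obstacle is bookkeeping of phase and sign conventions. Specifically, I must get the correct sign relating $\cos 2\gamma$ (and likewise the quadrature variance in the QFI) to $\cos\chi$. This depends on the convention $S(\zeta)=\exp[\tfrac12(\zeta^*a^2-\zeta a^{\dagger2})]$ and on the direction of the rotation in Eq.~\eqref{eq:heisenberg_rotation_quadratures}. I will fix it by checking a single case: $\chi=0$ with real $\alpha$ and $\psi=0$ must give an amplitude-squeezed probe. That probe has reduced variance along the displacement and enlarged variance along the phase direction. This forces both minus signs in the stated formulas, consistent with Eq.~\eqref{eq:Vtheta_dsvs}.
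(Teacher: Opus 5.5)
Your proposal is correct and takes the same route as the paper: the QFI comes from $4\,\mathrm{Var}(\hat n)$, and the heterodyne information comes from Eq.~\eqref{eq:fi_heterodyne}, evaluated at a convenient phase using the invariance of $\chi=2\phi-\psi$. You also supply the intermediate computations that the paper only asserts, and they check out: the vanishing odd cross terms, $\langle b^2\rangle=-e^{i\psi}\sinh r\cosh r$, the covariance term $\sinh^2(2r)/(s_1s_2)$ with $s_1s_2=\cosh^2 r$, and $\cos2\gamma=-\cos\chi$.
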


\begin{proof}
The proof evaluates the two Fisher informations by using the natural generator
and measurement model for phase estimation. The quantum Fisher information is
obtained from the photon-number variance of the input pure state, while the
heterodyne Fisher information follows by substituting the first moment and
covariance matrix of the phase-encoded Gaussian state into
Eq.~\eqref{eq:fi_heterodyne}.

For the unitary family generated by $U_\theta=e^{-i\theta \hat n}$, the
pure-state quantum Fisher information is
\begin{equation}
  \mathcal F_Q^{\mathrm{DSVS}}(\alpha,\zeta)
  =
  4\,\mathrm{Var}_{\ket{\alpha,\zeta}}(\hat n).
\end{equation}
For a displaced squeezed state
$\ket{\alpha,\zeta}=D(\alpha)S(\zeta)\ket{0}$, with
$\alpha=|\alpha|e^{i\phi}$ and $\zeta=re^{i\psi}$, a standard Gaussian moment
calculation gives
\begin{equation}
\begin{split}
  \mathrm{Var}_{\ket{\alpha,\zeta}}(\hat n)
  &=
  \frac12\sinh^2(2r)  \\
  &\quad+
  |\alpha|^2
  \left[
    \cosh(2r)-\sinh(2r)\cos\chi
  \right],
\end{split}
\end{equation}
where $\chi=2\phi-\psi$. Therefore,
\begin{equation}
\begin{split}
  \mathcal F_Q^{\mathrm{DSVS}}(\alpha,\zeta)
  &=
  2\sinh^2(2r)  \\
  &\quad+
  4|\alpha|^2
  \left[
    \cosh(2r)-\sinh(2r)\cos\chi
  \right],
\end{split}
\end{equation}
which proves Eq.~\eqref{eq:qfi_dsvs_full_rewrite}.

We now compute the heterodyne Fisher information. Under the phase shift, the
displacement phase and squeezing phase transform as
$\phi\mapsto\phi-\theta$ and $\psi\mapsto\psi-2\theta$, so the relative phase
$\chi=2\phi-\psi$ is invariant. Hence the Fisher information is independent of
the value of $\theta$, and we may evaluate Eq.~\eqref{eq:fi_heterodyne} at
$\theta=0$.

The contribution from the parameter dependence of the first moment is
\begin{equation}
\begin{split}
  &\left.
  (\partial_\theta d_\theta)^\top
  \!\left(V_\theta+\tfrac12\mathbb I_2\right)^{-1}
  (\partial_\theta d_\theta)
  \right|_{\theta=0}  \\
  &\qquad =
  2|\alpha|^2
  \left(
    1-\tanh r\,\cos\chi
  \right),
\end{split}
\end{equation}
while the contribution from the parameter dependence of the covariance matrix is
\begin{equation}
\begin{split}
  &\left.
  \frac12
  \Tr\!\left[
    \left\{
      \left(V_\theta+\tfrac12\mathbb I_2\right)^{-1}
      \partial_\theta V_\theta
    \right\}^2
  \right]
  \right|_{\theta=0}  \\
  &\qquad =
  4\sinh^2 r .
\end{split}
\end{equation}
Substituting these two terms into Eq.~\eqref{eq:fi_heterodyne} gives
\begin{equation}
  \mathcal I_{\mathrm{het}}^{\mathrm{DSVS}}(\alpha,\zeta)
  =
  4\sinh^2 r
  +
  2|\alpha|^2
  \left(
    1-\tanh r\,\cos\chi
  \right),
\end{equation}
which proves Eq.~\eqref{eq:cfi_dsvs_full_rewrite}.
\end{proof}

Proposition~\ref{prop:dsvs_fisher} shows that displaced squeezed probes can
increase the local information available in Stage~I relative to coherent states.
In the coherent-state limit $r=0$, the classical Fisher information reduces to
$\mathcal I_{\mathrm{het}}=2|\alpha|^2$ and the QFI to $\mathcal
F_Q^{\mathrm{DSVS}}(\alpha,\zeta)=4|\alpha|^2$. However, for $r>0$, the
information is not fixed by the total energy alone. This dependence appears
through the factors $\cosh(2r)-\sinh(2r)\cos\chi$ in the QFI and $1-\tanh
r\,\cos\chi$ in the Fisher information of heterodyne measurement. Thus, for
fixed $|\alpha|$ and $r$, the displacement contribution is largest when
$\cos\chi=-1$ and smallest when $\cos\chi=1$.

It is useful to express these quantities at fixed Stage~I probe energy. Since
$E=|\alpha|^2+\sinh^2 r,$ the admissible squeezing range is $0\le r\le
\operatorname{arsinh}\sqrt{E}$, with $|\alpha|^2=E-\sinh^2 r$. In this
parametrization, the Fisher information of heterodyne measurement becomes
\begin{equation}
\begin{split}
  \mathcal I_{\mathrm{het}}^{(\chi)}(r;E)
  &=
  4\sinh^2 r  \\
  &\quad+
  2\left(E-\sinh^2 r\right)
  \left(1-\tanh r\,\cos\chi\right),
\end{split}
\label{eq:cfi_dsvs_fixed_energy}
\end{equation}
whereas the QFI is
\begin{equation}
\begin{split}
  \mathcal F_Q^{(\chi)}(r;E)
  &=
  2\sinh^2(2r)  \\
  &\quad+
  4\left(E-\sinh^2 r\right) \\
  &\qquad\times
  \left[
    \cosh(2r)-\sinh(2r)\cos\chi
  \right].
\end{split}
\label{eq:qfi_dsvs_fixed_energy}
\end{equation}

The endpoint $r=0$ corresponds to coherent states as probes. In this case all
the Stage~I probe energy is carried by the displacement, and $\mathcal
I_{\mathrm{het}}^{(\chi)}(0;E)=2E,$ and $\mathcal F_Q^{(\chi)}(0;E)=4E_1.$ At
the opposite endpoint, $r=\operatorname{arsinh}\sqrt{E}$ and $|\alpha|=0$, all
the energy is carried by squeezing. This gives a squeezed-vacuum probe, for
which Fisher information is $\mathcal I_{\mathrm{het}}^{(\chi)}=4E,$ and the
QFI is $\mathcal F_Q^{(\chi)}=8E(E+1),$ with the relative phase $\chi$
becoming irrelevant because the displacement vanishes. This squeezed-vacuum
endpoint is useful as a local benchmark, but it is not suitable for coarse
localization over the full circle in Stage~I. Indeed, when $|\alpha|=0$, the
phase-encoded family has a $\pi$-rotation symmetry (see
Sec.~\ref{subsec:sensitivity}), so phases separated by $\pi$ cannot be
distinguished. Thus, displaced squeezed probes exhibit a tradeoff between the
local gain obtained from squeezing and the global identifiability provided by a
nonzero displacement.

Figure~\ref{fig:cfi_dsvs} shows the (local) gain for displaced squeezed probes
for a fixed energy $E=\sinh^2(1.4)$: the classical Fisher information with
heterodyne in Fig.~\ref{fig:cfi_dsvs}(a), and the QFI in
Fig.~\ref{fig:cfi_dsvs}(b). The value $r=1.4$ corresponds to a squeezing of
$\approx12.2\,\mathrm{dB}$, which is experimentally realistic for optical
squeezing \cite{vahlbruch2016detection}. Note that varying $r$ redistributes
energy in Stage~I between displacement and squeezing. The curves for $\chi=\pi$
and $\chi=0$ correspond to the two extremal phase matching conditions, with
$\chi=\pi$ maximizing and $\chi=0$ minimizing the local information.
Specifically, for $\chi=\pi$ displaced squeezed states show an advantage over
the coherent-state benchmark over a broad range of $r$, while retaining a
nonzero displacement and hence full-period identifiability. On the other hand,
for $\chi=0$, the information can instead fall below the coherent-state
benchmark, even for $r\neq0$. Thus, squeezing alone is not sufficient to produce
a local advantage, and the relative phase must also be chosen appropriately.
\begin{figure}[t]
  \centering

  \begin{minipage}{0.48\textwidth}
    \centering
    \includegraphics[width=\textwidth]{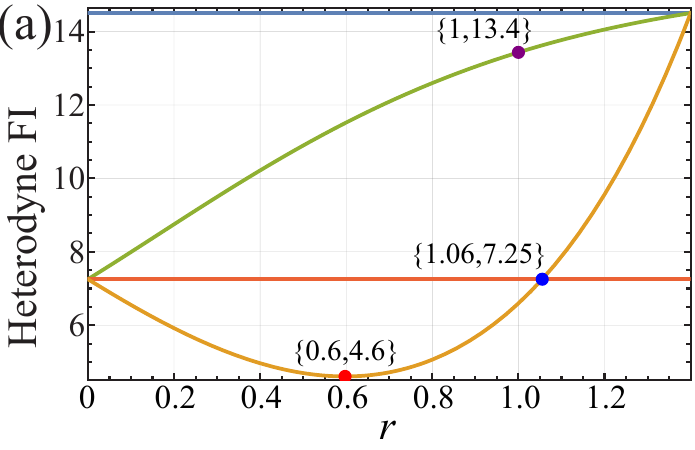}

  \end{minipage}
  \hfill
  \begin{minipage}{0.48\textwidth}
    \centering
    \includegraphics[width=\textwidth]{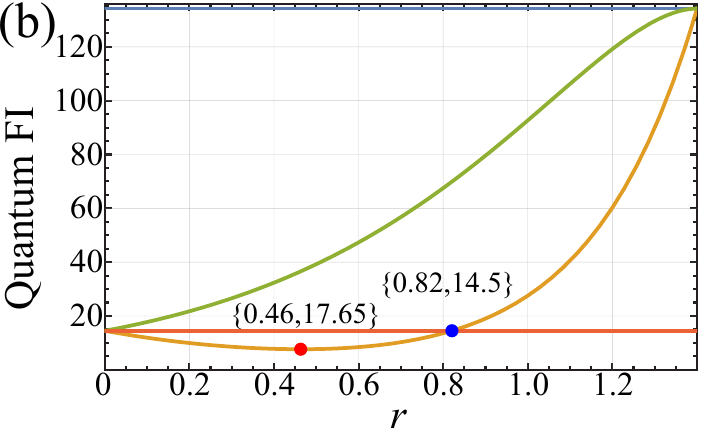}

  \end{minipage}

  \caption{\label{fig:cfi_dsvs} Local information for displaced squeezed probes
    under the fixed Stage~I energy budget $E=\sinh^2(1.4)$. Panel~(a) shows
    the heterodyne Fisher information $\mathcal
    I_{\mathrm{het}}^{(\chi)}(r;E)$, and panel~(b) shows the quantum Fisher
    information $\mathcal F_Q^{(\chi)}(r;E)$. As $r$ varies, the total energy
    is fixed while the allocation between displacement and squeezing changes.
    The red and blue horizontal curves mark the coherent endpoint $r=0$ and the
    squeezed-vacuum endpoint $|\alpha|=0$, respectively. The curves $\chi=\pi$
    and $\chi=0$ show constructive and destructive phase matching.}
\end{figure}

We note that these local bounds quantify the information in the neighborhood of
a fixed phase. They do not imply, by themselves, improved finite-sample
localization on $[0,2\pi)$. Although the condition $|\alpha|>0$ allows for phase
identifiability, it does not guarantee that the MLE in Stage~I has smaller error
compared to the coherent-state benchmark. We therefore realize the finite-sample
analysis of the Stage~I error distribution in the next subsection.

\subsection{Likelihood function for heterodyne measurement and MLE error
  distribution}
\label{subsec:dsvs_phase_law}

%We analyze the probability distribution of the localization error in Stage~I for displaced
%squeezed probes.
In contrast to the case with coherent states described in Section
\ref{sec:coherent_init}, displaced squeezed probes yield a heterodyne outcome
distribution with covariance that depends on the unknown phase. The likelihood
therefore depends not only on the sample mean, but also on second-order
information in the data. Consequently, for $N_1>1$ there is in general no
reduction to a single effective heterodyne observation of the same form as in
the coherent-state case. To analyze this problem, we first treat the one-probe
case $N_1=1$, where the geometry of the likelihood gives an integral
representation for the error density.

Consider a displaced squeezed probe $\ket{\alpha_1,\zeta_1}$, with
$\alpha_1=|\alpha_1|e^{i\phi_1}$ and $\zeta_1=r_1e^{i\psi_1}$. Let $Z\in\mathbb
R^2$ be the heterodyne measurement outcome. As recalled in
Section~\ref{sec:preliminaries}, $Z$ is Gaussian with mean $d_\theta$ and
covariance $V_\theta+\frac12\mathbb I_2$. To describe the error distribution, we
rotate the outcome by the true displacement direction. Thus, for the purpose of
the analysis, set $Y = \mathsf R\!\left(-(\theta+\phi_1)\right)Z.$ Then $Y$ is
Gaussian with mean
\begin{equation}
  \mu_1 e_1,
  \qquad
  \mu_1=\sqrt{2}\,|\alpha_1|,
  \qquad
  e_1=(1,0)^\top,
\end{equation}
and covariance
\begin{equation}
  \Sigma_{\delta_1}
  =
  \mathsf R(\delta_1)
  \begin{pmatrix}
    \lambda_{1,-} & 0\\
    0 & \lambda_{1,+}
  \end{pmatrix}
  \mathsf R(\delta_1)^\top .
  \label{eq:dsvs_aligned_covariance}
\end{equation}
Here $\delta_1=\frac{\psi_1}{2}-\phi_1=-\frac{\chi_1}{2},$ with
$\chi_1=2\phi_1-\psi_1,$. Then, $\delta_1$ is the angle between the displacement
direction and the principal axes of the covariance ellipse, whose variances are
\begin{equation}
  \lambda_{1,\pm}
  =
  \frac{1+e^{\pm 2r_1}}{2}.
  \label{eq:dsvs_heterodyne_eigenvalues}
\end{equation}

We then can write the aligned outcome in polar coordinates as
$Y=S_Y\,v(\Gamma),$ $v(\gamma)=(\cos\gamma,\sin\gamma)^\top,$ $S_Y\ge 0,$ where
$\Gamma\in(-\pi,\pi].$ Here, $S_Y=\|Y\|$ is the radial coordinate of the
observed heterodyne measurement outcome in the aligned frame of reference (note
that it should not be confused with the squeezing strength $r_1$). The joint
density of $(S_Y,\Gamma)$ with respect to $ds\,d\gamma$ is
\begin{equation}
\begin{split}
  f_{S_Y,\Gamma}(s,\gamma)
  &=
  \frac{s}{2\pi\sqrt{\det\Sigma_{\delta_1}}}
  \exp\!\left[
    -\frac12
    \left(s\,v(\gamma)-\mu_1e_1\right)^\top
    \Sigma_{\delta_1}^{-1} \right.  \\
  &\hspace{5.4em}\times
  \left.  \left(s\,v(\gamma)-\mu_1e_1\right)
  \right].
\end{split}
\label{eq:dsvs_joint_polar_density}
\end{equation}
for $s\ge 0$ and $\gamma\in(-\pi,\pi]$.

%The geometry of this procedure is shown in Figure~\ref{fig:dsvs_geometry}.
When the likelihood is evaluated at a candidate phase $\vartheta$, the aligned
observation is rotated by the phase error $\vartheta-\theta$. Thus, the
likelihood depends on $\vartheta$ through $\eta=\Gamma-(\vartheta-\theta).$ For
a fixed observed radius $S_Y=s$, the maximization over the candidate phase is
equivalent to maximizing the likelihood over the angular mismatch
$\eta\in(-\pi,\pi]$. The restriction of the log-likelihood to the circle of
radius $s$, up to an additive constant, is
\begin{equation}
  \Lambda_s(\eta)
  =
  -\frac12
  \left(s\,v(\eta)-\mu_1e_1\right)^\top
  \Sigma_{\delta_1}^{-1}
  \left(s\,v(\eta)-\mu_1e_1\right).
  \label{eq:dsvs_radial_profile}
\end{equation}
Let $\eta_\ast(s)$ be a measurable maximizing branch,
\begin{equation}
  \eta_\ast(s)\in
  \arg\max_{\eta\in(-\pi,\pi]} \Lambda_s(\eta).
  \label{eq:dsvs_optimal_offset_branch}
\end{equation}
Then, the signed wrapped error of the one-probe MLE is
\begin{equation}
  \varepsilon_1
  =
  \Arg\!\left(e^{i(\hat\theta_1-\theta)}\right)
  =
  \Arg\!\left(e^{i(\Gamma-\eta_\ast(S_Y))}\right),
  %\qquad
  %\varepsilon_1\in(-\pi,\pi].
  \label{eq:dsvs_mle_error_from_branch}
\end{equation}
If $\Gamma-\eta_\ast(S_Y)$ is already taken on the principal branch, this is
equivalently written as $\varepsilon_1=\Gamma-\eta_\ast(S_Y)$ , with the
corresponding circular distance
$d\left(\hat\theta_1,\theta\right)=|\varepsilon_1|.$ Thus, unlike in the case
with coherent states in Section \ref{sec:coherent_init}, the angular correction
is not constant. It depends on the observed radius $S_Y$. This dependence is a
finite-sample effect of the anisotropy of the heterodyne likelihood.

The following proposition formalizes this geometric description as the density
of the one-probe signed error, which is required for calculating the coverage
and overshoot terms in Stage~I for displaced squeezed probes.

\begin{proposition}[One-probe error density for displaced squeezed probes and
  heterodyne measurements]
\label{prop:dsvs_error_integral_representation}
The signed wrapped error $\varepsilon_1 =
\Arg\!\left(e^{i(\hat\theta_1-\theta)}\right) \in(-\pi,\pi]$ has probability
density
\begin{equation}
  f_{\varepsilon}^{\mathrm{DSVS}}(\varepsilon)
  =
  \int_0^\infty
  f_{S_Y,\Gamma}\!\left(
    s,\,
    \varepsilon+\eta_\ast(s)
  \right)\,ds,
  %\qquad
  %\varepsilon\in(-\pi,\pi],
  \label{eq:dsvs_error_density_integral}
\end{equation}
where the angular argument is understood modulo $2\pi$ and represented in
$(-\pi,\pi]$.
\end{proposition}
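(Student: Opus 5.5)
The density will follow from a change of variables from the polar pair $(S_Y,\Gamma)$ to $(S_Y,\varepsilon_1)$, followed by marginalization over the radius. The starting point is the joint density $f_{S_Y,\Gamma}$ in Eq.~\eqref{eq:dsvs_joint_polar_density}. This is the standard polar transform of the Gaussian density of the aligned outcome $Y$, with Jacobian $s$. The pair $(S_Y,\Gamma)$ is well defined almost surely, since $\{Y=0\}$ has Lebesgue measure zero. By Eq.~\eqref{eq:dsvs_mle_error_from_branch}, the error is a deterministic function of this pair:
\[
\varepsilon_1=\Arg\!\left(e^{i(\Gamma-\eta_\ast(S_Y))}\right).
\]
Two preliminary facts are needed. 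First, the maximizer over the candidate phase depends on the observation only through $\eta$, because the rotation by the true phase has already been absorbed into the aligned frame. Second, $\eta_\ast$ is a measurable function of $s$ alone, as fixed in Eq.~\eqref{eq:dsvs_optimal_offset_branch}. A measurable selection exists because $\Lambda_s$ is jointly continuous in $(s,\eta)$ and $\eta$ ranges over a compact set, so a measurable selection theorem applies. Ties occur only on a null set of radii.

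Next, I will fix $s$ and consider the map $T_s:\gamma\mapsto \gamma-\eta_\ast(s) \pmod{2\pi}$ on the circle. It is a rotation, hence a measure-preserving bijection of $(-\pi,\pi]$ with unit Jacobian. The joint map $(s,\gamma)\mapsto(s,T_s(\gamma))$ is therefore a measurable bijection of $[0,\infty)\times\mathbb S^1$. Its inverse is $(s,\varepsilon)\mapsto(s,\varepsilon+\eta_\ast(s))$, and it preserves $ds\,d\gamma$. This follows from Fubini, because each $s$-slice is only translated. For any Borel set $A\subset(-\pi,\pi]$ I will then compute
\[
\mathbb P(\varepsilon_1\in A)=\int_0^\infty\!\int \mathbf 1_A\big(T_s(\gamma)\big)f_{S_Y,\Gamma}(s,\gamma)\,d\gamma\,ds .
\]
Substituting $\gamma=\varepsilon+\eta_\ast(s)$ in the inner integral gives $\int_A\int_0^\infty f_{S_Y,\Gamma}(s,\varepsilon+\eta_\ast(s))\,ds\,d\varepsilon$, and this identifies the density in Eq.~\eqref{eq:dsvs_error_density_integral}. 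Periodicity of $f_{S_Y,\Gamma}$ in $\gamma$ justifies reading the angle modulo $2\pi$.

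The main obstacle is measurability and the a.s. uniqueness of the maximizing branch. Other branches change $\varepsilon_1$ only on a null set, so the density is unaffected. I will handle this as follows. Expanding Eq.~\eqref{eq:dsvs_radial_profile} gives
\[
\Lambda_s(\eta)=a(s)\cos\big(2(\eta-\delta_1)\big)+b\,s\cos\eta+c\,s\sin\eta+\text{const}.
\]
This is a trigonometric polynomial, and for each $s$ it has finitely many global maximizers. Non-uniqueness forces an analytic equation in $s$, which holds only on a countable set when $r_1>0$ or $\mu_1>0$. That set has measure zero under $ds$, so it does not affect the integral. Nonnegativity and Tonelli justify exchanging the integrals throughout.
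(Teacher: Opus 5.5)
Your argument is essentially the paper's. For fixed radius $s$, the map $\gamma\mapsto\gamma-\eta_\ast(s)\pmod{2\pi}$ is a rotation of the circle, so the angular density is only shifted, and integrating over $s$ gives Eq.~\eqref{eq:dsvs_error_density_integral}. The paper writes this with the conditional density $f_{\Gamma\mid S_Y}$ multiplied by $f_{S_Y}$. You work directly with the joint density and Tonelli, which is equivalent and avoids conditional densities. Your measurable-selection remark also supplies what the paper simply assumes in Eq.~\eqref{eq:dsvs_optimal_offset_branch}.

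Your final paragraph, however, is false, and you should delete it rather than try to prove it. Ties are not confined to a null set of radii, even when $\mu_1>0$ and $r_1>0$. (For $\mu_1=0$ it fails trivially, since then $\Lambda_s$ is $\pi$-periodic.) Take $\chi_1=0$, so $\delta_1=0$ and $\Sigma_{\delta_1}=\mathrm{diag}(\lambda_{1,-},\lambda_{1,+})$. With $x=\cos\eta$,
\[
\Lambda_s(\eta)=-\frac{s^2}{2}\Bigl(\frac{1}{\lambda_{1,-}}-\frac{1}{\lambda_{1,+}}\Bigr)x^2+\frac{s\mu_1}{\lambda_{1,-}}\,x+\text{const},
\]
which is a concave quadratic in $x$ with vertex $x^\ast=\mu_1\lambda_{1,+}/[s\sinh(2r_1)]$. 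Consider every $s>\mu_1\lambda_{1,+}/\sinh(2r_1)$; this range has positive probability. There the global maximizers are the two distinct points $\eta=\pm\arccos x^\ast$. Your ``analytic equation in $s$'' holds identically on that range because $\Lambda_s$ is even in $\eta$.

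Choosing $+\arccos x^\ast$ rather than $-\arccos x^\ast$ on that range produces error densities that are mirror images of each other and do not coincide. So the law of $\varepsilon_1$ genuinely depends on the branch, contrary to your claim that other branches change $\varepsilon_1$ only on a null set. None of this affects the proposition. It is stated for one fixed measurable branch $\eta_\ast$, with $\varepsilon_1$ defined through that same branch, and your change of variables uses only the measurability of $\eta_\ast$, never uniqueness.
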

The proof is given in Appendix~\ref{app:dsvs_error_density}. 

In the coherent-state limit $r_1=0$, one has $\lambda_{1,-}=\lambda_{1,+}=1$,
and hence $\Sigma_{\delta_1}=\mathbb I_2$. The restricted likelihood
Eq.~\eqref{eq:dsvs_radial_profile} is then maximized at $\eta_\ast(s)=0$ for
every $s>0$; at $s=0$, any maximizing branch is equivalent, so we choose
$\eta_\ast(0)=0$. Therefore, Eq.~\eqref{eq:dsvs_error_density_integral} reduces
to the Rician phase density in Proposition~\ref{prop:rician_phase_law}, with
$E_{\mathrm{coh}}=|\alpha_1|^2$. The details of this reduction are given in
Appendix~\ref{app:dsvs_coherent_limit}. The density in
Proposition~\ref{prop:dsvs_error_integral_representation} determines the
one-probe analogues of the Stage~I coverage and overshoot terms. Since the
Stage~I window has length $\pi/2$, we write its half-width as $\delta=\pi/4$.
The one-probe coverage probability is
\begin{equation}
  C_{\mathrm{DSVS}}^{(1)} =
  \mathbb P_\theta\!\left(|\varepsilon_1|\le \delta\right) =
  \int_{-\delta}^{\delta}
  f_{\varepsilon}^{\mathrm{DSVS}}(\varepsilon)\,d\varepsilon,
  \label{eq:dsvs_exact_coverage}
\end{equation}
and the one-probe overshoot contribution is
\begin{equation}
  \begin{split}
    G_{\mathrm{DSVS}}^{(1)} &= \mathbb E_\theta\!\left[
                              \left(|\varepsilon_1|-\delta\right)_+^2 \right] \\ &=
                                                                                   \int_{|\varepsilon|>\delta} \left(|\varepsilon|-\delta\right)^2
                                                                                   f_{\varepsilon}^{\mathrm{DSVS}}(\varepsilon)\,d\varepsilon.
                                                            \end{split}
                                                            \label{eq:dsvs_exact_overshoot}
                                                          \end{equation}
                                                          These quantities are
                                                          the one-probe
                                                          displaced-squeezed
                                                          counterparts of the
                                                          coherent-probe
                                                          coverage and overshoot
                                                          functions entering the
                                                          generalized two-stage
                                                          bound.

                                                          For $N_1>1$, the same
                                                          construction defines a
                                                          probability measure on
                                                          the signed error
                                                          space, but no closed
                                                          one-dimensional
                                                          density is available
                                                          in general. Let
                                                          $Z=(Z_1,\dots,Z_{N_1})\in(\mathbb
                                                          R^2)^{N_1}$ denote the
                                                          heterodyne measurement
                                                          outcomes, and let
                                                          $\ell_{N_1}(\vartheta;Z)$
                                                          be the sample
                                                          log-likelihood. After
                                                          choosing a measurable
                                                          maximizing branch, the
                                                          MLE in Stage~I is a
                                                          map $T_{N_1}:(\mathbb
                                                          R^2)^{N_1}\to[0,2\pi),
                                                          $ defined by
\begin{equation}
  T_{N_1}(Z)\in
  \arg\max_{\vartheta\in[0,2\pi)}
  \ell_{N_1}(\vartheta;Z).
  \label{eq:dsvs_multishot_mle_map}
\end{equation}
The signed wrapped error in Stage~I is then the measurable map
\begin{equation}
  \varepsilon_1(Z)
  =
  \Arg\!\left(e^{i(T_{N_1}(Z)-\theta)}\right)
  \in(-\pi,\pi].
  \label{eq:dsvs_multishot_signed_error}
\end{equation}
The corresponding circular distance is
$d\left(T_{N_1}(Z),\theta\right)=|\varepsilon_1(Z)|.$ If $P_\theta^{(N_1)}$
denotes the joint probability measure of the sample $Z$, then the probability
distribution of $\varepsilon_1$ is the pushforward measure
$(\varepsilon_1)_\#P_\theta^{(N_1)}$ \cite{JimenezNaszodiVilla2014}. Thus, for
every Borel set $B\subset(-\pi,\pi]$,
\begin{multline}
  \mathbb{P}_\theta(\varepsilon_1\in B)
  =
  P_\theta^{(N_1)}
  \!\left(
    \left\{
      z\in(\mathbb{R}^2)^{N_1}:
    \right.\right.
    \\
    \left.\left.
      \Arg\!\left(e^{i(T_{N_1}(z)-\theta)}\right)\in B
    \right\}
  \right).
  \label{eq:dsvs_pushforward_error_measure}
\end{multline}
If $P_\theta^{(N_1)}$ has density $p_\theta^{(N_1)}$ with respect to Lebesgue
measure on $(\mathbb R^2)^{N_1}$, then equivalently
\begin{equation}
  \mathbb P_\theta(\varepsilon_1\in B)
  =
  \int_{(\mathbb R^2)^{N_1}}
  \mathbf 1_{\left\{
    \Arg(e^{i(T_{N_1}(z)-\theta)})\in B
  \right\}}
  p_\theta^{(N_1)}(z)\,dz.
  \label{eq:dsvs_pushforward_error_density_integral}
\end{equation}
Consequently, when Stage~I uses $N_1>1$ displaced squeezed probes, the Stage~I
coverage and overshoot terms are still computed from the signed MLE error, with
$C_{\mathrm{DSVS}}^{(N_1)} =\mathbb P_\theta(|\varepsilon_1|\le \pi/4)$ and
$G_{\mathrm{DSVS}}^{(N_1)} =\mathbb E_\theta[(|\varepsilon_1|-\pi/4)_+^2]$. In
this case, however, the probability distribution of $\varepsilon_1$ is
determined by the full $N_1$-sample MLE map $T_{N_1}$ in
Eq.~\eqref{eq:dsvs_multishot_mle_map}, applied to the heterodyne sample
$Z=(Z_1,\ldots,Z_{N_1})$. These quantities must therefore be evaluated from the
full heterodyne likelihood, rather than from a closed one-dimensional error
density as in the one-probe case.

\subsection{Sample and energy complexity for displaced squeezed probes in Stage~I}
\label{subsec:dsvs_sample_complexity}

We investigate if displaced squeezing can reduce the resources required for the
coarse localization in Stage~I within an identifiable window of width $\pi/2$
relative to coherent-state probes with heterodyne detection. For this study, we
note that there are two notions of resource. The first is the number of probe
states in Stage~I required to reach a targeted coverage probability. We refer to
this as the Stage~I sample complexity. The second is the total energy spent in
Stage~I. These notions need not give the same ordering, because a displaced
squeezed probe has energy $|\alpha_1|^2+\sinh^2 r_1,$ whereas a coherent-state
with the same displacement amplitude has energy $|\alpha_1|^2$.

Throughout this subsection we take the phase matching to be $\chi_1=\pi$. By
Proposition~\ref{prop:dsvs_fisher}, this choice maximizes the heterodyne Fisher
information for fixed $(|\alpha_1|,r_1)$. The comparison below therefore gives
the most favorable local heterodyne benchmark for displaced squeezed probes in
Stage~I. It does not assert optimality over all possible measurements in
Stage~I; rather, it isolates the effect of changing the family of probe states
under heterodyne measurements.

As a first step, we consider the one-probe regime. For $N_1=1$, the integral
representation in Proposition~\ref{prop:dsvs_error_integral_representation}
gives the probability that the selected window from Stage~I contains the true
phase:
\begin{equation}
  C_{\mathrm{DSVS}}^{(1)}(|\alpha_1|,r_1)
  =
  \mathbb P_\theta\!\left(|\varepsilon_1|\le \frac{\pi}{4}\right)
  =
  \int_{-\pi/4}^{\pi/4}
  f_{\varepsilon}^{\mathrm{DSVS}}(\varepsilon)\,d\varepsilon .
  \label{eq:dsvs_one_shot_coverage}
\end{equation}
Here $\varepsilon_1$ denotes the signed wrapped Stage~I error, and
$|\varepsilon_1|$ is the corresponding circular distance. For a target coverage
level $c\in(0,1)$, define
\begin{equation}
  |\alpha_1|_{\min}^{(1)}(c;r_1)
  =
  \inf\left\{
    a\ge 0:
    C_{\mathrm{DSVS}}^{(1)}(a,r_1)\ge c
  \right\},
  \label{eq:dsvs_one_shot_alpha_threshold}
\end{equation}
and
\begin{equation}
  E_{1,\mathrm{DSVS}}^{\min,(1)}(c;r_1)
  =
  \bigl(|\alpha_1|_{\min}^{(1)}(c;r_1)\bigr)^2+\sinh^2 r_1 .
  \label{eq:dsvs_one_shot_energy_threshold}
\end{equation}
Thus $E_{1,\mathrm{DSVS}}^{\min,(1)}(c;r_1)$ is the minimum energy of a single
displaced squeezed probe that attains a coverage level at least $c$. For
coherent-state probes, the corresponding one-probe threshold is
\begin{equation}
  E_{1,\mathrm{coh}}^{\min,(1)}(c)
  =
  \rho_{\min}(c)^2,
  \label{eq:coh_one_shot_energy_threshold}
\end{equation}
where $\rho_{\min}(c) = \inf\{\rho\ge 0:\ C_{\mathrm{coh}}(\rho^2)\ge c\}$ is
the threshold for coherent states introduced in Section~\ref{sec:coherent_init}.

Figure~\ref{fig:dsvs_stage1_thresholds}(a) shows the resulting energy
thresholds for single probes. In the numerical range considered, we observe that displaced
squeezed states do not reduce the one-probe energy required for coarse
localization, as the corresponding thresholds lie above the curve
for coherent states. The local gain in Fisher information does not automatically translate into a
finite-energy gain for global localization. We conclude then that in one shot,
the estimator must resolve the phase over a nonlocal window, and the additional
curvature of the likelihood is not sufficient to compensate for the energy spent
in squeezing $\sinh^2 r_1$.

We next consider repeated observations. For $N_1>1$, the MLE in Stage~I is computed
from the full heterodyne sample. For fixed $(|\alpha_1|,r_1)$, let
$\varepsilon_1 = \Arg\!\left(e^{i(\hat\theta_1-\theta)}\right) \in(-\pi,\pi]$ be
the signed wrapped error of the $N_1$-sample MLE. We define
\begin{equation}
  C_{\mathrm{DSVS}}^{(N_1)}(|\alpha_1|,r_1)
  =
  \mathbb P_\theta\!\left(
    |\varepsilon_1|\le \frac{\pi}{4}
  \right),
  \label{eq:dsvs_multishot_coverage}
\end{equation}
where the probability is taken under the $N_1$-sample heterodyne model. The
Stage~I sample complexity, as defined by number of
probes used, is
\begin{equation}
  N_{1,\mathrm{DSVS}}^{\min}(c;r_1,|\alpha_1|)
  =
  \min\left\{
    N_1\in\mathbb N:
    C_{\mathrm{DSVS}}^{(N_1)}(|\alpha_1|,r_1)\ge c
  \right\}.
  \label{eq:dsvs_multishot_sample_complexity}
\end{equation}
For
coherent-state probes with the same single-probe amplitude $|\alpha_1|$, the
corresponding benchmark is
\begin{equation}
  N_{1,\mathrm{coh}}^{\min}(c;|\alpha_1|)
  =
  \left\lceil
  \frac{\rho_{\min}(c)^2}{|\alpha_1|^2}
  \right\rceil .
  \label{eq:coh_stage1_sample_complexity}
\end{equation}

Figure~\ref{fig:dsvs_stage1_thresholds}(b) shows the finite-sample
complexity at fixed $|\alpha_1|=0.35$. In contrast to the one-probe energy
comparison, displaced squeezed states can reduce the number of probes required
to attain a prescribed coverage level. The reduction is most visible for
moderate squeezing. This behavior is consistent with the local Fisher
information calculation, since repeated observations concentrate the likelihood
near the true phase. However, the effect is not monotone in $r_1$. Increasing
the squeezing strength beyond a moderate range does not necesarily reduce the
required number of probes. Thus displaced squeezing can reduce the number of
Stage~I observations, but only in an intermediate range of squeezing strengths.
\begin{figure}[t]
  \centering
  \begin{minipage}{0.49\textwidth}
    \centering
    \includegraphics[width=\textwidth]{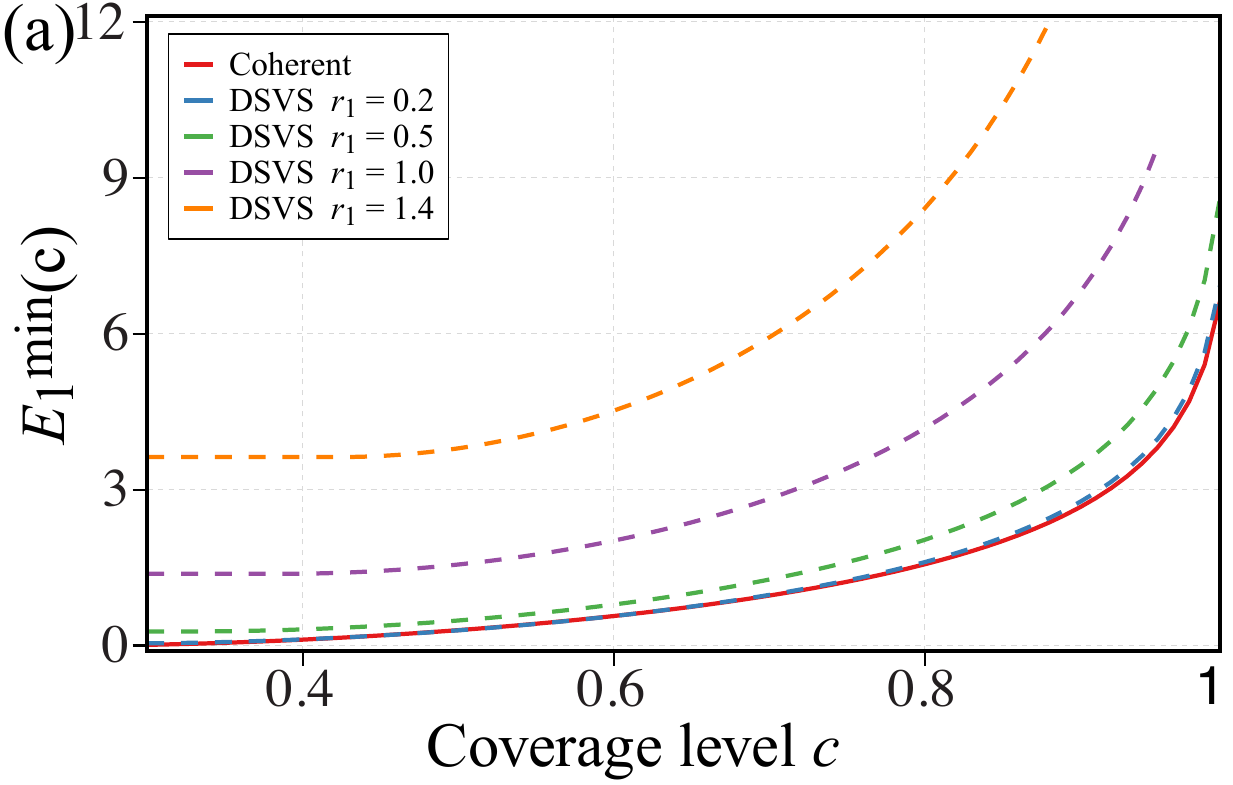}
    \vspace{-0.5em}
  \end{minipage}
  \hfill
  \begin{minipage}{0.49\textwidth}
    \centering
    \includegraphics[width=\textwidth]{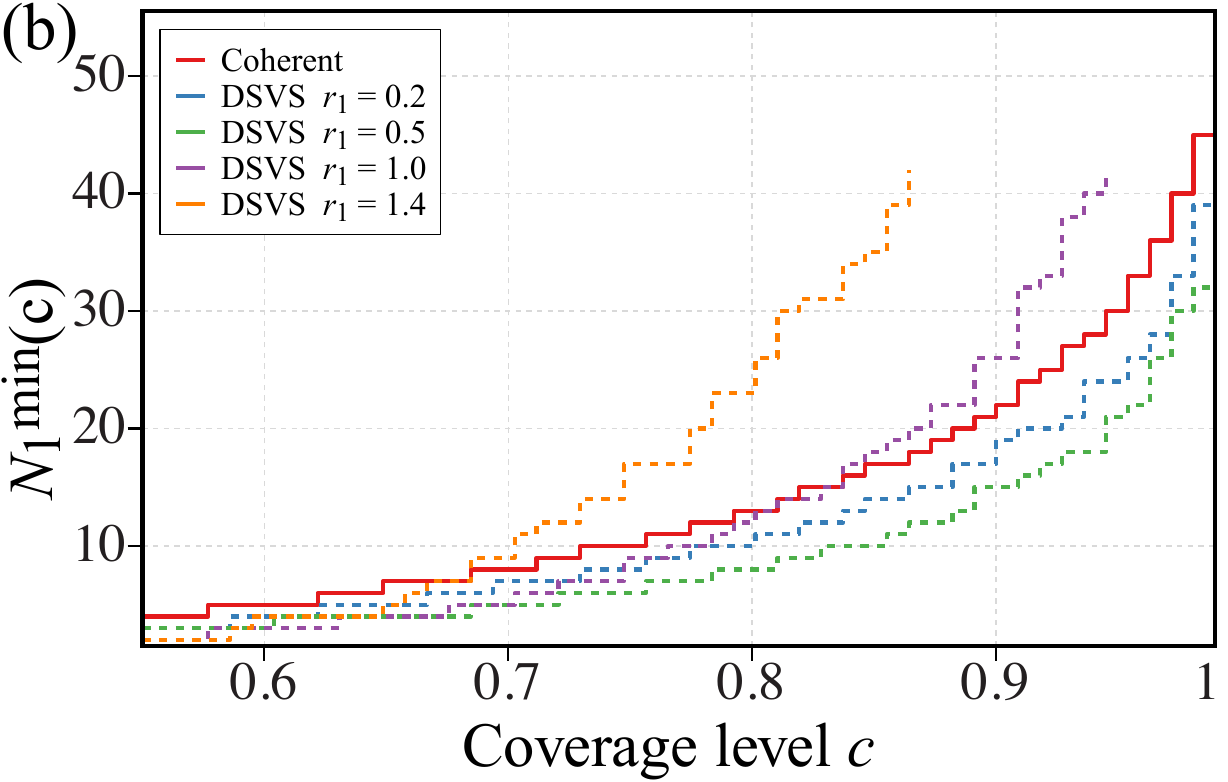}
    \vspace{-0.5em}

  \end{minipage}
  \caption{\label{fig:dsvs_stage1_thresholds} Thresholds in Stage~I for
    displaced squeezed probes with heterodyne measurement under phase matching
    $\chi_1=\pi$. Panel~(a) shows the one-probe energy threshold
    $E_{1,\mathrm{DSVS}}^{\min,(1)}(c;r_1)$. The threshold for coherent states
    $E_{1,\mathrm{coh}}^{\min,(1)}(c)$ is shown for comparison. Note that within
    this range, displaced squeezed states require higher energies to reach the
    target coverage. Panel~(b) shows the Stage~I sample complexity
    $N_{1,\mathrm{DSVS}}^{\min}(c;r_1,|\alpha_1|)$ at fixed $|\alpha_1|=0.35$,
    compared with the coherent-state probe count
    $N_{1,\mathrm{coh}}^{\min}(c;|\alpha_1|)$. Moderate squeezing can reduce the
    number of probes, although the dependence on $r_1$ is not monotone.}
\end{figure}
Moreover, a reduction in the number of probes is not, by itself, an energy
advantage. At fixed $|\alpha_1|$, the total energy in Stage~I needed to reach
coverage $c$ is
\begin{equation}
  E_{1,\mathrm{tot}}^{\min}(c;r_1,|\alpha_1|)
  =
  N_{1,\mathrm{DSVS}}^{\min}(c;r_1,|\alpha_1|)
  \bigl(|\alpha_1|^2+\sinh^2 r_1\bigr).
  \label{eq:dsvs_total_stage1_energy}
\end{equation}
while coherent-state counterpart is
\begin{equation}
  E_{1,\mathrm{tot,coh}}^{\min}(c;|\alpha_1|)
  =
  N_{1,\mathrm{coh}}^{\min}(c;|\alpha_1|)\,|\alpha_1|^2 .
  \label{eq:coh_total_stage1_energy}
\end{equation}

Figure~\ref{fig:dsvs_total_stage1_energy} shows the total energy required in
Stage~I for the same comparison as in
Figure~\ref{fig:dsvs_stage1_thresholds}(b). Although displaced squeezed probes
may require fewer observations, the energy saved by using fewer probes is not
enough to compensate for the required energy for generating squeezing in each
probe. We note that over the range of parameters studied here, the
coherent-state benchmark remains the least costly option in terms of total
energy in Stage~I to realize the coarse localization.

\begin{figure}[t]
  \centering
  \includegraphics[width=\columnwidth]{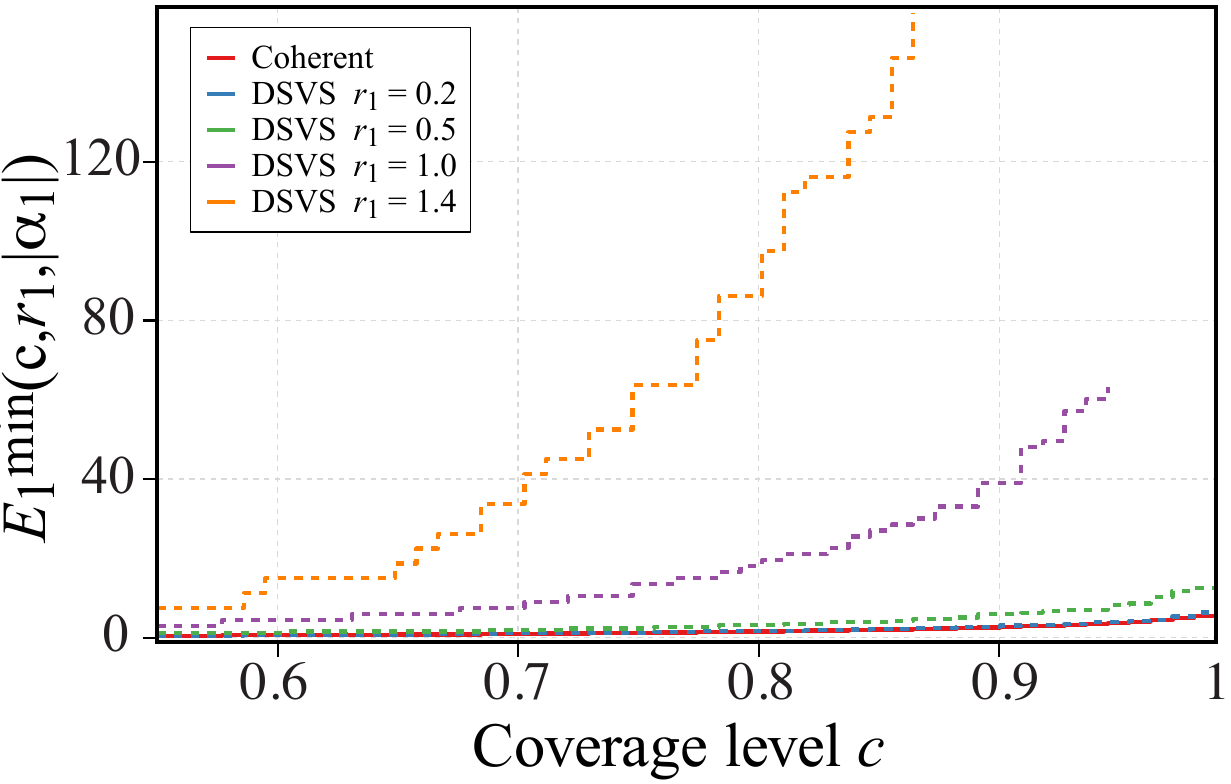}
  \caption{\label{fig:dsvs_total_stage1_energy} Total energy Stage~I induced by
    the sample-complexity threshold at fixed $|\alpha_1|=0.35$. The curves for
    displaced squeezed states show
    $E_{1,\mathrm{tot}}^{\min}(c;r_1,|\alpha_1|)$, compared to the case of
    coherent states $E_{1,\mathrm{tot,coh}}^{\min}(c;|\alpha_1|)$. The reduction
    in the number of probes in Stage~I does not compensate for the energy
    required in the per-probe squeezing $\sinh^2 r_1$. Thus, under heterodyne
    measurements, coherent-states probes remain less costly in total energy of
    Stage~I for coarse localization over the parameter range considered.}
\end{figure}

It is important to contrast these finite-sample conclusions with the local
asymptotic comparison. For $\chi_1=\pi$ and fixed displacement amplitude
$|\alpha_1|$,
\begin{equation}
  \mathcal I_{\mathrm{het}}^{\mathrm{DSVS}}
  >
  \mathcal I_{\mathrm{het}}^{\mathrm{coh}}
  \qquad
  \text{for every } r_1>0.
\end{equation}
Under the standard regularity assumptions for maximum-likelihood estimation,
this implies a smaller local asymptotic variance per probe at fixed
$|\alpha_1|$. However, this comparison, does not fix the total probe energy as
the displaced squeezed probe also carries the energy from squeezing $\sinh^2 r_1$.
The numerical results above show, however, that this local Fisher-information advantage does
not determine the finite-energy localization cost. Stage~I requires coverage of
a window of width $\pi/2$, not only estimation in an infinitesimal neighborhood
of the true phase. The finite-sample thresholds therefore depend on the
coverage event, the tail behavior of the Stage~I error distribution, and the
energy stored in squeezing. Within the heterodyne Stage~I scheme analyzed here,
displaced squeezing can reduce the number of probes needed for coarse
localization, but it does not reduce the total energy in Stage~I required over the
parameter range considered.

This distinction is relevant for the complete two-Stage protocol. A smaller
sample complexity  in Stage~I does not imply a smaller final error, because the
total energy budget is shared between coarse localization in Stage~I and local
phase estimation in Stage~II: $E_{\mathrm{tot}}=E_1+E_2.$
The following section studies the family of optimized two-Stage strategies, where
the two stages are jointly optimized under the total energy constraint.

\section{Two-stage design under a fixed total energy}
\label{sec:two_stage_fixed_budget}

In the preceding Sections \ref{sec:coherent_init} and \ref{sec:dsvs_init}, the
coarse, global localization stage (Stage~I) was analyzed in isolation for
different quantum probes. Those sections identified regimes in which displaced
squeezed probes improve the Stage~I coverage properties, but also showed that
this improvement does not necessarily translate into a lower energy cost for
Stage~I alone. For the full optimization of the two-Stage strategies, however,
we need to consider the total energy constraint $ E_1+E_2\le E$, including the
energy $E_1$ used for coarse localization in Stage~I and the energy $E_2$ for
local efficient estimation in Stage~II.

In these optimized strategies, Stage~I and Stage~II are jointly optimized to
minimize the final error bound, which has two contributions. One from the local
measurement in Stage~II, conditioned on successful coarse localization in
Stage~I. Another one from the overshoot penalty associated with incorrect choice
of localization window Stage~I within the circle. For a fixed total energy $E$,
such that $0<E_1<E$ and a prescribed number $N_2$ of squeezed-vacuum probes in
Stage~II, the lower error bound with coherent states in Stage~I is given by
\begin{equation}
  \mathcal B_{\mathrm{coh}}(E_1;E,N_2)
  \coloneq
  \frac{C_{\mathrm{coh}}(E_1)}
  {\mathcal H_2(E-E_1;N_2)}
  +
  G_{\mathrm{coh}}(E_1),
%  \qquad 0<E_1<E.
  \label{eq:coherent_two_stage_bound_fixed_N2}
\end{equation}
Here $C_{\mathrm{coh}}(E_1)$ and $G_{\mathrm{coh}}(E_1)$ are computed from the
Rician phase distribution obtained in Proposition~\ref{prop:rician_phase_law}.
For the protocol with DSVS probes in Stage~I, the corresponding objective
function is
\begin{multline}
  \mathcal{B}_{\mathrm{DSVS}}(N_1,|\alpha_1|,r_1;E,N_2)
  \coloneq
  \frac{
    C_{\mathrm{DSVS}}(N_1,|\alpha_1|,r_1)
  }{
    \mathcal{H}_2(E-E_1;N_2)
  }
  \\
  +
  G_{\mathrm{DSVS}}(N_1,|\alpha_1|,r_1).
  \label{eq:dsvs_two_stage_bound_fixed_N2}
\end{multline}
where $E_1=N_1(|\alpha_1|^2+\sinh^2 r_1)$ and $E_1<E$. The quantities
$C_{\mathrm{DSVS}}$ and $G_{\mathrm{DSVS}}$ are the coverage and overshoot terms
for the MLE in Stage~I . For $N_1=1$, these quantities are computed from the
integral representation in
Proposition~\ref{prop:dsvs_error_integral_representation}. For $N_1>1$, they are
evaluated from the full $N_1$-sample likelihood by Monte Carlo simulations.

We model Stage~II as an adaptive homodyne protocol using $N_2$ squeezed-vacuum
probes with squeezing strength $r_2$. If the energy allocated to Stage~II is
$E_2=E-E_1$, then $E_2=N_2\sinh^2 r_2,$ with
$r_2=\operatorname{arsinh}\sqrt{\frac{E_2}{N_2}}.$ The total QFI of the Stage~II
product family is therefore
\begin{equation}
\begin{split}
  \mathcal F_{Q,2}^{\mathrm{tot}}(E_2;N_2)
  &=
  8N_2\sinh^2 r_2\left(1+\sinh^2 r_2\right)  \\
  &=
  8\left(\frac{E_2^2}{N_2}+E_2\right)
  =
  8E_2\left(1+\frac{E_2}{N_2}\right).
\end{split}
\label{eq:stage2_qfi_fixed_N2}
\end{equation}
Thus, in Eqs.~\eqref{eq:coherent_two_stage_bound_fixed_N2} and
\eqref{eq:dsvs_two_stage_bound_fixed_N2}, the Stage~II QFI is evaluated at
$E_2=E-E_1$ for the prescribed value of $N_2$. This quantity should be
distinguished from the local benchmark obtained by optimizing over $N_2$.

For fixed $E_2$, the expression in Eq.~\eqref{eq:stage2_qfi_fixed_N2} is
maximized, over positive integers $N_2$, at $N_2=1$. This formal maximizer
corresponds to concentrating all the Stage~II energy into a single
squeezed-vacuum probe. It is useful as a local reference, but it does not
represent the multi-probe adaptive homodyne implementation considered here, and
it may require squeezing beyond what is experimentally accessible. Under the
fixed-energy constraint, increasing $N_2$ distributes the same energy among more
probes and therefore lowers the squeezing per probe. Thus, larger $N_2$ does
not improve the Fisher-information part of the lower bound by itself. Instead,
$N_2$ controls the tradeoff between having more adaptive refinement steps and
having more squeezing in each Stage~II probe.

The same expression also identifies the relevant scaling regimes. Suppose that
$E_2(E)=\Theta(E)$. If $N_2$ is fixed, then $\mathcal
F_{Q,2}^{\mathrm{tot}}(E_2(E);N_2)=\Theta(E^2),$ so the Stage~II contribution
can display Heisenberg scaling asymptotically. However, the onset of the
quadratic regime requires $E_2/N_2\gg 1$. When $E_2/N_2\ll 1$, the linear term
in Eq.~\eqref{eq:stage2_qfi_fixed_N2} dominates, and the Stage~II contribution
is shot-noise-like over the available energy range. More generally, if
$N_2(E)=\Theta(E^\gamma)$ with $0<\gamma<1$, then $\mathcal
F_{Q,2}^{\mathrm{tot}}(E_2(E);N_2(E)) = \Theta(E^{2-\gamma}),$ which gives an
intermediate scaling between the shot-noise and Heisenberg regimes.

\subsection{Numerical optimization}

We realize numerical studies of optimized two-Stage protocols for different
values of $N_2$. Specifically, we consider $N_2=1$, $N_2=5$, and $N_2=100$. The
first case is the formal local reference, the second is a finite-$N_2$
refinement that enters the quadratic regime at moderate energies, and the third
case illustrates a large fixed-$N_2$ available for implementing an efficient
(asymptotically optimal) phase measurement in Stage~II that remains
pre-asymptotic over the range of energies investigated.

For the two-stage protocol with coherent states in Stage~I, the optimization
reduces to a one-dimensional problem. For each pair $(E,N_2)$, we minimize
Eq.~\eqref{eq:coherent_two_stage_bound_fixed_N2} over $0<E_1<E$. In contrast,
for the two-stage protocol with DSVS as probes in Stage~I, one could minimize
Eq.~\eqref{eq:dsvs_two_stage_bound_fixed_N2} directly over
$(N_1,|\alpha_1|,r_1)$ under the constraint $N_1\left(|\alpha_1|^2+\sinh^2
  r_1\right)<E.$ For the latter case, numerical optimization becomes more
efficient by reparameterization of the energy constraint. To this end, we fix
$E$ and $N_1$. Then, the per-probe energy in Stage~I must satisfy
$|\alpha_1|^2+\sinh^2 r_1<\frac{E}{N_1}.$ Then, after choosing a value of $r_1$,
the largest admissible displacement amplitude becomes $\left(E/N_1-\sinh^2
  r_1\right)^{1/2}$. We therefore write $|\alpha_1| =
t\left(\frac{E}{N_1}-\sinh^2 r_1\right)^{1/2},$ with $0<t\le 1,$ and $0\le
r_1\le \operatorname{arsinh}\sqrt{\frac{E}{N_1}}.$ The variable $t$ specifies
the fraction of the largest admissible displacement amplitude used for the
chosen value of $r_1$. This change of variables turns the constrained search
over $(|\alpha_1|,r_1)$ into a bounded search over $(r_1,t)$, while still
representing any displaced squeezed probe in Stage~I with nonzero displacement
and energy below the total energy.

For each fixed $N_2$, we enumerate $N_1$ over the finite set $\{1,\ldots,40\}$.
The continuous minimization over $(r_1,t)$ is then performed using generalized
simulated annealing, followed by a bounded local refinement
\cite{TsallisStariolo1996,XiangEtAl2013}. The cutoff $N_1\le 40$ is a numerical
truncation of the discrete search space. In the reported runs, the selected
values of $N_1$ do not occur at the endpoint of this set, so the truncation is
not active in the explored range.

Figure~\ref{fig:two_stage_scaling_comparison} summarizes the results of the
optimized bounds for the two-stage strategies normalized to the ideal local
reference $\mathcal B_{\mathrm{loc}}(E)=\frac{1}{8E(E+1)}$. This reference
corresponds to assigning the total energy $E$ to a single squeezed-vacuum state
for a local estimation problem. We note that this bound does not correspond to a
full-period phase estimation problem, because it does not include localization
of Stage~I. The normalized ratio in
Figure~\ref{fig:two_stage_scaling_comparison}(a) therefore measures the
multiplicative cost of implementing a two-Stage full-period procedure relative
to this ideal local limit. Values close to one indicate that the optimized
protocol is close to the local squeezed vacuum benchmark, whereas larger values
reflect the combined cost of global localization, finite-sample effects, and
distributing the energy of Stage~II over several probes.

\begin{figure}[t]
    \centering
  \begin{minipage}{0.49\textwidth}
    \centering
    \includegraphics[width=\textwidth]{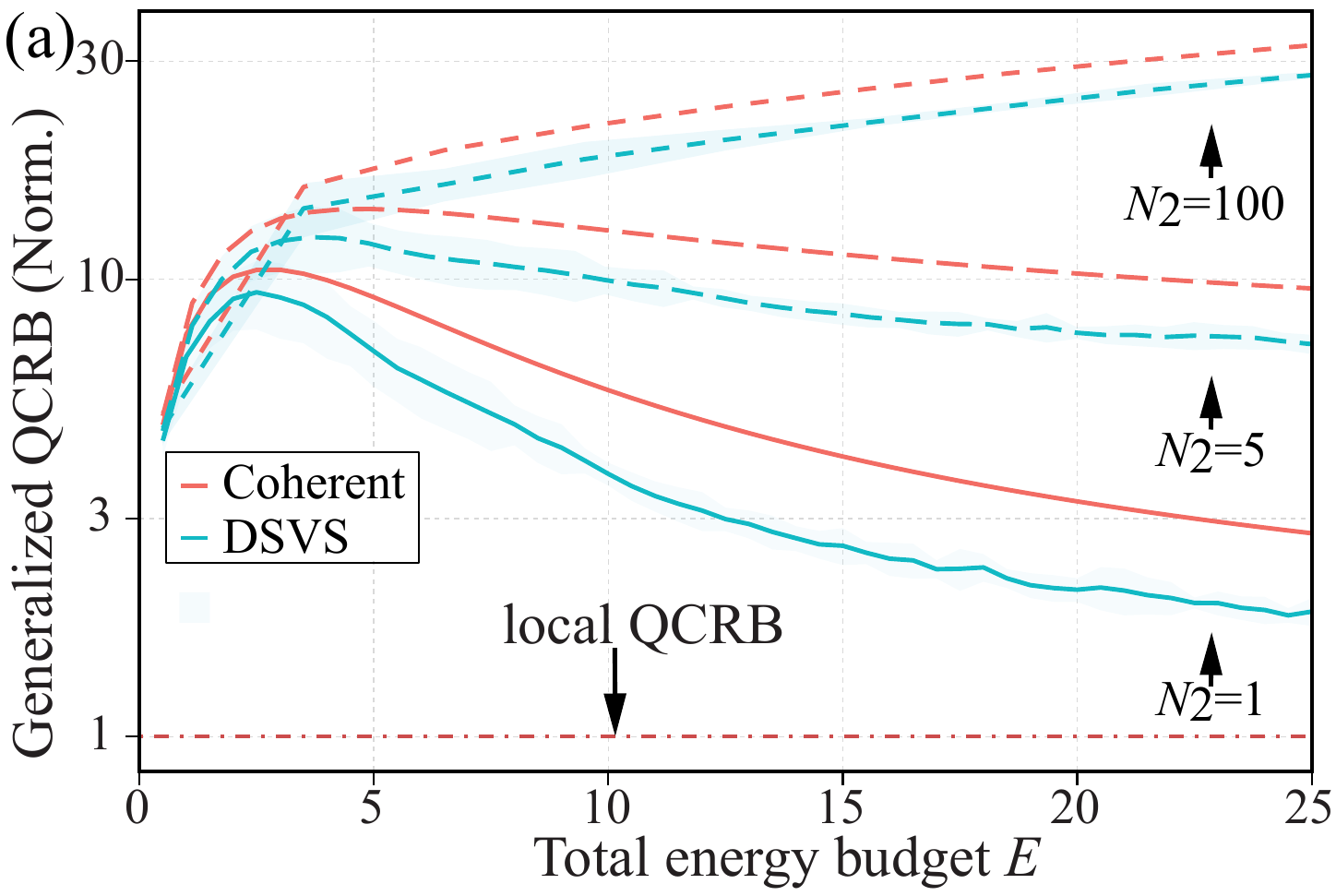}
    \vspace{-0.5em}

  \end{minipage}
  \hfill
  \begin{minipage}{0.49\textwidth}
    \centering
    \includegraphics[width=\textwidth]{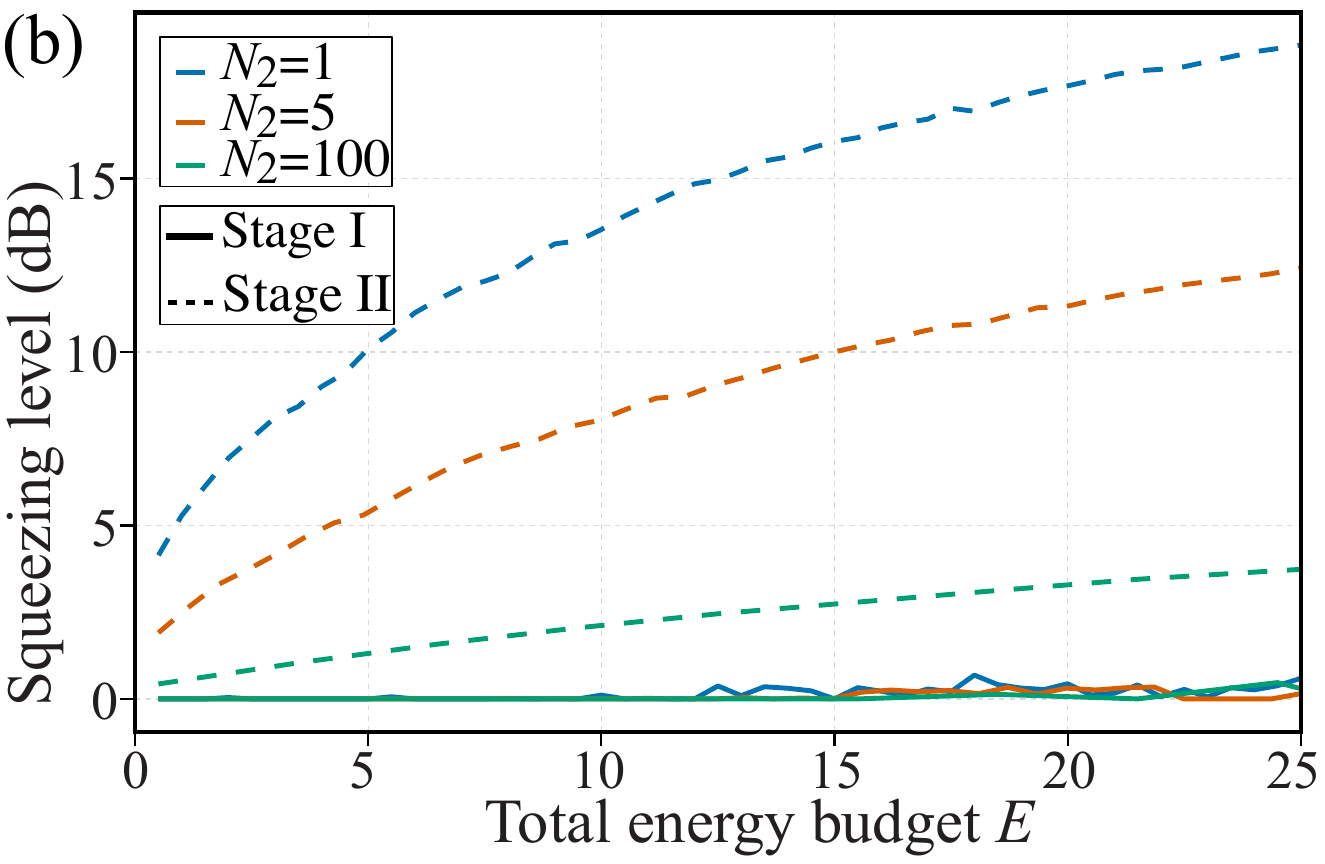}
    \vspace{-0.5em}

  \end{minipage}
  \caption{\label{fig:two_stage_scaling_comparison} Performance of two-Stage
    optimized strategy for full-period phase estimation under the total energy
    constraint. \textbf{(a)} Optimized generalized bound normalized to the ideal
    local reference $\mathcal B_{\mathrm{loc}}(E)=1/[8E(E+1)]$, which is
    obtained by assigning the total energy to a single squeezed-vacuum state for
    a local estimation problem. This quantum benchmark does not correspond to a
    full-period phase estimation problem. Different curves compare coherent and
    displaced squeezed probes in Stage~I for $N_2=1$, $N_2=5$, and $N_2=100$.
    Values above one give the multiplicative gap from the ideal local reference.
    \textbf{(b)} Required squeezing (in dBs) in Stage~I and Stage~II for the
    optimized strategies in \textbf{(a)}. The plotted values correspond to
    $20r/\log 10$, where $r$ is the squeezing parameter. Note that probe
    squeezing in Stage~I is smaller than in Stage~II, while the squeezing per
    probe in Stage~II decreases as $N_2$ increases.}
\end{figure}

Panels~(a) and (b) in Figure~\ref{fig:two_stage_scaling_comparison} are best
interpreted together. Panel~(a) compares the optimized bounds, while panel~(b)
shows the squeezing levels required by the corresponding optimized designs.
Thus, an experimental squeezing ceiling restricts which curves in panel~(a) are
physically accessible. For example, a maximum available squeezing of
$s_{\max}=10\,\mathrm{dB}$ corresponds to $r_{\max}=(\log 10/20)s_{\max}\simeq
1.15$, or to a per-probe squeezing energy $\sinh^2 r_{\max}\simeq 2.03$. Any
optimized design whose squeezing levels in Stage~I or Stage~II exceeds this
value would either be inaccessible under that constraint or would require
reoptimization with $r_1,r_2\le r_{\max}$.

This comparison also explains the role of $N_2$. Taking $N_2=1$ concentrates the
energy of Stage~II in a single squeezed-vacuum probe and gives the closest
approach to the local reference in panel~(a), but it may require a large
squeezing level per probe. Increasing $N_2$ distributes the energy of Stage~II
over more probes and lowers the squeezing required per probe, which can make the
protocol more compatible with a fixed squeezing ceiling. The cost is that, when
$N_2$ is too large relative to the available energy, the quadratic term in
$\mathcal H_2(E_2;N_2)$ is suppressed and the normalized bound moves farther
from the ideal local reference $\mathcal B_{\mathrm{loc}}(E)$. Under a squeezing
constraint, the practical choice is therefore the smallest value of $N_2$ for
which the required squeezing of Stage~II remains below the available
experimental limit over the desired energy range.

The optimized protocols with displaced squeezed probes in the Stage~I outperform
the protocols with coherent states over the investigated energy range and for
the three values of $N_2$ considered. We note, however, that this improvement is
not due to large squeezing in Stage~I. Indeed, panel~(b) shows that the
optimized Stage~I squeezing remains small compared with the Stage~II squeezing.
In the optimized designs, squeezing in Stage~I is mainly used to improve
localization on the circle while preserving full-period identifiability, whereas
most of the squeezing resource is allocated to the local estimation problem in
Stage~II.

Finally, we observe that protocols using displaced squeezed states in Stage~I
for global localization outperform protocols with coherent states for a total
energy up to 25, and for any $N_{2}$. For $N_2=1$, both protocols are closest to
the local quantum reference, because the energy available for Stage~II is
concentrated in a single squeezed-vacuum probe. The case $N_2=5$ retains a
finite sequence of adaptive homodyne measurements in Stage~II, while still
showing the effect of the quadratic term in Eq.~\eqref{eq:stage2_qfi_fixed_N2}
for total energies from 0-25. By contrast, for $N_2=100$, the per-probe energy
in the Stage~II for efficient phase estimation remains small throughout the
energy range considered, and the normalized curves increase relative to the
local benchmark for local parameter estimation. This decrease in performance
does not indicate a failure of the optimization. It rather reflects that the
measurement in Stage~II is still in the pre-asymptotic regime in which $\mathcal
H_2(E_2;N_2)$ is dominated by its linear term.

From a practical point of view, we can conclude that the best local scaling is
obtained by concentrating energy into few highly squeezed probes, while
experimental squeezing limits may favor a larger number of less squeezed probes
despite the larger gap from the local benchmark. The corresponding optimized
values of $N_1$, $r_1$, $r_2$, and the energy fractions assigned to each stage
are reported in Appendix~\ref{app:protocol_numerical_optimization}. These values
exemplify how the optimizer implements the tradeoff between full-period
localization and local Stage~II sensitivity.

Figure~\ref{fig:appendix_decomposition_all_N2} shows the two contributions to
the the optimized bound: the local refinement term and the overshoot term. The
comparison shows that the relative contributions of these terms strongly depend
$N_2$. For $N_2=1$, both the coherent and displaced squeezed protocols benefit
from the large QFI of the Stage~II, and the local term decreases rapidly with
the total energy. In this regime, the remaining gap is increasingly controlled
by the overshoot term of the Stage~I. For $N_2=5$, the same qualitative behavior
persists, although the local term is larger because the refinement energy is
distributed over several probes. For $N_2=100$, the local term remains
comparatively large over the plotted energy range, reflecting the fact that
$E_2/N_2$ is small and the Stage~II refinement has not yet reached the quadratic
regime.

We note that total bound for displaced squeezed states in Stage~I is
consistently smaller than in the case with coherent states, in agreement with
Figure~\ref{fig:two_stage_scaling_comparison}. The improvement comes from a
combined reduction of the local and overshoot contributions. In particular, the
overshoot term becomes very small for intermediate and large energies, showing
that the displaced squeezed localization stage suppresses large Stage~I errors
more effectively.

\begin{figure*}[t]
  \centering
  \includegraphics[width=.95\textwidth]{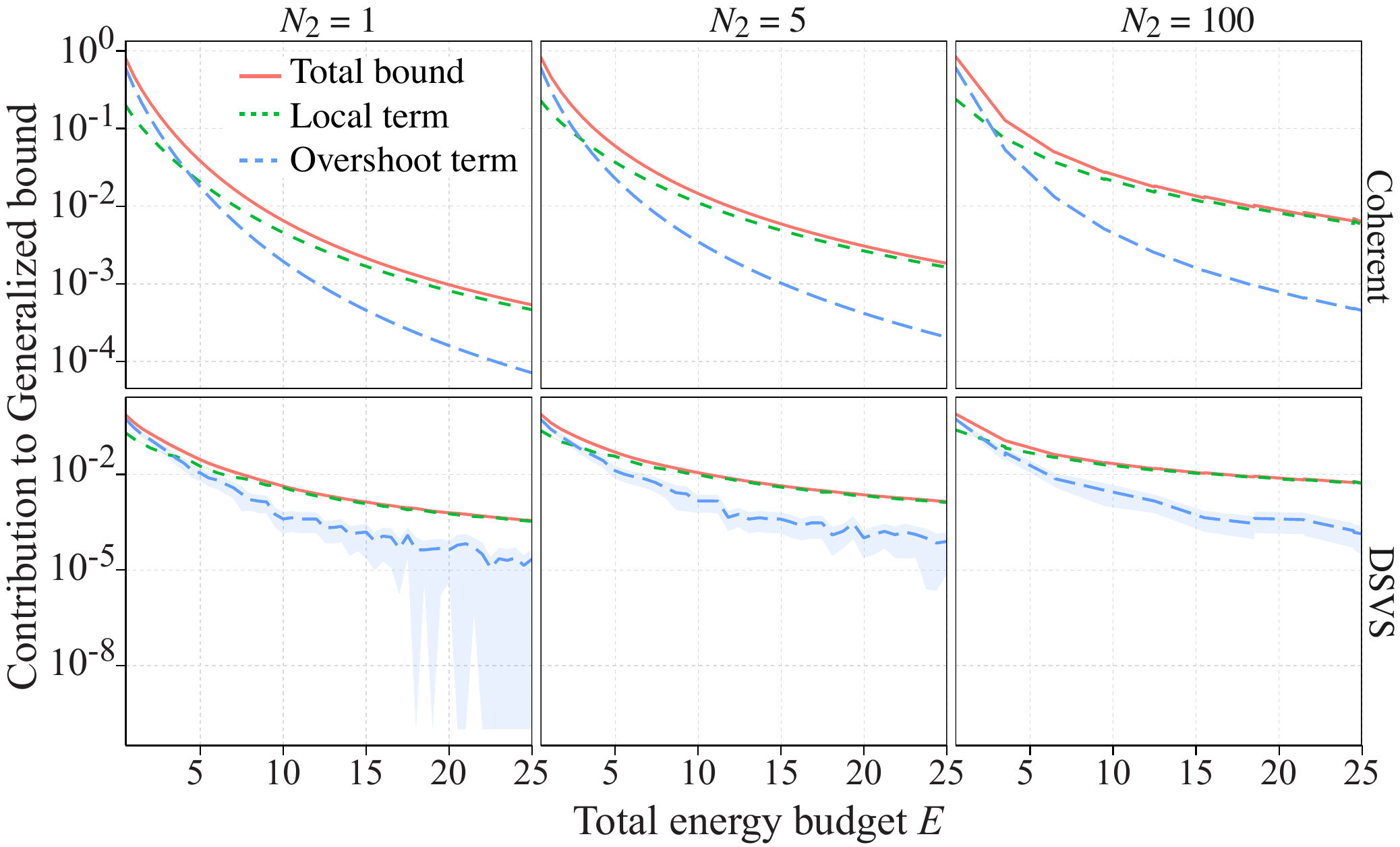}
  \caption{\label{fig:appendix_decomposition_all_N2} Decomposition of the
    optimized generalized bound for the coherent and displaced squeezed
    two-stage protocols. The columns correspond to $N_2=1$, $N_2=5$, and
    $N_2=100$, while the rows correspond to the coherent and displaced squeezed
    Stage~I probes. Each panel shows the total optimized bound, the local
    refinement term $C_1/\mathcal H_2$, and the overshoot term $G_1$ as
    functions of the total energy $E$. All quantities are shown on a logarithmic
    scale. The shaded regions in the displaced squeezed panels indicate Monte
    Carlo uncertainty in the numerical evaluation of the Stage~I coverage and
    overshoot terms. These fluctuations are visible mainly in the overshoot
    term, where rare localization failures dominate the estimate.}
\end{figure*}

\section{Summary and outlook}
\label{sec:summary_outlook}

We propose and analyzed two-stage Gaussian strategies for full-period optical
phase estimation under a fixed energy budget. Stage~I uses displaced squeezed
probes and heterodyne measurement to localize the phase on $[0,2\pi)$ to an
interval of length $\pi/2$. Stage~II uses squeezed-vacuum probes and adaptive
homodyne measurements to perform locally efficient estimation inside the
selected interval. The protocol keeps the family of probe states fixed within
each Stage, while separating the global task of phase localization from the
local task of efficient estimation.

The main theoretical result is a generalized Cramér-Rao bound for two-Stage
estimators constrained to the interval selected in Stage~I. The bound separates
the local Stage~II contribution from an overshoot penalty determined by Stage~I
localization failures. This term has no counterpart in purely local Cramér-Rao
theory and accounts for finite-energy events in which the selected interval does
not contain the true phase, producing an error of order one on the circle.

We computed the Stage~I coverage and overshoot terms for coherent probes and for
displaced squeezed probes under heterodyne measurements. In the case of coherent
states, the localization error follows a Rician phase distribution. In the
displaced squeezed case, the anisotropic heterodyne likelihood produces a
radius-dependent angular correction and an integral formula for the one-probe
error density. These formulas allow the two choices of probes in Stage~I to be
compared within the same two-Stage bound and under the same total energy
constraint.

Our numerical studies show that modest squeezing in Stage~I can improve the
final two-stage bound relative to coherent-state probes over the energy range
considered. In particular, we observed that Stage~I squeezing improves
correct-interval selection and suppresses the overshoot contribution, while most
of the squeezing resource remains allocated to the local problem in Stage~II.
Therefore, the relevant design criterion is the joint allocation of energy,
samples, and squeezing across both stages.

Several directions remain open. On the theoretical side, it would be useful to
compare the generalized Cramér-Rao bound with global bounds, such as
Bayesian-risk bounds \cite{Gill1995, Gill2013} or Ziv-Zakai-type inequalities
\cite{Ziv1969,Tsang2012}, and to derive non-asymptotic concentration bounds for
displaced squeezed heterodyne localization. A natural extension of this work is
to apply the same decomposition to periodic frequency-estimation protocols,
where longer interrogation times or entangled probes can improve local scaling
while reducing the unambiguous parameter range and introducing phase-slip errors
\cite{ZhengDoldeKolkowitz2023,PhysRevResearch.6.043230,Shaw2024}. In such
settings, an overshoot-type term could separate the local Fisher-information
contribution from the global cost of selecting the wrong branch of the periodic
parameter.

On the practical side, the framework presented here should be extended to
include optical loss, detector inefficiency, imperfect squeezing, and phase
drift, since these effects may change the optimal balance between coarse
localization and local estimation. In the finite-energy regime studied here,
optimized two-Stage Gaussian protocols remain within a factor of $3$-$30$ of the
idealized local squeezed-vacuum QCRB for $E\le 25$ photons and squeezing levels
up to $12\,\mathrm{dB}$, while displaced squeezing in Stage~I improves the bound
relative to coherent localization. This quantifies the main tradeoff: the local
advantage of squeezed probes becomes useful for full-period estimation only when
window coverage, overshoot suppression, and local sensitivity are optimized
together.

\section{Acknowledgments}
This work was supported by funding from the National Research Council of Canada,
the National Science Foundation (NSF) Grant PHY-2116246, the Department of
Energy (DOE) and Oak Ridge National Laboratory (ORNL) Subcontract CW42943.
L.M.D. acknowledge partial support from project DGAPA UNAM IN 117925.

%\newpage

\appendix
%\appendixsectionformat
\section{Proof for the Generalized two-stage Cramér-Rao bound}
\label{app:proof_main_theorem}

In this appendix we give the proof of Theorem~\ref{thm:generalized_crb}.

\begin{proof}
  The proof splits the risk according to whether the Stage~I window contains
  $\theta_0$. On the coverage event, the problem is expressed in the lifted
  coordinate interval and the conditional Cramér-Rao inequality bounds the
  Stage~II contribution. On the complementary event, the estimator is
  constrained to a window that excludes the true phase, so
  Lemma~\ref{lem:overshoot_lb} bounds the error by the overshoot distance. These
  two bounds give the local Stage~II term and the Stage~I overshoot penalty.

We first split the mean-square error according to the Stage~I coverage set:
\begin{equation}
\begin{split}
  \mathbb E_{\theta_0}\!\left[
    d\left(\hat\theta,\theta_0\right)^2
  \right]
  &=
  \mathbb E_{\theta_0}\!\left[
    d\left(\hat\theta,\theta_0\right)^2
    \mathbf 1_{\mathcal C_{\theta_0}}(Z)
  \right]  \\
  &\quad+
  \mathbb E_{\theta_0}\!\left[
    d\left(\hat\theta,\theta_0\right)^2
    \mathbf 1_{\mathcal C_{\theta_0}^c}(Z)
  \right].
\end{split}
\label{eq:proof_risk_split}
\end{equation}

We first bound the contribution from the complement of the coverage event. By
Lemma~\ref{lem:overshoot_lb}, for every realization of $(Z,X)$,
\begin{equation}
  d\left(\hat\theta(Z,X),\theta_0\right)^2
  \ge
  \Delta_{\theta_0}(Z)^2.
\end{equation}
Moreover, $\Delta_{\theta_0}(Z)=0$ whenever $Z\in\mathcal C_{\theta_0}$, since
in that case the selected window contains the true phase. Hence
\begin{equation}
  \Delta_{\theta_0}(Z)^2
  =
  \Delta_{\theta_0}(Z)^2
  \mathbf 1_{\mathcal C_{\theta_0}^c}(Z),
\end{equation}
and therefore
\begin{equation}
\begin{split}
  \mathbb E_{\theta_0}\!\left[
    d\left(\hat\theta,\theta_0\right)^2
    \mathbf 1_{\mathcal C_{\theta_0}^c}(Z)
  \right]
  &\ge
  \mathbb E_{\theta_0}\!\left[
    \Delta_{\theta_0}(Z)^2
    \mathbf 1_{\mathcal C_{\theta_0}^c}(Z)
  \right]  \\
  &=
  \mathbb E_{\theta_0}\!\left[
    \Delta_{\theta_0}(Z)^2
  \right].
\end{split}
\label{eq:proof_overshoot_contribution}
\end{equation}

It remains to bound the contribution from the coverage event. Fix a Stage~I
outcome $z$ such that $z\in\mathcal C_{\theta_0}$, equivalently $\theta_0\in
W(z)$. Since the final estimator is constrained to the selected window,
$\hat\theta(z,X)\in W(z)$, it has a unique lift $\widetilde{\hat\theta}_z(X)\in
I_z$. By the construction of the lifted coordinate interval,
$d\left(\hat\theta(z,X),\theta_0\right) = \left|
  \widetilde{\hat\theta}_z(X)-\tilde\theta_0(z) \right|.$ For this fixed $z$,
the adaptive Stage~II policy induces the conditional model
$p_{2,z}(x;\vartheta)$ on the interval $I_z$. By the local regularity
assumptions on the coverage event, Assumptions~\ref{ass:local_crb}, this model
is regular in a neighborhood of $\tilde\theta_0(z)$ and the lifted estimator is
locally unbiased at $\tilde\theta_0(z)$. The conditional Cramér-Rao inequality
therefore gives
\begin{widetext}
\begin{equation}
  \mathbb{E}_{\theta_0}\!\left[
    d\left(\hat\theta(z,X),\theta_0\right)^2
    \Bigm| Z=z
  \right]
  =
  \mathbb{E}_{\theta_0}\!\left[
    \left(
      \widetilde{\hat\theta}_z(X)-\tilde\theta_0(z)
    \right)^2
    \Bigm| Z=z
  \right] \ge
  \frac{1}{\mathcal{I}_2(\theta_0;z)}.
\label{eq:proof_conditional_crb}
\end{equation}
\end{widetext}
This inequality is used only for Stage~I outcomes in $\mathcal C_{\theta_0}$.
Equivalently,
\begin{equation}
  \mathbf 1_{\mathcal C_{\theta_0}}(z)
  \mathbb E_{\theta_0}\!\left[
    d\left(\hat\theta(z,X),\theta_0\right)^2
    \Bigm| Z=z
  \right]
  \ge
  \frac{
    \mathbf 1_{\mathcal C_{\theta_0}}(z)
  }{
    \mathcal I_2(\theta_0;z)
  }.
  \label{eq:proof_conditional_crb_indicator}
\end{equation}

Integrating Eq.~\eqref{eq:proof_conditional_crb_indicator} with respect to the
marginal distribution of $Z$ under $\theta_0$ and using the tower property of
conditional expectation yields
\begin{equation}
\begin{split}
  \mathbb E_{\theta_0}\!\left[
    d\left(\hat\theta,\theta_0\right)^2
    \mathbf 1_{\mathcal C_{\theta_0}}(Z)
  \right]
  &=
  \mathbb E_{\theta_0}\!\left[
    \mathbf 1_{\mathcal C_{\theta_0}}(Z)
    \mathbb E_{\theta_0}\!\left[
      d\left(\hat\theta,\theta_0\right)^2
      \Bigm| Z
    \right]
  \right]  \\
  &\ge
  \mathbb E_{\theta_0}\!\left[
    \frac{
      \mathbf 1_{\mathcal C_{\theta_0}}(Z)
    }{
      \mathcal I_2(\theta_0;Z)
    }
  \right].
\end{split}
\label{eq:proof_local_contribution}
\end{equation}

Combining the risk decomposition in Eq.~\eqref{eq:proof_risk_split} with the
bounds in Eqs.~\eqref{eq:proof_overshoot_contribution} and
\eqref{eq:proof_local_contribution} gives
\begin{equation}
  \mathrm{MSE}\left(\hat\theta;\theta_0\right)
  \ge
  \mathbb E_{\theta_0}\!\left[
    \frac{
      \mathbf 1_{\mathcal C_{\theta_0}}(Z)
    }{
      \mathcal I_2(\theta_0;Z)
    }
  \right]
  +
  \mathbb E_{\theta_0}\!\left[
    \Delta_{\theta_0}(Z)^2
  \right],
\end{equation}
which is Eq.~\eqref{eq:generalized_crb_main}.
\end{proof}

\section{Proof of the coherent Stage~I Rician phase distribution}
\label{app:proof_rician_phase_law}

In this appendix we give the full derivation of
Proposition~\ref{prop:rician_phase_law}.

\begin{proof}[Proof of Proposition~\ref{prop:rician_phase_law}]
  The proof has three steps. First, we reduce the $N_1$ heterodyne outcomes to
  the sufficient statistic $B_{N_1}$ for $\theta$ and rotate it so that its mean
  is real and positive. Second, we express the rotated complex Gaussian variable
  in polar coordinates, so that the signed wrapped error is its angular
  coordinate. Third, we integrate out the radial coordinate to obtain the
  marginal density of the angle.

  % By Eq.~\eqref{eq:equivalent_single_shot_coherent},

  Firs, we note that the sufficient statistic $B_{N_1} =
  \frac{1}{\sqrt{N_1}}\sum_{k=1}^{N_1}\beta_k$ for $\theta$ based on the
  heterodyne measurement outcomes is complex normal with distribution
\begin{equation}
  B_{N_1}
  \sim
  \mathcal{CN}\!\left(
    \sqrt{E_{\mathrm{coh}}}\,e^{i(-\theta+\phi_1)},1
  \right).
  \label{eq:proof_rician_B_distribution}
\end{equation}
Then, the MLE of Stage~I is $\hat\theta_1 = \Arg(B_{N_1})-\phi_1,$ understood
modulo $2\pi$. So, the signed wrapped Stage~I error $\varepsilon_1\in(-\pi,\pi]$
is
\begin{equation}
  \varepsilon_1
  =
  \Arg\!\left(e^{i(\hat\theta_1-\theta)}\right)
  =
  \Arg\!\left(e^{-i(-\theta+\phi_1)}B_{N_1}\right),
  \label{eq:proof_rician_signed_error}
\end{equation}
and the corresponding circular distance is $d\left(\hat\theta_1,\theta
\right)=|\varepsilon_1|.$

Now, we observe that the rotated statistic $B_{N_1}^{\theta,\phi_1} =
e^{-i(\theta+\phi_1)}B_{N_1}$ follows a complex normal distribution with
parameters $\left(\sqrt{E_{\mathrm{coh}}},1\right)$, that is,
\begin{equation}
  \widetilde B_{N_1}^{\theta,\phi_1}
  \sim
  \mathcal{CN}\!\left(\sqrt{E_{\mathrm{coh}}},1\right).
  \label{eq:proof_rician_rotated_statistic}
\end{equation}
%In particular, the distribution of $\varepsilon_1$ is independent of $\theta$
%and $\phi_1$.
Then, the signed wrapped error is
\begin{equation}
  \varepsilon_1
  =
  \Arg\!\left(e^{i(\hat\theta_1-\theta)}\right)
  =
  -\Arg\!\left(\widetilde B_{N_1}^{\theta,\phi_1}\right)
  \quad \text{mod }2\pi .
\end{equation}
Since $\widetilde B_{N_1}^{\theta,\phi_1}$ has a real positive mean and circular
Gaussian noise, its angular distribution is symmetric about zero. Therefore,
$\varepsilon_1$ has the same density as $\Arg(\widetilde
B_{N_1}^{\theta,\phi_1})$.

We now write $B_{N_1}^{\theta,\phi_1}=U+iV$. As the complex normal variance is
one, the real and imaginary parts are independent, with $U\sim\mathcal
N(\sqrt{E_{\mathrm{coh}}},1/2)$ and $V\sim\mathcal N(0,1/2)$. Therefore, the
joint density of $(U,V)$ with respect to Lebesgue measure on $\mathbb R^2$ is
\begin{equation}
  f_{U,V}(u,v)
  =
  \frac{1}{\pi}
  \exp\!\left(
    -\left[(u-\sqrt{E_{\mathrm{coh}}})^2+v^2\right]
  \right).
  \label{eq:proof_rician_joint_cartesian}
\end{equation}

Since $\mathbb P(B_{N_1}^{\theta,\phi_1}=0)=0$, we may use polar coordinates on
$\mathbb R^2\setminus\{(0,0)\}$. Write $u=r\cos\varepsilon,$
$v=r\sin\varepsilon,$ with $r\in(0,\infty),$ and $\varepsilon\in(-\pi,\pi].$ The
Jacobian determinant is $\left| \frac{\partial(u,v)}{\partial(r,\varepsilon)}
\right| = r.$ Hence, the joint density of $(R,\varepsilon_1)$ with respect to
$dr\,d\varepsilon$ is
\begin{align}
  f_{R,\varepsilon}(r,\varepsilon)
  &=
  f_{U,V}(r\cos\varepsilon,r\sin\varepsilon)\,r
  \notag\\
  &=
  \frac{r}{\pi}
  \exp\!\left(
    -\left[
      (r\cos\varepsilon-\sqrt{E_{\mathrm{coh}}})^2
      +
      r^2\sin^2\varepsilon
    \right]
  \right).
  \label{eq:proof_rician_joint_polar_1}
\end{align}
Expanding the exponent gives
\begin{equation}
  (r\cos\varepsilon-\sqrt{E_{\mathrm{coh}}})^2
  +
  r^2\sin^2\varepsilon
  =
  r^2
  -
  2r\sqrt{E_{\mathrm{coh}}}\cos\varepsilon
  +
  E_{\mathrm{coh}}.
\end{equation}
Therefore,
\begin{equation}
  f_{R,\varepsilon}(r,\varepsilon)
  =
  \frac{r}{\pi}
  \exp\!\left(
    -r^2
    +
    2r\sqrt{E_{\mathrm{coh}}}\cos\varepsilon
    -
    E_{\mathrm{coh}}
  \right).
  \label{eq:proof_rician_joint_polar_2}
\end{equation}

Integrating out $r$ gives the marginal density of $\varepsilon_1$ with respect
to Lebesgue measure on $(-\pi,\pi]$:
\begin{align}
  f_{\varepsilon}(\varepsilon;E_{\mathrm{coh}})
  &=
  \int_0^\infty
  f_{R,\varepsilon}(r,\varepsilon)\,dr
  \notag\\
  &=
  \frac{e^{-E_{\mathrm{coh}}}}{\pi}
  \int_0^\infty
  r\,
  e^{-r^2+2(\sqrt{E_{\mathrm{coh}}}\cos\varepsilon)r}\,dr.
  \label{eq:proof_rician_marginal_start}
\end{align}
Let $a=\sqrt{E_{\mathrm{coh}}}\cos\varepsilon.$ Then $-r^2+2ar = a^2-(r-a)^2,$
and therefore,
\begin{align}
  \int_0^\infty r\,e^{-r^2+2ar}\,dr
  &=
  e^{a^2}\int_0^\infty r\,e^{-(r-a)^2}\,dr
  \notag\\
  &=
  e^{a^2}\int_{-a}^{\infty}(t+a)e^{-t^2}\,dt
  \notag\\
  &=
  \frac12
  +
  \frac{a\sqrt{\pi}}{2}e^{a^2}\left(1+\erf(a)\right).
  \label{eq:proof_rician_radial_integral}
\end{align}
Substituting this expression into Eq.~\eqref{eq:proof_rician_marginal_start},
and using $-E_{\mathrm{coh}}+a^2=-E_{\mathrm{coh}}\sin^2\varepsilon$, we obtain
% \begin{equation}
%   f_{\varepsilon}(\varepsilon;E_{\mathrm{coh}})
%   =
%   \frac{e^{-E_{\mathrm{coh}}}}{2\pi}
%   +
%   \frac{\sqrt{E_{\mathrm{coh}}}\cos\varepsilon}{2\sqrt{\pi}}\,
%   e^{-E_{\mathrm{coh}}\sin^2\varepsilon}
%   \Bigl(
%     1+\erf(\sqrt{E_{\mathrm{coh}}}\cos\varepsilon)
%   \Bigr).
%   \label{eq:proof_rician_density_final}
% \end{equation}
\begin{multline}
  f_{\varepsilon}(\varepsilon;E_{\mathrm{coh}})
  =
  \frac{e^{-E_{\mathrm{coh}}}}{2\pi}
  +
  \frac{\sqrt{E_{\mathrm{coh}}}\cos\varepsilon}{2\sqrt{\pi}}\,
  e^{-E_{\mathrm{coh}}\sin^2\varepsilon}
  \\
  \times
  \left(
    1+\erf(\sqrt{E_{\mathrm{coh}}}\cos\varepsilon)
  \right).
  \label{eq:proof_rician_density_final}
\end{multline}
This is Eq.~\eqref{eq:rician_phase_density_coherent}.

% Finally, using $1+\erf(x)=2\Phi(\sqrt{2}\,x),$ we obtain the equivalent
% representation
% \begin{equation}
%   f_{\varepsilon}(\varepsilon;E_{\mathrm{coh}})
%   =
%   \frac{e^{-E_{\mathrm{coh}}}}{2\pi}
%   +
%   \frac{\sqrt{E_{\mathrm{coh}}}\cos\varepsilon}{\sqrt{\pi}}\,
%   e^{-E_{\mathrm{coh}}\sin^2\varepsilon}
%   \Phi\!\left(
%     \sqrt{2E_{\mathrm{coh}}}\cos\varepsilon
%   \right),
%   \label{eq:proof_rician_density_phi}
% \end{equation}
% which is Eq.~\eqref{eq:rician_phase_density_coherent_phi}. This completes the
% proof.
\end{proof}

\section{Details for the one-probe displaced squeezed error distribution}
\label{app:dsvs_error_density}

Throughout this appendix, we write $[x]_{2\pi}\coloneqq
\Arg(e^{ix})\in(-\pi,\pi]$ for the representative of an angle on the principal
branch.

\begin{proof}[Proof of Proposition~\ref{prop:dsvs_error_integral_representation}]

  The proof consists of conditioning on the observed radius $S_Y=s$. For each
  fixed radius, the maximum-likelihood correction shifts the angular coordinate
  $\Gamma$ by $\eta_\ast(s)$ on the circle. As circular translations preserve
  Lebesgue measure, the conditional density of the signed error is obtained by a
  shift of the conditional density of $\Gamma$. Integrating over the radius then
  gives the desired marginal density.

  By construction, $Y=S_Yv(\Gamma),$ $S_Y\ge 0,$ and $\Gamma\in(-\pi,\pi]$.
  Moreover, the joint density of $(S_Y,\Gamma)$ with respect to $ds\,d\gamma$ is
  $f_{S_Y,\Gamma}$ from Eq.~\eqref{eq:dsvs_joint_polar_density}. For each fixed
  $s\ge 0$, the MLE selects the angular correction $\eta_\ast(s)$ maximizing
  $\Lambda_s(\eta)$. Hence, the signed wrapped error is $\varepsilon_1 =
  [\Gamma-\eta_\ast(S_Y)]_{2\pi}.$

  Fix $s$ outside a null set for the conditional distribution of $\Gamma$ given
  $S_Y=s$. The map $\gamma\longmapsto [\gamma-\eta_\ast(s)]_{2\pi}$ is a
  translation on the circle and therefore preserves Lebesgue measure on
  $(-\pi,\pi]$. Consequently, the conditional density of $\varepsilon_1$ given
  $S_Y=s$ is
\begin{equation}
  f_{\varepsilon\mid S_Y}(\varepsilon\mid s)
  =
  f_{\Gamma\mid S_Y}
  \!\left(
    [\varepsilon+\eta_\ast(s)]_{2\pi}\mid s
  \right),
  \label{eq:app_dsvs_conditional_error_density}
\end{equation}
for $\varepsilon\in(-\pi,\pi]$.

Multiplying Eq.~\eqref{eq:app_dsvs_conditional_error_density} by the marginal
density of $S_Y$ and integrating over $s$ gives
\begin{align}
  f_{\varepsilon}^{\mathrm{DSVS}}(\varepsilon)
  &=
  \int_0^\infty
  f_{\varepsilon\mid S_Y}(\varepsilon\mid s) f_{S_Y}(s)\,ds
  \notag\\
  &=
  \int_0^\infty
  f_{\Gamma\mid S_Y}
  \!\left(
    [\varepsilon+\eta_\ast(s)]_{2\pi}\mid s
  \right)
  f_{S_Y}(s)\,ds
  \notag\\
  &=
  \int_0^\infty
  f_{S_Y,\Gamma}
  \!\left(
    s,
    [\varepsilon+\eta_\ast(s)]_{2\pi}
  \right)\,ds .
  \label{eq:app_dsvs_error_density}
\end{align}
This proves Eq.~\eqref{eq:dsvs_error_density_integral}.
\end{proof}

\vspace{-2em}

\section{Coherent limit of the displaced squeezed error density}
\label{app:dsvs_coherent_limit}

We show that Proposition~\ref{prop:dsvs_error_integral_representation} reduces
to the Rician phase distribution when $r_1=0$. First we note that by
substituting into Eq.~\eqref{eq:app_dsvs_error_density} the expression for
$f_{S_Y,\Gamma}$ from Eq.~\eqref{eq:dsvs_joint_polar_density} gives
\begin{widetext}
\begin{equation}
  f_{\varepsilon}^{\mathrm{DSVS}}(\varepsilon)
  =
  \int_0^\infty
  \frac{s}{2\pi\sqrt{\det\Sigma_{\delta_1}}}
  \exp\!\left[
    -\frac{1}{2}
    \left(
      s\,v(\varepsilon+\eta_\ast(s))-\mu_1 e_1
    \right)^\top
    \Sigma_{\delta_1}^{-1}
    \left(
      s\,v(\varepsilon+\eta_\ast(s))-\mu_1 e_1
    \right)
  \right] ds.
  \label{eq:dsvs_error_density_integral_explicit}
\end{equation}
\end{widetext}
% \begin{multline}
%   f_{\varepsilon}^{\mathrm{DSVS}}(\varepsilon)
%   =
%   \int_0^\infty
%   \frac{s}{2\pi\sqrt{\det\Sigma_{\delta_1}}}
%   \exp\!\left[
%     -\frac{1}{2}
%     \left(
%       s\,v(\varepsilon+\eta_\ast(s))-\mu_1 e_1
%     \right)^\top
%     \Sigma_{\delta_1}^{-1}
%   \right.\\
%   \left.
%     \times
%     \left(
%       s\,v(\varepsilon+\eta_\ast(s))-\mu_1 e_1
%     \right)
%   \right] ds.
%   \label{eq:dsvs_error_density_integral_explicit}
% \end{multline}
Since $v(\gamma)$ is $2\pi$-periodic, the explicit expression may be written
with $v(\varepsilon+\eta_\ast(s))$ without changing the value of the integrand.

Then, in the limit of $r \to 0$, Eq.~\eqref{eq:dsvs_heterodyne_eigenvalues}
gives $\lambda_{1,-}=\lambda_{1,+}=1$, and hence $\Sigma_{\delta_1}=\mathbb
I_2$. The restricted log-likelihood in Eq.~\eqref{eq:dsvs_radial_profile}
becomes
\begin{equation}
  \Lambda_s(\eta)
  =
  -\frac12
  \bigl\|s\,v(\eta)-\mu_1e_1\bigr\|^2 =
  -\frac12
  \left(
    s^2+\mu_1^2-2s\mu_1\cos\eta
  \right).
  \label{eq:app_coherent_limit_profile}
\end{equation}
For $s>0$ and $\mu_1>0$, this is maximized at $\eta=0$ modulo $2\pi$. At
$s=0$, the value of $\eta$ is irrelevant, so we choose the same branch
$\eta_\ast(0)=0$. Thus, $\eta_\ast(s)\equiv 0$.

Equation~\eqref{eq:dsvs_error_density_integral_explicit} then gives
\begin{equation}
  f_{\varepsilon}^{\mathrm{DSVS}}(\varepsilon)
  =
  \int_0^\infty
  \frac{s}{2\pi}
  \exp\!\left[
    -\frac12
    \bigl\|s\,v(\varepsilon)-\mu_1e_1\bigr\|^2
  \right]ds.
  \label{eq:app_coherent_limit_integral_1}
\end{equation}
As $\bigl\|s\,v(\varepsilon)-\mu_1e_1\bigr\|^2 =
s^2+\mu_1^2-2s\mu_1\cos\varepsilon,$ we obtain
\begin{equation}
  f_{\varepsilon}^{\mathrm{DSVS}}(\varepsilon)
  =
  \int_0^\infty
  \frac{s}{2\pi}
  \exp\!\left[
    -\frac12
    \left(
      s^2+\mu_1^2-2s\mu_1\cos\varepsilon
    \right)
  \right]ds.
  \label{eq:app_coherent_limit_integral_2}
\end{equation}
Set $u=s/\sqrt{2}$. As $\mu_1=\sqrt{2}|\alpha_1|$, this becomes
% \begin{equation}
%   f_{\varepsilon}^{\mathrm{DSVS}}(\varepsilon)
%   =
%   \frac{e^{-|\alpha_1|^2}}{\pi}
%   \int_0^\infty
%   u
%   \exp\!\left[
%     -u^2+2u|\alpha_1|\cos\varepsilon
%   \right]du.
%   \label{eq:app_coherent_limit_integral_3}
% \end{equation}
\begin{equation}
\begin{split}
  f_{\varepsilon}^{\mathrm{DSVS}}(\varepsilon)
  &=
   \frac{e^{-|\alpha_1|^2}}{\pi}
  \int_0^\infty
  u
  \exp\!\left[
    -u^2+2u|\alpha_1|\cos\varepsilon
  \right]du \\
  &=
  \frac{e^{-|\alpha_1|^2}}{2\pi}
  +
  \frac{|\alpha_1|\cos\varepsilon}{2\sqrt{\pi}}\,
  e^{-|\alpha_1|^2\sin^2\varepsilon}  \\
  &\quad\times
  \left(
    1+\erf\!\left(|\alpha_1|\cos\varepsilon\right)
  \right),
\end{split}
\label{eq:app_coherent_limit_density}
\end{equation}
which agrees with Eq.~\eqref{eq:rician_phase_density_coherent} for
$E_{\mathrm{coh}}=|\alpha_1|^2$.

% This is the radial integral obtained in the proof of
% Proposition~\ref{prop:rician_phase_law}, with
% $E_{\mathrm{coh}}=|\alpha_1|^2$ in the one-probe case. Therefore,
% \begin{equation}
% \begin{split}
%   f_{\varepsilon}^{\mathrm{DSVS}}(\varepsilon)
%   &=
%    \frac{e^{-|\alpha_1|^2}}{\pi}
%   \int_0^\infty
%   u
%   \exp\!\left[
%     -u^2+2u|\alpha_1|\cos\varepsilon
%   \right]du \\
%   &=
%   \frac{e^{-|\alpha_1|^2}}{2\pi}
%   +
%   \frac{|\alpha_1|\cos\varepsilon}{2\sqrt{\pi}}\,
%   e^{-|\alpha_1|^2\sin^2\varepsilon}  \\
%   &\quad\times
%   \left(
%     1+\erf\!\left(|\alpha_1|\cos\varepsilon\right)
%   \right).
% \end{split}
% \label{eq:app_coherent_limit_density}
% \end{equation}
% which agrees with Eq.~\eqref{eq:rician_phase_density_coherent} for
% $E_{\mathrm{coh}}=|\alpha_1|^2$.

\vspace{-1em}

\section{Details of the numerical optimization}
\label{app:protocol_numerical_optimization}

This appendix reports the protocol parameters selected by the fixed-budget
optimization in Section~\ref{sec:two_stage_fixed_budget}. In
Figure~\ref{fig:appendix_protocol_design_all_N2} we show for each value of
$N_2$, the selected Stage~I sample size $N_1^\ast$, the Stage~I squeezing
$r_1^\ast$, the Stage~II squeezing $r_2^\ast$, and the energy fractions $E_1/E$
and $E_2/E$. These quantities are computed from the minimizers of
Eqs.~\eqref{eq:coherent_two_stage_bound_fixed_N2} and
\eqref{eq:dsvs_two_stage_bound_fixed_N2}. %; they are not additional optimization
% objectives.
Figure~\ref{fig:appendix_protocol_design_all_N2} complements the normalized
error bounds in Figure~\ref{fig:two_stage_scaling_comparison} by showing how the
optimized protocol allocates samples, squeezing, and energy between the two
stages.

\begin{figure*}[ht]
  \centering
  \includegraphics[width=.95\textwidth]{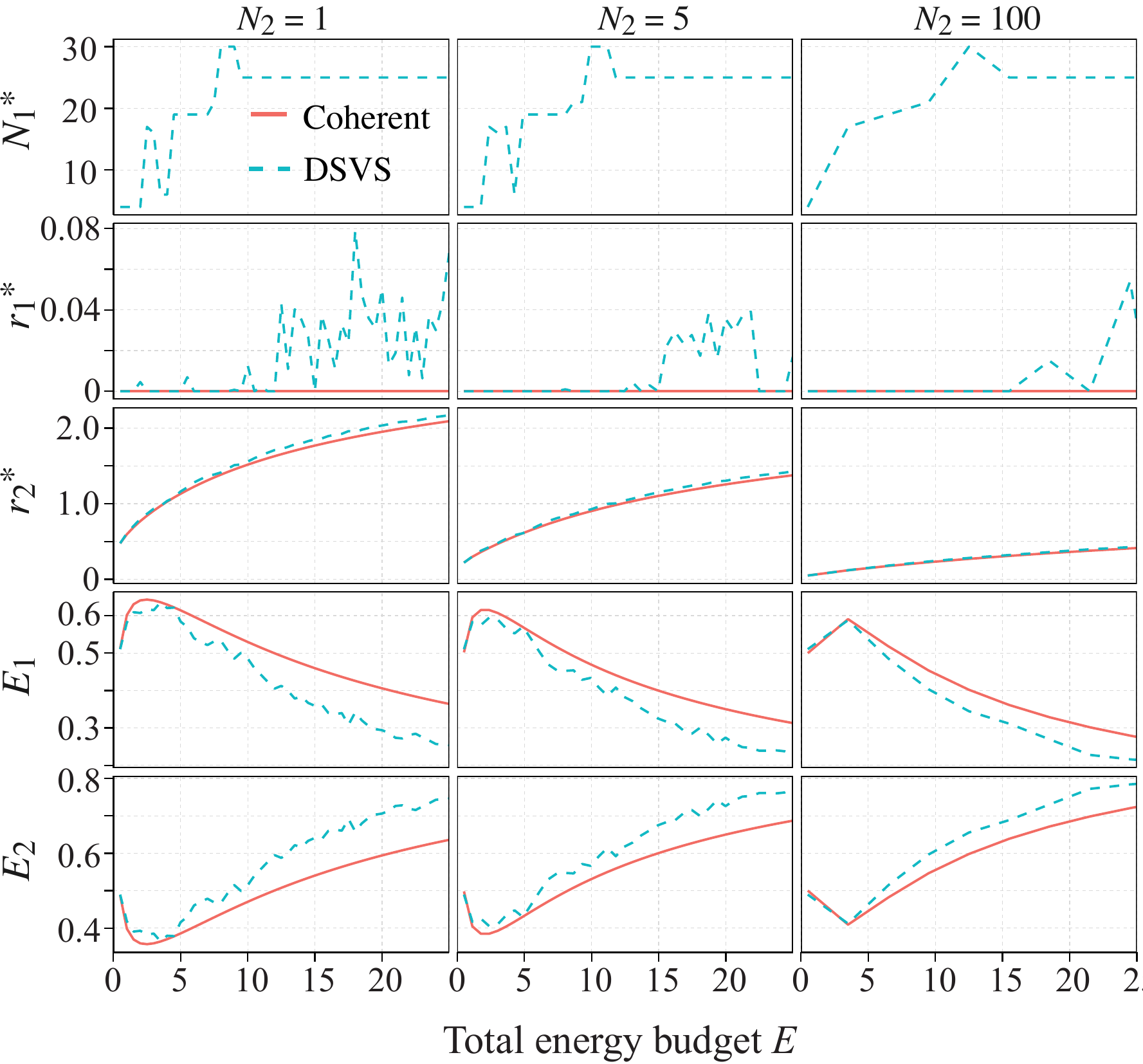}
  \caption{\label{fig:appendix_protocol_design_all_N2} Protocol parameters
    selected by the fixed-budget optimization. The columns correspond to
    $N_2=1$, $N_2=5$, and $N_2=100$. The rows show the selected Stage~I sample
    size $N_1^\ast$, the Stage~I squeezing $r_1^\ast$, the Stage~II squeezing
    $r_2^\ast$, the Stage~I energy fraction $E_1/E$, and the Stage~II energy
    fraction $E_2/E$. For the coherent protocol, $r_1^\ast=0$ by construction.
    The DSVS curves show that the optimized displaced squeezed protocol uses
    only weak Stage~I squeezing over the plotted range, whereas the Stage~II
    squeezing depends strongly on the prescribed value of $N_2$.}
\end{figure*}

%%%%%%%%%%%%%%
% References %
%%%%%%%%%%%%%%

% \nocite{*}
\clearpage
\bibliographystyle{unsrtnat}
\bibliography{References}
\end{document}